\documentclass[12pt]{article}
\usepackage{fullpage}
\usepackage{amssymb}
\usepackage{amsmath}
\usepackage{amsthm}

\usepackage{nicefrac}
\usepackage{url}
\usepackage{color}
\usepackage[dvipsnames]{xcolor}
\usepackage[normalem]{ulem}

\usepackage{graphicx}
\usepackage{subcaption}
\usepackage[textsize=scriptsize,disable]{todonotes}
\usepackage{xspace}
\usepackage{microtype}  
\usepackage{hyperref}
\usepackage[noabbrev,capitalise]{cleveref}  
\usepackage{xifthen}

\newtheorem{theorem}{Theorem}
\newtheorem{corollary}{Corollary}
\newtheorem{definition}{Definition}

\newtheorem{claim}{Claim}
\newtheorem{observation}{Observation}

\newtheorem{lemma}{Lemma}

\crefname{enumi}{Step}{Steps}


\newcommand{\rpw}{\ensuremath{{\mathit rpw}}}

\newcommand{\Sd}{\ensuremath{{\mathit H\!S}}}
\newcommand{\pw}{\ensuremath{{\mathit pw}}}
\newcommand{\W}{{\text{\rm W}\xspace}}
\newcommand{\E}{\text{\rm E}\xspace}
\newcommand{\N}{\text{\rm N}\xspace}
\renewcommand{\S}{\text{\rm S}\xspace}
\newcommand{\LLL}{\text{\rm WWW}\xspace}
\newcommand{\LUL}{\text{\rm WNW}\xspace}
\newcommand{\LDL}{\text{\rm WSW}\xspace}
\newcommand{\LLR}{\text{\rm WWE}\xspace}
\newcommand{\RLL}{\text{\rm EWW}\xspace}
\newcommand{\RLR}{\text{\rm EWE}\xspace}

\newcommand{\calI}{{\ensuremath{\cal I}}}
\newcommand{\calC}{{\ensuremath{\cal C}}}
\newcommand{\calT}{{\ensuremath{\cal T}}}



\newcommand{\leaveout}[1]{}
\newcommand{\student}[1]{}
\newcommand{\postdoc}[1]{}

\newcommand{\therese}[2][]{\ifthenelse{\isempty{#1}}
{\todo[color=Green]{#2}}
{\todo[#1,color=Green]{#2}}}
\newcommand{\milap}[2][]{\ifthenelse{\isempty{#1}}
{\todo[color=Cyan]{#2}}
{\todo[#1,color=Cyan]{#2}}}

\renewcommand{\medskip}{\smallskip}
\renewcommand{\int}{{\ensuremath{\rm int\,}}}

\date{}
\title{Drawing Halin-graphs with small height}
\author{Therese Biedl%
\thanks{David R.~Cheriton School of Computer
Science, University of Waterloo, Waterloo, Ontario N2L 1A2, Canada.
TB supported by NSERC.  {\tt \{biedl,m2sheth\}@uwaterloo.ca}.  }
\and
Milap Sheth\addtocounter{footnote}{-1}\footnotemark
}

\begin{document}

\maketitle
\begin{abstract}
In this paper, we study how to draw Halin-graphs, i.e., planar graphs
that consist of a tree $T$ and a cycle among the leaves of that tree.
Based on tree-drawing algorithms and the pathwidth $ \pw(T) $, a well-known graph parameter,
we find poly-line drawings of height at most $6\pw(T)+3\in O(\log n)$.
We also give an algorithm for straight-line drawings, and
achieve height at most $12\pw(T)+1$ for Halin-graphs, and
smaller if the Halin-graph is cubic.
We show that the height achieved by our algorithms is optimal in the worst case (i.e. for some Halin-graphs).
\end{abstract}

\section{Introduction}

It is well-known that every planar graph has a planar straight-line drawing
in an $O(n)\times O(n)$-grid \cite{FPP90,Sch90} and that an $\Omega(n)\times \Omega(n)$-grid
is required for some planar graphs \cite{FPP88} (definitions will be given in
the following section).  But for some subclasses of planar graphs, planar
straight-line drawings of smaller area can be found.  In particular, for any
tree one can easily create a straight-line drawing of area $O(n\log n)$ \cite{CDP92};
the area can be improved to $n2^{O(\sqrt{\log\log n \log\log \log n})}$ \cite{Chan18}
and $O(n)$ if the maximum degree is $O(n^{1-\varepsilon})$ \cite{GGT96}.
Outerplanar graphs can be drawn with area $O(n^{1.48})$ \cite{DBF09}
and with area $O(n\log n)$ if the maximum degree is bounded
\cite{Frati12} or a constant number of bends are allowed in edges \cite{Bie-DCG11}.
There are also some sub-quadratic area results for series-parallel graphs \cite{Bie-DCG11}, though they require bends in the edges.

These existing results suggest that bounding the so-called {\em treewidth} of a graph may
be helpful for obtaining better area bounds.  In particular, trees have treewidth 1, and
outer-planar and series-parallel graphs have treewidth 2.  However, one can observe that the
lower-bound graph from \cite{FPP88} can be modified to have treewidth 3, so we cannot hope
to achieve subquadratic area for all planar graphs of constant treewidth.  However, there
are some subclasses of planar graphs that have treewidth 3 and a special structure that may
make them amenable to be drawn with smaller area.  This is the topic of the current paper.

Halin-graphs were originally introduced by Halin \cite{Halin71} during his study
of graphs that are planar and 3-connected and minimal with this property.  He showed that
any such graph consists of a tree without vertices of degree 2
where a cycle has been added among the leaves of the tree.  These graphs have attracted
further interest in the literature, see for example \cite{SP83,SS90,FT06,FL16,Epp16}.
It is folklore that they can be recognized in linear time since they are planar graphs
and have treewidth 3, but a direct and simpler approach for this was recently given
by Eppstein \cite{Epp16}.

In this paper, we study how to create planar drawings of a Halin-graph that have small area.
To our knowledge, no such algorithms have been given before, and the
best previous result is to apply a general-purpose planar graph drawing algorithm that
achieves area $O(n^2)$.  In contrast to this, we exploit here that a Halin-graph consists
of a tree $T$ with a cycle $C$ among its leaves, and give two results.  The first one states that
for any drawing of $T$, we can ``fiddle in'' the cycle $C$ at a cost of increasing the height
by a factor of 3.  However, the resulting drawing has bends.  For our second result, we
take inspiration from one particular tree-drawing algorithm by Garg and Rusu \cite{GR03} to
create an algorithm that achieves straight-line drawings of area $O(n\log n)$.
In fact, the height of our drawings, which is $O(\log n)$ in the worst case, can be bounded more
tightly by $O(\pw(T))$, where the {\em pathwidth} $\pw(T)$ is a well-known graph-parameter.
It is known that the pathwidth is a lower bound on the height of any planar graph drawing \cite{FLW03}
and that the pathwidth of a Halin-graph is within a constant factor of the pathwidth of the tree $T$
\cite{FT06}.  Therefore our algorithm gives a $O(1)$-approximation algorithm on the height
of planar straight-line drawings of Halin-graphs if we ignore small constant terms.
Similarly as
was done for trees by Suderman \cite{Sud04} and Biedl and Batzill \cite{BB-JGAA}, we can also
argue that the constant in front of ``$\pw(T)$'' cannot be improved for some Halin-graphs.

Our paper is structured as follows.  After reviewing the necessary background in
Section~\ref{sec:definitions}, we briefly argue in Section~\ref{sec:transform}
how to use any tree-drawing
algorithm to create (poly-line) drawings of Halin-graphs of asymptotically
the same height.  Section~\ref{sec:straight} gives the algorithm for
straight-line drawings of small height, while Section~\ref{sec:lower} defines
a class of Halin-graphs that have small pathwidth, yet require a large height
in any (straight-line or poly-line) planar drawing.  We conclude in Section~\ref{sec:conclusion}.

\section{Background and notations}
\label{sec:definitions}

We assume familiarity with graphs and basic graph-theoretic terms, see for example \cite{Die12}.
Throughout this paper, we use $n$ for the number of vertices in a given graph $G=(V,E)$.
A {\em tree} is a connected graph without cycles.  A {\em leaf} of a tree is a vertex
of degree 1.  A {\em rooted tree} is a tree together
with one specified vertex (the {\em root}); this defines for any edge of the tree
the parent-child relationship with the {\em parent} being the endpoint that is closer
to the root.  In a rooted tree, the term {\em leaf} is used only for those vertices
that have no children, i.e., the root is not considered a leaf unless $n=1$.

Fix a rooted tree $T$. For any vertex $v\in T$, we use $T_v$ to denote
the {\em subtree of $T$ rooted at $V$}, i.e., vertex $v$ and all its
descendants.
We assume throughout that trees are {\em ordered}, i.e., come with a
fixed cyclic order of neighbours around each vertex.  In a rooted tree,
this hence gives a left-to-right order of its children (starting in counter-clockwise
direction after the parent).  The {\em leftmost} leaf $\ell_L$ of $T$ is
the one reached by starting at the root and repeatedly taking the
leftmost child until we reach a leaf.  Define the {\em rightmost leaf}
$\ell_R$ symmetrically.  Note that $\ell_L=\ell_R$ if $T$ is a {\em rooted path},
i.e., it is a path with the root as one of its endpoints.    If $T$ consists
of only one vertex (the root $r$), then $\ell_L=r=\ell_R$, but otherwise
$\ell_L\neq r\neq \ell_R$.

\paragraph{Halin-graphs and skirted graphs:}
Let $T$ be an (unrooted, ordered) tree without vertices of degree 2.
To avoid trivialities, we assume that $T$ has at least three leaves.
Let $H$ be the graph obtained by connecting the leaves of $T$
in cyclical order; this is the {\em Halin-graph} formed by $T$ (and sometimes
denoted $H(T)$). Tree $T$ is called the {\em skeleton}
of Halin-graph $H$, and the edges of the cycle are called {\em cycle-edges}.
See Figure~\ref{fig:HalinExample}.

Observe that any Halin-graph is {\em planar}, i.e.,
can be drawn without crossing in the plane.
The condition `no vertex has degree 2' is not crucial for our drawing
algorithm (though it was crucial in the original study of Halin-graphs
as minimal 3-connected planar graphs \cite{Halin71}).  As in \cite{FT06},
we use the term {\em extended Halin-graph} for a graph $H(T)$ obtained by
taking an arbitrary  tree $T$ and connecting its leaves in a cycle in order,
while a {\em regular Halin-graph} refers to a Halin-graph as above, i.e.,
the skeleton has no vertices of degree 2.

\begin{figure}[ht]
  \centering
\begin{subfigure}[b]{0.4\linewidth}
\includegraphics[width=0.99\linewidth,page=1,trim=0 0 0 0,clip]{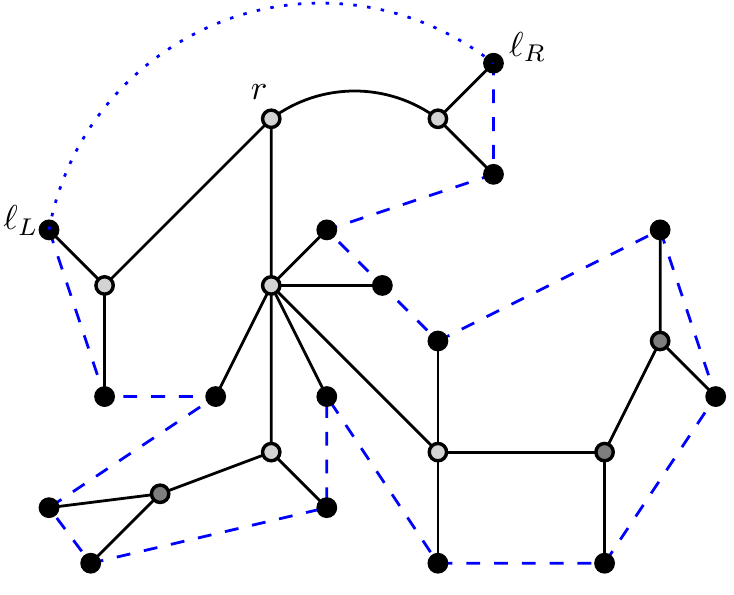}
\caption{}
\end{subfigure}
\hspace*{\fill}
\begin{subfigure}[b]{0.45\linewidth}
\includegraphics[width=0.99\linewidth,page=7,trim=0 0 0 0,clip]{transform3.pdf}
\caption{}
\label{fig:transform_final}
\end{subfigure}
\caption{(a) A regular Halin-graph. Cycle-edges are blue dashed/dotted, skeleton $T$ is black/gray, and the skirted graph $H^-(T)$
would omit the dotted edge if $T$ were rooted at $r$.  The inner skeleton is gray, the leaf-reduced inner skeleton
is light gray.  (b) A poly-line drawing obtained with the transformation in
Section~\ref{sec:transform}.}
\label{fig:HalinExample}
\end{figure}

Our drawing algorithms will be based on rooted, rather than unrooted, trees,
and therefore exploit subgraphs of Halin-graphs formed by rooted trees.
Let $T$ be an (ordered) tree that has been rooted at vertex $r$.
Let $H$ be the graph obtained by connecting the leaves of $T$ in order from
left to right in a path; this is the {\em skirted graph} \cite{SS90}
formed by $T$ (and sometimes denoted $H^-(T)$).
Graph $H^-(T)$ is a subgraph of $H(T)$; it is missing either the
edge $(\ell_L,\ell_R)$ or (if the root $r$ has degree 1) the path $\langle \ell_L,r,\ell_R\rangle$.

\paragraph{Pathwidth and rooted pathwidth: }
The {\em pathwidth} of a graph $G$ is defined as follows.  A {\em path
decomposition} is an ordered sequence $X_1,\dots,X_\xi$ of vertex-sets ({\em bags})
such that any vertex belongs to a non-empty subsequence of bags, and for any edge at
least one bag contains both endpoints.  The {\em width} of such a path
decomposition is $\max_i \{|X_i|-1\}$, and the pathwidth $\pw(G)$ is the
minimum width of a path decomposition of $G$.  A graph consisting of a
singleton vertex hence has pathwidth 0.

We will in this paper almost only be
concerned with the pathwidth of trees; here an equivalent and simpler definition
is known.  For a path $P$ in a tree $T$, let $\calT(T,P)$ denote the connected
components of the graph obtained by removing the vertices of $P$.  Suderman
\cite{Sud04} showed that for any tree $T$ we have
\[
  \pw(T) :=  \begin{cases}
    0 & \text{if $T$ is a single vertex,} \\
    \min_{P} \max_{T'\in \calT(T,P)}\{1+\rpw(T')\} & \text{otherwise,}
  \end{cases}
\]
where the minimum is taken over all paths $P$ in $T$.
%
%
Our constructions will use a rooted tree $T$, and therefore consider width-parameters
for rooted trees that are illustrated in Figure~\ref{fig:RpwExample}.
Define as in \cite{OPTII} the {\em rooted pathwidth} $\rpw(T)$ as follows:
\[
  \rpw(T) :=  \begin{cases}
    1 & \text{if $T$ is a rooted path,} \\
    \min_{P_r} \max_{T'\in \calT(T,P_r)}\{1+\rpw(T')\} & \text{otherwise,}
  \end{cases}
\]
where the minimum is over all rooted paths $P_r$ of $T$.  (The recursive
formula differs from the one for pathwidth only in that the path must end at
the root; hence the name.)
One can show that any tree $T$ can be rooted at
a leaf such that we have $\pw(T)\leq \rpw(T)\leq 2\pw(T)+1$ \cite{OPTII}.
We call a path $P_r$ that can be used to obtain the minimum a {\em spine}.

\begin{figure}[ht]
  \centering
\includegraphics[width=0.4\linewidth,page=2,trim=0 0 0 10,clip]{transform3.pdf}
\caption{The skeleton-tree $T$ of Figure~\ref{fig:HalinExample}
has $ \Sd(T) = \rpw(T) = 3 $. Numbers indicate the Horton-Strahler number; thick paths
(solid red for the whole tree, dashed blue for the subtrees) are possible spines.}
\label{fig:RpwExample}
\end{figure}

The rooted pathwidth was actually used much earlier for the classification of the order
of rivers and streams \cite{Horton45, Strahler52} and became known as the Horton-Strahler number:
\[
  \Sd(T) :=  \begin{cases}
    1 & \text{if $T$ is a single vertex,} \\
    \min_{c} \max_{v} \{ \Sd(T_v)+\chi(v\neq c)\} & \text{otherwise,}
  \end{cases}
\]
where the minimum is over all children $c$ of the root $r$, the maximum $v$ is over
all children of the root, and $\chi$ denotes the characteristic function.
One can show \cite{OPTII}
that the Horton-Strahler number and the rooted pathwidth are identical.
We use the term {\em spine-child} for a child $c$ where the minimum is achieved;
this is the same as a child that maximizes the Horton-Strahler number among the children.
(One can show that it belongs to a spine of $T$.)

\paragraph{Graph drawing:}  A {\em poly-line} is a polygonal curve, i.e., a
curve that is the union of finitely many line segments; the transition between
two such segments is called a {\em bend}.
A {\em planar poly-line drawing} $\Gamma$ of a graph $G$
consists of assigning a point to each vertex and an (open) poly-line to each edge
such that all points and poly-lines are disjoint, and the poly-line of an edge ends at the
points of the endpoints of the edge.
The drawing is called {\em $y$-monotone} if all
poly-lines  of edges are $y$-monotone and {\em straight-line} if all poly-lines of edges are straight-line segments.

We assume throughout that identifying features (i.e., points of vertices and bends in poly-lines of edges)
have integral $y$-coordinates.  The {\em layers} of a drawing are the horizontal
lines with integral $y$-coordinate that intersect the drawing; we usually enumerate
them from top to bottom as $1,2,\dots,h$.  The number $h$ of layers is called the
{\em height} of the drawing (notice that this is one unit more than the height
of the minimum enclosing box).    Minimizing the height of drawings is the main
objective in this paper.  When constructing drawings, it will sometimes be
expedient to use integral $x$-coordinates as well; we then use the term {\em column}
for a vertical line of integral $x$-coordinate that intersects the drawing and
enumerate columns from left to right.

We usually identify the graph-theoretic
object (vertex, edge) with the geometric object (point, poly-line) that corresponds
to it in the drawing.
All our drawings are required to be {\em planar} (i.e., without crossing edges)
by definition.  We often require that they are {\em plane}, i.e. reflect the
given order of edges around every vertex, and (for a Halin-graph) the infinite
region is adjacent to the cycle-edges.

\section{Transforming tree drawings}
\label{sec:transform}

In this section, we show that any order-preserving tree-drawing algorithm
can be used to obtain poly-line drawings of Halin-graphs.  Put differently,
we can draw the skeleton-tree $T$, and ``fiddle in'' the cycle-edges.  As
it will turn out, we do not need to use a drawing of $T$; it suffices to
take a drawing of a suitably chosen subtree of $T$, which may make the
height bound a bit smaller and (as we will see) give a tight bound.

The following defines the subtree of $T$ that we draw; see also
Figure~\ref{fig:HalinExample}.
Let the {\em inner skeleton} of a Halin-graph
be the tree $T'$ obtained by deleting all leaves of the skeleton.
We say that $T'$ {\em leaf-extends} a tree $T''$
if $T'$ can be obtained from $T''$ by (possibly repeatedly) adding a leaf
incident to a leaf of the previous tree.  The {\em leaf-reduced inner skeleton} of
a Halin-graph $H(T)$ is the smallest subgraph of the inner skeleton $T'$ that can
be leaf-extended to $T'$.  We now have the following result:

\begin{theorem}
  \label{thm:3x-transform}
  Let $H(T)$ be an extended Halin-graph.  If its leaf-reduced inner skeleton $T''$
  has an order-preserving poly-line drawing $\Gamma''$ of height $h$,
  then $H(T)$ has a plane poly-line drawing of height $3h$.
\end{theorem}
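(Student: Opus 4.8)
The plan is to start from the given order-preserving drawing $\Gamma''$ of the leaf-reduced inner skeleton $T''$, triple its height to create slack, rebuild the full skeleton $T$ inside that slack, and finally route the cycle $C$ so that it ``hugs'' the outer boundary of the tree drawing. Concretely, first scale $\Gamma''$ vertically by a factor of $3$, sending every feature on layer $i$ to layer $3i-1$. The result is an order-preserving poly-line drawing of $T''$ that occupies only layers $2,5,\dots,3h-1$, so that every occupied layer has an empty layer directly above and directly below it, layers $1$ and $3h$ are empty, and the total height is $3h$. Horizontal coordinates do not affect the height, so we may freely insert columns whenever we need local room.

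Second, rebuild the inner skeleton $T'$. Since $T'$ leaf-extends $T''$, process the vertices of $V(T')\setminus V(T'')$ in the order in which they are added, each one pendant at a vertex $w$ that is currently a leaf of the drawn subtree. Because $w$ is a leaf, a neighbourhood of $w$ outside the current drawing is free; place the new vertex on the empty layer just above or below $w$ (or on $w$'s own layer), off to the appropriate side, and connect it by a short poly-line. A vertex of degree at most $2$ imposes no constraint on the rotation system, so order-preservation is maintained, no crossing is created, and we never leave the $3h$ layers. This yields an order-preserving drawing $\Gamma'$ of the inner skeleton $T'$ of height $3h$.

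Third, add the leaves of $T$ and the cycle $C$. Each leaf $\lambda$ of $T$ is pendant at a unique vertex $v$ of $T'$; place $\lambda$ just outside $\Gamma'$, in the gap between the two edges of $v$ between which $\lambda$ occurs in the rotation at $v$, on one of the layers $\{\,\text{layer}(v)-1,\ \text{layer}(v),\ \text{layer}(v)+1\,\}$. Now $C$ must visit these leaves in the cyclic order they inherit from $T$, which is exactly the order in which a walk around the outer boundary of $\Gamma'$ meets their attachment vertices. Route $C$ as a closed curve running parallel to this outer boundary, one step further out: whenever the boundary walk climbs up or down through a subtree of $T'$, the companion curve climbs with it, using the empty layer adjacent to each occupied layer and the empty top and bottom layers for the outermost portions. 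One then checks that the resulting drawing is plane, that $C$ bounds the outer face, and that it lies within the $3h$ layers.

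The main obstacle is this last step: proving that $C$ can be drawn along the outer boundary of $\Gamma'$ while simultaneously (i) visiting the leaves of $T$ in the correct cyclic order, (ii) crossing neither $\Gamma'$ nor itself, and (iii) never straying more than one layer beyond what $\Gamma'$ already occupies, so that $3h$ layers suffice. I would make this precise by an induction that peels subtrees off a spine of $T'$: for each subtree hanging from the spine, the portion of $C$ visiting its leaves nests inside the portion of $C$ corresponding to the spine, and a layer count shows that one slack layer around each occupied layer is exactly enough. A secondary point needing care is the placement of pendant vertices and leaves in Steps~2 and~3: one must argue that each pendant attachment is realisable in the slack layers next to its neighbour without collisions, which holds because the attachment point always lies on the outer boundary of the current drawing and horizontal shifts are free.
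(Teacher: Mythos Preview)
Your overall strategy---scale vertically by three, extend $T''$ to $T'$, attach the leaves, and thread the cycle around the outside---matches the paper's. The substantive difference is that the paper first converts the poly-line drawing of $T''$ into a \emph{flat visibility representation} of the same height (via \cite{Bie-GD14}), does all the remaining work in that orthogonal setting, and only at the end converts the resulting flat orthogonal drawing back to a poly-line drawing. This detour buys exactly the two things your direct approach has to fight for. First, in a visibility representation every leaf has a free horizontal end, so the leaf-extension from $T''$ to $T'$ proceeds purely horizontally, on the same layer, and never touches a slack layer; the paper in fact triples \emph{after} building $T'$, so every feature of $T'$ is guaranteed slack on both sides. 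In your Step~2 you triple first and then allow placing each new pendant vertex on an adjacent empty layer, but that consumes the slack you need later, and a long leaf-extension chain placed this way could walk right out of the $3h$ layers. Second---and this is the obstacle you yourself flag---once the drawing is orthogonal, segments overhang their incident edges, and the grid is tripled, the set of points at unit distance from the drawing is automatically a single simple orthogonal cycle, so routing $C$ and dropping the remaining leaves onto it becomes a one-line construction. In your poly-line setting, ``a closed curve running parallel to the outer boundary, one step further out'' is not a well-defined object: edges have arbitrary slopes, the outer-face walk of a tree traverses every edge twice, and an offset of that walk is not obviously simple or confined to the slack layers. Your proposed spine induction may well succeed, but it is a genuine piece of work that the paper's orthogonal detour sidesteps entirely.
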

\begin{proof}
Figure~\ref{fig:convert} illustrates how to find this drawing, with the
final result in Figure~\ref{fig:transform_final}.
As a first step, insert a dummy-vertex at every bend of $\Gamma''$
to get a straight-line drawing $\Gamma_d''$ of a tree $T''_d$ that is tree $T''$
with some edges subdivided.  Also subdivide the same edges in trees $T'$ and $T$
(where $T'$ is the inner skeleton of $H(T)$) to get trees $T'_d$ and $T_d$.

\begin{figure}[ht]
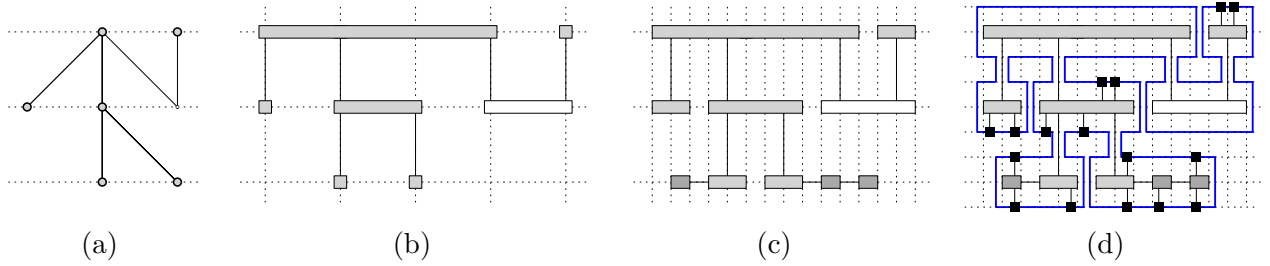

\begin{subfigure}[b]{0.15\linewidth}
\includegraphics[scale=0.59,page=3,trim=0 0 60 0,clip]{transform3.pdf}
\caption{}
\end{subfigure}
\hspace*{\fill}
\begin{subfigure}[b]{0.28\linewidth}
\includegraphics[scale=0.59,page=4,trim=0 0 0 0,clip]{transform3.pdf}
\caption{}
\end{subfigure}
\hspace*{\fill}
\begin{subfigure}[b]{0.23\linewidth}
\includegraphics[scale=0.59,page=5,trim=0 0 20 0,clip]{transform3.pdf}
\caption{}
\end{subfigure}
\hspace*{\fill}
\begin{subfigure}[b]{0.23\linewidth}
\includegraphics[scale=0.59,page=6,trim=0 0 20 0,clip]{transform3.pdf}
\caption{}
\end{subfigure}
\caption{Transform (a) a poly-line drawing of the leaf-reduced inner skeleton
(with a white dummy-vertex inserted at a bend) into (b) a visibility represention.
(c) Expand leaves and widen vertex-segments to overhang ($x$-coordinates are not to scale).
Then (d) triple the grid and insert cycle $C$ and the leaves to get a flat orthogonal drawing (inserted columns are not shown).
}
\label{fig:convert}
\end{figure}

Next, convert $\Gamma_d''$ into a {\em flat
visibility representation} $\Gamma''_{vr}$ of $T_d''$.
This consists of assigning a horizontal segment $s(v)$ to every vertex
and a horizontal or vertical segment $s(e)$ to every edge such
that the segments are interior-disjoint and the segment of edge
$(v,w)$ ends at $s(v)$ and $s(w)$.    We can always do such a
conversion while giving integral $y$-coordinates to all segments
and maintaining the same height and planar embedding \cite{Bie-GD14}.

We next convert visibility representation $\Gamma''_{vr}$ of $T''_d$
into a visibility representation $\Gamma'_{vr}$ of $T'_d$.  Recall that
$T'$ is a leaf-extension of $T''$, so we can obtain $T'_d$ by repeatedly
adding a leaf $\ell$ incident to a leaf $p$ of the current tree.
Since $p$ is a leaf,
there is no incident horizontal edge next to one end (left or right) of its segment
$s(p)$. We place
a segment for $\ell$ at this end (inserting columns if needed to make space), and
connect it horizontally.  Repeating this gives a
visibility representation $\Gamma'_{vr}$ of $T'_d$.  By inserting
further columns, we may assume that any segment $s(v)$ in $\Gamma'_{vr}$
has at least one unit width and overhangs any incident vertical edge-segment by at least one unit.

Next, {\em triple the grid}, i.e.,  insert a new grid-line before and after each
existing one.
In consequence, we can surround the entire drawing of $\Gamma'_{vr}$ with
a cycle $C$ that traces along all segments.  Formally, $C$ consists of
all those points that are horizontally or vertically exactly one unit
away from segments of $\Gamma'_{vr}$, and these points form a cycle since
we tripled the grid.  Let $\Gamma'_C$ be
the resulting drawing.

Now we insert the leaves of the skeleton.
Let an {\em angle} of a vertex $v$ in $T'$ be any two consecutive
edges $e,e'$ at $v$ in $T'$ in the planar embedding.
Because $s(v)$ overhangs its incident vertical edges,
cycle $C$ has a segment $s_\alpha$ of at least unit length for every angle
$\alpha$ of $v$ such that placing a leaf $\ell$ on $s_C$ and connecting it vertically
puts edge $(v,\ell)$ between $e$ and $e'$ in the planar embedding.
So for any $v\in T'$ and any angle $\alpha$ at $v$, insert
Note that $C$ runs within unit distance of $s(e)$ and $s(e')$ at some point,
and since $e,e'$ are consecutive  at $v$, a part of $C$ between this is
within unit distance of $s(v)$ throughout.  Furthermore, since
$s(v)$ overhangs incident edge-segments, this part contains a horizontal
segment $s_\alpha$. Insert as many leaves
on $s_\alpha$ as are required by the planar embedding of the skeleton
(we can insert columns to widen $s_\alpha$ if needed) and connect
them vertically to $s(v)$.
This gives a {\em flat orthogonal drawing} $\Gamma_{od}$: every vertex is represented
by a horizontal segment, and every edge is a poly-line with only
horizontal or vertical segments.  Furthermore, the height is $3h$
and the drawing represents $H(T_d)$ since we took care to re-insert
the leaves exactly according to the planar embedding.
Drawing $\Gamma_{od}$ can be converted to a poly-line drawing $\Gamma_d$ of $H(T_d)$
of the same height \cite{Bie-GD14}.  Finally by reverting dummy-vertices
of $T_d$ back to bends, we obtain the desired poly-line drawing of $H(T)$.
\end{proof}

Since every tree $T$ has an order-preserving
straight-line drawing $\Gamma$ of height $2 \pw(T) + 1$
\cite{BB-JGAA}, we get:

\begin{corollary}
  \label{cor:6pw}
  Any extended Halin-graph $H(T)$ has a plane
  poly-line drawing of height $6 \pw(T'') + 3$, where $T''$ is the reduced inner skeleton.
\end{corollary}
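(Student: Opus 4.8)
The plan is to combine Theorem~\ref{thm:3x-transform} with the quoted tree-drawing bound of Biedl and Batzill in the most direct way possible. Since the theorem says that a height-$h$ order-preserving poly-line drawing of the leaf-reduced inner skeleton $T''$ yields a height-$3h$ plane poly-line drawing of $H(T)$, it suffices to find an order-preserving drawing of $T''$ of height $h = 2\pw(T'')+1$; then the transformation delivers height $3h = 6\pw(T'')+3$, which is exactly the claimed bound.

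First I would invoke the result from \cite{BB-JGAA}: every (ordered) tree $T$ has an order-preserving straight-line drawing of height $2\pw(T)+1$. Applying this with $T := T''$, the leaf-reduced inner skeleton, gives an order-preserving straight-line drawing $\Gamma''$ of $T''$ of height $h = 2\pw(T'')+1$. A straight-line drawing is in particular a poly-line drawing (with no bends), so the hypothesis of Theorem~\ref{thm:3x-transform} is satisfied. Second, I would apply Theorem~\ref{thm:3x-transform} to this $\Gamma''$ to obtain a plane poly-line drawing of $H(T)$ of height $3h = 3(2\pw(T'')+1) = 6\pw(T'')+3$. That completes the argument.

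The one point that deserves a sentence of care — and is really the only potential obstacle — is that Theorem~\ref{thm:3x-transform} is phrased in terms of the drawing of the leaf-reduced inner skeleton of $H(T)$, while the cited tree-drawing bound applies to an arbitrary tree; here we simply take that arbitrary tree to be $T''$ itself, so there is no mismatch. One should also note that $T''$ is genuinely a tree (a subgraph of the inner skeleton that is closed under the leaf-extension relation, hence connected and acyclic), so it is a legitimate input to the algorithm of \cite{BB-JGAA}, and the order it inherits from the plane embedding of $H(T)$ is the order we preserve. No further work is needed, so this corollary is essentially a one-line composition of two previously established results; I would present it as such, with the short chain of inequalities $3h = 3(2\pw(T'')+1) = 6\pw(T'')+3$ made explicit.

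\begin{proof}
By \cite{BB-JGAA}, the tree $T''$ has an order-preserving straight-line (hence poly-line) drawing of height $2\pw(T'')+1$. Applying Theorem~\ref{thm:3x-transform} with $h = 2\pw(T'')+1$ yields a plane poly-line drawing of $H(T)$ of height $3h = 6\pw(T'')+3$.
\end{proof}
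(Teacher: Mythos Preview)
Your proposal is correct and takes essentially the same approach as the paper: apply the $2\pw(T)+1$ order-preserving tree-drawing bound of \cite{BB-JGAA} to $T''$, then feed the result into Theorem~\ref{thm:3x-transform} to triple the height. The paper states this as a one-line consequence just as you do.
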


Since every tree has pathwidth at most $\log_3(2n+1)$ \cite{Scheffler90}
we can in particular draw extended Halin-graphs with height $O(\log n)$.  The width can easily be seen to be $O(n)$, so the area is $O(n\log n)$.
Our construction may seem very wasteful (cycle $C$ has many bends
that could be removed with suitable post-processing stages),
but as we shall see in Theorem~\ref{thm:lower_regular_pw},
the height-bound is tight, even for some regular Halin-graphs.

\section{Straight-line drawings}
\label{sec:straight}

The transformation of Section~\ref{sec:transform} creates
poly-line drawings, and it is not at all clear
whether one could convert them into straight-line drawings without
changing the height.  We hence give a second, completely different algorithm
that creates a straight-line plane drawing of a Halin-graph that, at the
cost of doubling the height.  (The width may
be exponential, so this construction is of mostly theoretical interest.)
Crucial for our result is that it suffices to construct poly-drawings in
which all edges are drawn as $y$-monotone curves; by the result of
Pach and T\'{o}th  \cite{PT04} or Eades et al.~\cite{EFLN06} such drawings
can be converted into planar drawings of the same height.

The algorithm proceeds by considering an increasingly larger subtree $T'$ of
the skeleton $T$ (rooted at an arbitrary leaf), and to draw the skirted
Halin-graph $H^-(T')$.    There are three edges (called {\em connector-edges})
that connect $H^-(T')$ with the rest of $H(T)$: they attach at the root and at
the leftmost and rightmost leaf of $T'$.  To be able to add them later with a
$y$-monotone curve, we restrict the locations of their endpoints.
So we specify below whether the leftmost and rightmost
leaf should have empty rays towards west ($\W$) or east ($\E$).
We also restrict the root to be in the leftmost column and either as far north ($\N$) as
possible or as far south ($\S$) as possible; sometimes either placement is
acceptable and we use $\W$ to indicate this.  The full set of restrictions
is as as follows:

\begin{definition}  Let $T$ be a rooted tree with $\rpw(T)\geq 2$ (and therefore $\ell_L\neq \ell_R$).
Let $\Gamma$ be a plane poly-line drawing of $H^-(T)$ in layers $1,\dots,h$
(enumerated top to bottom), where $h\geq 2$.  We call $\Gamma$ an
{\em $\alpha_L \alpha_r \alpha_R$-drawing}, for $\alpha_L,\alpha_R\in \{\W,\E\}$
and $\alpha_r \in \{\N,\W,\S\}$,
if it satisfies the following (see also Figure~\ref{fig:types}):
\begin{itemize}
\item[(d1)]  $\ell_R$ is in layer $1$ and $\ell_L$ is in layer $h$.
	Root $r$ is in the leftmost column and the only element of $\Gamma$ in that column.
\item[(d2)]  For $X\in \{L,R\}$, if $\alpha_X=\W$, then the westward ray from $\ell_X$ is unobstructed
	(i.e., intersects no other element of $\Gamma$).
	Otherwise ($\alpha_X=\E$) the eastward ray from $\ell_X$ is unobstructed.
\item[(d3)]  If $\alpha_r=\N$, then $r$ is in layer 2.  If $\alpha_r=D$, then $r$ is in layer $h-1$.
	Otherwise ($\alpha_r=\W$) $r$ is in an arbitrary layer.
\end{itemize}
\end{definition}

\begin{figure}[ht]
\hspace*{\fill}
	\includegraphics[width=0.22\linewidth,page=1]{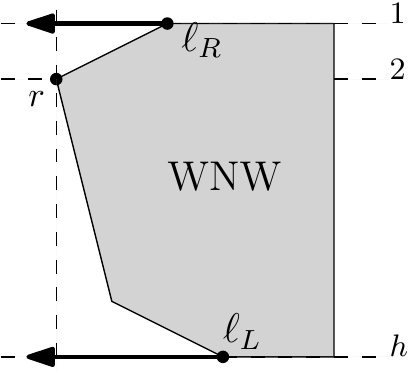}
\hspace*{\fill}
	\includegraphics[width=0.22\linewidth,page=2]{construction2.pdf}
\hspace*{\fill}
	\includegraphics[width=0.22\linewidth,page=3]{construction2.pdf}
\hspace*{\fill}
\caption{Drawing types}
\label{fig:types}
\end{figure}

We assumed $\rpw(T)\geq 2$ in the above definition since otherwise $\ell_L=\ell_R$
and then condition (d1) cannot be satisfied for $h>1$.
We hence create drawings for trees $T$ with $\rpw(T)\geq 2$
and deal with subtrees that do not satisfy this as special cases.
The construction works for both regular and extended Halin-graphs, but
the latter may require a bit more height.
To express this succinctly, set $\chi_{ext}(T)$ to be 1 if $T$ contains a
degree-2 vertex that is not the root (this in particular implies that $H(T)$
is not regular), and $\chi_{ext}(T)=0$ otherwise.  Note that $\chi_{ext}(T')\leq \chi_{ext}(T)$
for any subtree $T'$ of $T$.

\paragraph{The case $\rpw(T)=2$ and some useful observations:}

The drawing for $T$ if $\rpw(T)=2$ is a bit special; we can save two rows (compared
to drawings for higher rooted pathwidth) at the cost of no flexibility for the
$y$-coordinate of the root.

\begin{lemma}
\label{lem:main_2}
\label{lem:main2}
Let $T$ be a rooted ordered tree with $\rpw(T)= 2$.
Then for any $\alpha_L,\alpha_R\in \{\W,\E\}$
$H^-(T)$ has a plane $y$-monotone $\alpha_L \W \alpha_R$-drawing of height
$3+\chi(\alpha_L{=}\E)+\chi(\alpha_R{=}\E)+2\chi_{ext}(T)$.
\end{lemma}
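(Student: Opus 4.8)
The plan is to induct on the structure of $T$ using the recursive definition of rooted pathwidth. Since $\rpw(T)=2$, the tree $T$ is not a rooted path, but every component of $T$ obtained by deleting a spine $P_r$ has rooted pathwidth $1$, i.e., is a rooted path. So $T$ consists of a spine $P_r = \langle r = p_0, p_1, \dots, p_k \rangle$ from the root to some leaf, together with a collection of rooted paths hanging off the spine vertices, each attached by a single edge. First I would set up coordinates: place the spine essentially vertically in a narrow band of columns on the left, with $r$ in the leftmost column so that (d1) is satisfied, and route it $y$-monotonically downward. The leaves of $T$ are: the bottom endpoint $p_k$ of the spine (or a path hanging below it), the leaves at the bottom of each hanging path, plus $r$ itself if $r$ has degree $1$ (in which case $H^-(T)$ omits the path $\langle \ell_L, r, \ell_R\rangle$ rather than the single edge $(\ell_L,\ell_R)$). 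I would order the hanging paths by where they attach along $P_r$ and by left/right side, and lay out the leaf-path $H^-$ (the path connecting leaves of $T$ from left to right) so that it snakes monotonically: going down the left side of the spine through the left-attached paths, around the bottom, and up... no — since the whole drawing must be $y$-monotone and $\ell_R$ is in layer $1$ while $\ell_L$ is in layer $h$, I actually want all leaves assigned distinct layers, with the left-to-right leaf order corresponding to decreasing $y$ is wrong too; rather $\ell_L$ at the bottom and $\ell_R$ at the top, so the leaf-path visits layers in a way compatible with $y$-monotone edges. The cleanest approach: assign the leaves of $T$ to layers $1$ through $h$ roughly in the left-to-right leaf order reversed, draw each hanging rooted path as a $y$-monotone zig-zag occupying a contiguous block of layers and a block of columns to the right of the spine, and route the leaf-path edges between consecutive leaves as $y$-monotone curves in the space between path-blocks.

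Next I would nail down the height. Each hanging rooted path, being a path, can be drawn $y$-monotonically in as many layers as it has leaves — but here each contributes exactly one leaf to $T$'s leaf set, so I want each hanging path to occupy essentially one "new" layer, and the spine to occupy the remaining layers as it descends. The base count of $3$ layers should come from the minimal configuration (a small tree with $\rpw=2$: the root, one internal structure, $\ell_L$ and $\ell_R$ — needing layers for $\ell_R$ on top, $\ell_L$ on bottom, and at least one in between for the connection and the root), and the $\alpha_r=\W$ freedom means I place $r$ wherever the spine geometry forces it. The terms $\chi(\alpha_L{=}\E)$ and $\chi(\alpha_R{=}\E)$ are the crux of the careful bookkeeping: if $\alpha_L=\W$, then $\ell_L$ (in layer $h$, the bottom) already has a free westward ray because nothing is to its left except possibly the spine, and I must arrange that the spine does not block it; if $\alpha_L=\E$, I instead need a free eastward ray, which costs one extra layer below everything because the entire rest of the drawing lies to the east of $\ell_L$ and I must route around it, so $\ell_L$ gets pushed down (or the eastward corridor requires a dedicated empty layer). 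Symmetrically for $\ell_R$ at the top with $\alpha_R$. Finally, the $2\chi_{ext}(T)$ term: a degree-$2$ vertex that is not the root cannot be "absorbed" into a rooted path the way the recursive definition would like (a path in the $\rpw$ sense can have degree-$2$ vertices, so actually $\chi_{ext}$ tracks something subtler) — I would handle extended Halin-graphs by observing that a non-root degree-$2$ vertex $u$ forces a leaf-path edge to pass $u$ on both sides locally, or forces the tree-path through $u$ to have no sibling to merge with, costing up to $2$ extra layers; I would verify $2$ suffices by a local rerouting argument at each such vertex, using that subdividing does not change which side rays are free on.

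The key steps in order: (1) decompose $T$ into spine $P_r$ plus hanging rooted paths via the $\rpw=2$ structure; (2) fix the left-to-right order of leaves and assign each leaf a target layer, with $\ell_R \to 1$ and $\ell_L \to h$; (3) draw the spine as a $y$-monotone curve in the leftmost few columns with $r$ in column $1$ alone (d1); (4) draw each hanging path as a $y$-monotone zig-zag in a column-block to the right, occupying a contiguous layer-block, consistent with its leaf's target layer; (5) route the leaf-path $H^-$ edges as $y$-monotone curves through the gaps, maintaining the planar embedding so the infinite face is on the correct side; (6) carefully place $\ell_L$ and $\ell_R$ so their required rays (per $\alpha_L, \alpha_R$) are unobstructed, adding exactly one layer for each that is $\E$; (7) handle degree-$2$ non-root vertices with local rerouting, adding at most $2$ layers; (8) appeal to the cited result of Pach--T\'oth / Eades et al.\ is \emph{not} needed here since we already produce a $y$-monotone \emph{poly-line} drawing — the statement only asks for poly-line, so we are done once the embedding and layer counts check out.

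The main obstacle I expect is step (5) combined with step (6): making the leaf-path edges $y$-monotone \emph{simultaneously} with getting the two extreme leaves' rays free, because a $y$-monotone leaf-path forces the leaves to be visited in layer-order, which constrains the left-to-right assignment in step (2), which in turn may conflict with the spine geometry from step (3); reconciling all three requires choosing the column-blocks and the spine's horizontal wiggles just so. The $\chi_{ext}$ accounting in step (7) is a secondary but fiddly obstacle — I would want to argue it by picking, among all spines, one that pushes degree-$2$ vertices onto the spine itself when possible, so that only genuinely unavoidable ones incur the cost.
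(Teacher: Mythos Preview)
Your construction has a genuine gap: the spine orientation is wrong, and as a result your height is not constant. You place the spine vertically and say ``each hanging path to occupy essentially one `new' layer''; but then the height grows with the number of hanging subtrees, not with $\rpw(T)$. You notice the tension yourself (``The base count of $3$ layers should come from the minimal configuration'') but never resolve it.

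The paper's construction is much simpler and hinges on placing the spine \emph{horizontally} on a single layer, root leftmost. Each subtree in $\calT(T,P_r)$ is a rooted path; the crucial observation you missed is that when $\chi_{\ext}(T)=0$ such a rooted path has no degree-$2$ non-root vertex and is therefore a \emph{single leaf}. So all before-spine leaves go on one layer above the spine, all after-spine leaves on one layer below, and the cycle-edges run horizontally along those two layers. That is the whole $\LLL$-drawing in $3$ layers. For $\chi_{\ext}(T)=1$ the hanging paths may be longer; contract their degree-$2$ vertices, draw as above, then reinsert them on two extra layers (one above, one below the spine-layer), giving $3+2\chi_{\ext}(T)$. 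The $\E$-rays are then obtained by the generic ``turn a ray around'' trick (Claim~\ref{cl:turnRight}): add one new layer above (resp.\ below), move $\ell_R$ (resp.\ $\ell_L$) into it, and bend its incident edges. Each such move costs exactly one layer, explaining $\chi(\alpha_L{=}\E)+\chi(\alpha_R{=}\E)$. Your steps (2)--(5) are unnecessary once the spine is horizontal, and your analysis of $\chi_{\ext}$ in step~(7) (``forces a leaf-path edge to pass $u$ on both sides locally'') is off: the $2$ extra layers are simply where the interior vertices of the hanging paths live.
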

\begin{proof}
See Figure~\ref{fig:baseCase} for the following construction.
Fix a spine $P$ that goes from root to a leaf, and
place $P$ on one layer, with the root leftmost.

Any $T'\in \calT(T,P)$ has rooted pathwidth 1 since $P$ is a spine.
If $\chi_{ext}(T)=0$, then $T'$ has no vertices of degree 2, so it
is a single leaf.  Place it in the layer above or below $P$ depending on whether $T'$
is right or left of the spine $P$.   The cycle-edges can now be completed
along these layers.  If $\chi_{ext}(T)=1$, then initially contract all vertices
of degree 2 and draw the tree as above.  Then insert extra layers before/after the
spine-layer and place degree-2 vertices (or a bend, if there are none) within those layers.

So we have constructed a $\LLL$-drawing of height $3+2\chi_{ext}(T)$.
Any of the other drawing-types is constructed by ``turning rays'' around.
We describe this in a more general lemma below since it will be useful
for later cases as well.
\end{proof}

\begin{figure}[ht]
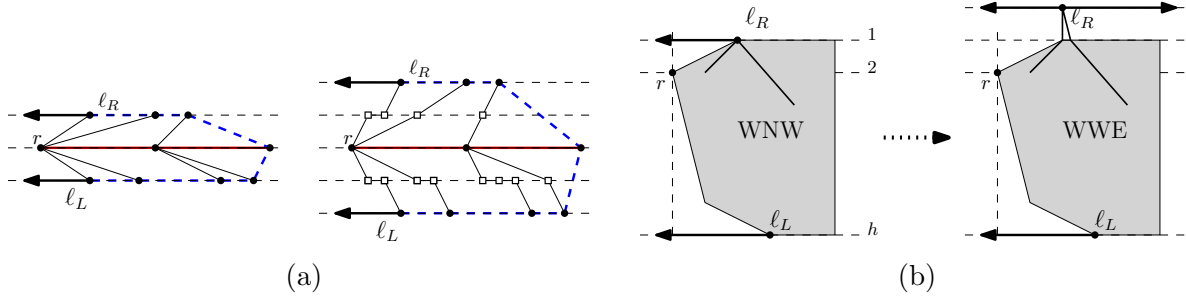

\hspace*{\fill}
\begin{subfigure}[b]{0.48\linewidth}
	\includegraphics[width=0.99\linewidth,page=4]{construction2.pdf}
	\caption{}
\label{fig:baseCase}
\end{subfigure}
\hspace*{\fill}
\begin{subfigure}[b]{0.45\linewidth}
	\includegraphics[width=0.99\linewidth,page=5]{construction2.pdf}
	\caption{}
\label{fig:freeDown}
\end{subfigure}
\hspace*{\fill}
\caption{%
(a) The base case for $\chi_{ext}(T)=0$ and $\chi_{ext}(T)=1$.
(b) Achieving $\alpha_R=\E$.
}
\end{figure}

\begin{claim}
\label{cl:turnRight}
Assume that $H^-(T)$ has a $y$-monotone $\W \alpha_r \W$-drawing of height $h\geq 3$
for some $\alpha_r\in \{\N,\S,\W\}$.  Then for any $\alpha_L,\alpha_R\in \{\W,\E\}$ it also has a $y$-monotone
$\alpha_L \W \alpha_R$-drawing of height
$h+\chi(\alpha_L{=}\E)+\chi(\alpha_R{=}\E)$.
\end{claim}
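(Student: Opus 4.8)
The plan is to start from the given $\W\alpha_r\W$-drawing $\Gamma$ of height $h$ and modify it in two independent steps, one for the left leaf $\ell_L$ and one for the right leaf $\ell_R$, turning each westward ray into an eastward ray when the corresponding $\alpha_X=\E$. Since the two leaves sit on layers $h$ and $1$ respectively (the extreme layers, by (d1)), and the root occupies the leftmost column alone, the two modifications do not interfere and can be applied in either order; I will describe the one for $\ell_R$ (layer $1$), the one for $\ell_L$ being symmetric under vertical reflection.

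First I would handle the case $\alpha_R=\E$. In $\Gamma$, $\ell_R$ is in layer $1$ and has an unobstructed westward ray; the cycle-edge (the connector-edge along the top that in $H^-(T)$ is the edge from $\ell_R$ towards the ``next'' leaf) currently leaves $\ell_R$ going west along layer $1$. To ``turn the ray around'', I insert one new layer at the top, becoming the new layer $1$, pushing everything down by one; $\ell_R$ is now in layer $2$. I route a short $y$-monotone detour: draw a new bend for the connector-edge of $\ell_R$ one unit up and to the right of $\ell_R$ in the new top layer, so that the edge leaves $\ell_R$ going up-and-east, giving $\ell_R$ an unobstructed eastward ray within its own (old, now second) layer — wait, more carefully: I move $\ell_R$ itself up onto the new top layer at an $x$-coordinate strictly east of the entire rest of the drawing, and connect it down-and-west with a $y$-monotone segment to where it used to attach to $H^-(T)$. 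The top layer is empty except for $\ell_R$, so the eastward ray from $\ell_R$ is unobstructed, (d2) holds for $X=R$, and $\ell_R$ is still the topmost vertex as required by (d1). The cost is one extra layer, accounting for the $\chi(\alpha_R{=}\E)$ term.

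The symmetric construction for $\ell_L$ — insert a new bottom layer, move $\ell_L$ there to an $x$-coordinate strictly east of everything, connect up-and-west $y$-monotonically — adds the $\chi(\alpha_L{=}\E)$ term and preserves (d1) for $\ell_L$. Neither step touches the root's column or layer, so (d3) is inherited unchanged; hence $\alpha_r$ stays $\W$ (which is weaker than whatever it was, so an $\alpha_r\in\{\N,\S\}$ drawing still certifies a $\W$-drawing as claimed). Since the new pieces are $y$-monotone segments and the drawing remains plane (the new leaf positions are strictly east of the old drawing, in otherwise-empty layers), $y$-monotonicity and planarity are preserved, and the total height is $h+\chi(\alpha_L{=}\E)+\chi(\alpha_R{=}\E)$.

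The main obstacle is purely a matter of careful bookkeeping rather than a genuine difficulty: I must verify that moving $\ell_X$ to a fresh layer at an $x$-coordinate east of the current drawing does not create a crossing with the connector-edge stub or with cycle-edges already incident to $\ell_X$, and that the resulting embedding still has the infinite region adjacent to the cycle-edges; this follows because in $H^-(T)$ the leaf $\ell_R$ (resp.\ $\ell_L$) has exactly one cycle-edge towards the interior of the path of leaves plus its connector-edge, so re-routing only the connector side is safe. I would also note explicitly that when $\alpha_X=\W$ for some $X$ the corresponding step is simply omitted, so all four combinations of $(\alpha_L,\alpha_R)$ are covered uniformly.
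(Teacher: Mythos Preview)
Your proposal is correct and follows essentially the same approach as the paper: add a fresh extreme layer, move the extreme leaf $\ell_R$ (resp.\ $\ell_L$) into it, and re-attach its incident edges via a bend at the old location, yielding the eastward ray at the cost of one layer per side. One small imprecision worth tightening: adding a layer on top \emph{does} change the root's layer number (so an $\N$-condition is genuinely lost, as the paper notes explicitly), but since the output only promises $\alpha_r=\W$ your conclusion stands.
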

\begin{proof}
Leaf $\ell_R$ is in the topmost layer, so
its incident edges are routed $y$-monotonically and leave horizontally or downward from $\ell_R$.
To achieve $\alpha_R=\E$, add a new layer above $\Gamma$, move $\ell_R$ into it, and
extend its incident edges via a bend near where $\ell_R$ used to be.
See also Figure~\ref{fig:freeDown}.
This gives a $y$-monotone drawing where the bottom layer is unchanged (in particular, it still
contains $\ell_L$ with its unobstructed ray).
The root is no longer be in $\N$-position if it was before,
but this is not a problem since we only promised an $\alpha_L \W \alpha_R$-drawing.
Similarly one achieves $\alpha_L=\E$ by adding a layer below $\Gamma$ and moving $\ell_L$ into it.
\end{proof}

The following will be useful when merging a drawing of a subtree $T'$ that
uses fewer layers than permitted (e.g. because $\chi_{ext}(T')=0$ while $\chi_{ext}(T)=1$):
We can ``pad'' such a drawing by inserting empty layers suitably, even while
maintaining the drawing type.

\begin{claim}
\label{cl:widenUU}
Assume that $H^-(T)$ has a $y$-monotone $\alpha_L \alpha_r \alpha_R$-drawing of height $h\geq 3$.
Then for any $h'>h$ it also has a $y$-monotone $\alpha_L \alpha_r \alpha_R$-drawing of height $h'$.
\end{claim}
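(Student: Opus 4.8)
The plan is to pad an $\alpha_L \alpha_r \alpha_R$-drawing by inserting empty layers while preserving all three properties (d1)--(d3). The subtlety is \emph{where} we may insert layers: condition (d1) demands that $\ell_R$ stay in the top layer and $\ell_L$ in the bottom layer, and condition (d3) pins the root to layer $2$ (if $\alpha_r{=}\N$) or layer $h-1$ (if $\alpha_r{=}\S$), so we cannot blindly add rows at the top or bottom, nor right next to the root.

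First I would handle the generic issue that inserting a horizontal layer into a poly-line drawing requires ``stretching'' the edges that cross it. Pick any layer $j$ with $2 \le j \le h-1$ that is not adjacent to a forbidden position. Split the drawing along the horizontal line strictly between layers $j$ and $j{+}1$, shift everything below it down by one unit, and extend every edge that crossed this line by a unit-length vertical segment (introducing at most one extra bend per edge). Since the original edges were $y$-monotone, the lengthened edges are still $y$-monotone, and since we only stretched vertically, planarity and the plane embedding are preserved. This increases the height by exactly one without moving $\ell_L$, $\ell_R$, or $r$ relative to the top/bottom/root constraints. I would note that such a splitting line always exists because $h \ge 3$ gives at least one ``interior'' gap; concretely, if $\alpha_r = \N$ insert below layer $2$, if $\alpha_r = \S$ insert above layer $h-1$, and if $\alpha_r = \W$ any interior gap works.

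Second, I would iterate: repeat the above insertion $h' - h$ times, each time choosing a fresh interior gap (which remains available since the height only grows). After $h'-h$ steps the drawing has height exactly $h'$; $\ell_R$ is still in layer $1$, $\ell_L$ is still in the (new) bottom layer $h'$, the root is still alone in the leftmost column, and its layer is still $2$ or $h'-1$ or arbitrary as dictated by $\alpha_r$; the unobstructed horizontal rays from $\ell_L$ and $\ell_R$ required by (d2) are untouched since we never added anything to the top or bottom layer. Hence the result is a $y$-monotone $\alpha_L \alpha_r \alpha_R$-drawing of height $h'$.

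I expect the main obstacle to be the bookkeeping around condition (d3) when $\alpha_r \in \{\N,\S\}$: one must be careful that the single permissible insertion location (just below layer $2$, or just above layer $h-1$) does not itself become ``blocked,'' but since that gap is always strictly interior for $h \ge 3$ and we add new layers there, it stays valid throughout the iteration. A minor secondary point is to confirm that introducing extra bends is acceptable — it is, because the claim only promises a poly-line $y$-monotone drawing, not a straight-line one.

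\begin{proof}
We show how to add one empty layer while preserving the drawing type; iterating $h'-h$ times then yields the claim.

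Let $\Gamma$ be the given $\alpha_L \alpha_r \alpha_R$-drawing in layers $1,\dots,h$ with $h\ge 3$. Choose an index $j$ with $2\le j\le h-1$ as follows: if $\alpha_r=\N$ (so $r$ is in layer $2$) set $j=2$; if $\alpha_r=\S$ (so $r$ is in layer $h-1$) set $j=h-2$; otherwise set $j=2$. In all cases $2\le j\le h-1$, so there is at least one layer above and one below the horizontal line $L_j$ at height $j+\tfrac12$, and neither layer $1$ nor layer $h$ touches $L_j$.

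Cut $\Gamma$ along $L_j$, keep the part in layers $1,\dots,j$ fixed, and translate the part in layers $j{+}1,\dots,h$ downward by one unit (so it now occupies layers $j{+}2,\dots,h{+}1$). Every edge that crossed $L_j$ did so with a $y$-monotone curve; reconnect the two halves of each such edge by inserting a vertical unit segment in the newly created empty layer $j{+}1$, adding at most one bend. The reconnected edges remain $y$-monotone, and since the modification only stretches the drawing vertically along a single horizontal line, no crossings are introduced and the plane embedding is unchanged.

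The resulting drawing $\Gamma'$ has height $h+1$. Leaf $\ell_R$ is still in layer $1$ and $\ell_L$ is now in layer $h+1$, the new bottom layer, so (d1) holds; the root is still the unique element of the leftmost column. For (d2): layers $1$ and $h{+}1$ were not modified, so the unobstructed rays of $\ell_L$ and $\ell_R$ survive. For (d3): if $\alpha_r=\N$ then $r$ was in layer $2$ and $j=2$, so $r$ lies above $L_j$ and stays in layer $2$; if $\alpha_r=\S$ then $r$ was in layer $h-1$ and $j=h-2$, so $r$ lies below $L_j$ and moves to layer $h=(h{+}1)-1$; if $\alpha_r=\W$ no constraint applies. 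Hence $\Gamma'$ is a $y$-monotone $\alpha_L\alpha_r\alpha_R$-drawing of height $h+1$.

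Applying this step $h'-h$ times (each time the chosen index still satisfies the bound since the height only increases) produces the desired drawing of height $h'$.
\end{proof}
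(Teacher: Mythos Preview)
Your argument is correct and essentially identical to the paper's: insert the new layers just below layer~$2$ (for $\alpha_r{=}\N$) or just above layer $h{-}1$ (for $\alpha_r{=}\S$) so that $\ell_R$, $\ell_L$, and $r$ keep their required positions, and iterate. Two small technical points worth tidying: (i) cutting at the half-integer height $j+\tfrac12$ and splicing in a vertical segment leaves bends at non-integer $y$-coordinates, contrary to the paper's standing convention; the paper sidesteps this by first adding a bend wherever an edge crosses a layer (so every segment is horizontal or spans two adjacent layers), after which the stretch keeps all bends on integer layers; (ii) your claim ``in all cases $2\le j\le h-1$'' fails when $h=3$ and $\alpha_r=\S$ (then $j=h-2=1$), but the remainder of your argument only uses $1\le j\le h-1$, so nothing actually breaks.
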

\begin{proof}
First insert bends whenever an edge crosses a layer without a bend; now all edge-segments are
horizontal or connect adjacent layers.  If $\alpha_r=\N$
then $r$ is in layer 2.  Insert $h'-h$ horizontal grid-lines
between layer 2 and layer 3, and add bends to any edge that crosses the
inserted lines.  So edge-segments again are horizontal or connect adjacent grid-lines,
which means that we can change the $y$-coordinates of grid-lines to be integers (i.e., stretch
the drawing between layers 2 and 3) without affecting planarity or $y$-monotonicity.
This gives the desired $\alpha_L \N \alpha_R$-drawing since $r$ remains in layer 2, and
no changes were made within the top or bottom layer.
The construction is symmetric (inserting layers between $h-2$ and $h-1$) for $\alpha_r=\S$,
and either construction can be used for $\alpha_r=\W$.
\end{proof}

\paragraph{The induction hypothesis:}

We create drawings for arbitrarily large rooted pathwidth by induction; the
following states the induction hypothesis.  (It differs from Lemma~\ref{lem:main2}
in that we sometimes permit $\alpha_r=\N$ or $\alpha_r=\S$ while Lemma~\ref{lem:main2}
only holds for $\alpha_r=\W$.)

\begin{lemma}
\label{lem:main}
\label{lem:straight}
Let $T$ be a rooted ordered tree with $\rpw(T)\geq 3$, and let $\alpha_L,\alpha_r,\alpha_R$ be any of the
combinations \LLR, \RLL, \RLR, \LUL and \LDL.
Then $H^-(T)$ has a plane $y$-monotone $\alpha_L \alpha_r \alpha_R$-drawing of height
$6\rpw(T)-9+\chi(\alpha_L{=}\E)+\chi(\alpha_R{=}\E)+2\chi_{ext}(T)$.
\end{lemma}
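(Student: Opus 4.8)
The plan is to prove Lemma~\ref{lem:straight} by induction on $\rpw(T)$, with Lemma~\ref{lem:main2} (the case $\rpw(T)=2$) and the trivial $\rpw(T)=1$ case as the base. Fix a spine $P_r$ of $T$ going from the root $r$ to a leaf; let $v_0=r, v_1,\dots,v_k$ be the spine vertices in order, and let the subtrees hanging off the spine be the components $\calT(T,P_r)$, each of rooted pathwidth at most $\rpw(T)-1$. The overall strategy is to draw $P_r$ essentially monotonically down the left side, attaching each hanging subtree $T'$ by a recursively constructed $\alpha_L\alpha_r\alpha_R$-drawing of $H^-(T')$, choosing the drawing type for $T'$ so that the connector-edges (at its root and its leftmost/rightmost leaves) can be routed $y$-monotonically to the spine and so that the skirt-path of $T$ threads correctly through the leaves of the subtrees in left-to-right order. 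The five allowed combinations $\LLR,\RLL,\RLR,\LUL,\LDL$ are exactly the interface types we need to make these attachments compose, and Claim~\ref{cl:turnRight} lets us convert between $\W$ and $\E$ at the leftmost/rightmost leaves at a cost of one layer each (accounting for the $\chi(\alpha_L{=}\E)+\chi(\alpha_R{=}\E)$ terms), while Claim~\ref{cl:widenUU} lets us pad a subtree drawing to the common height so that subtrees with $\chi_{ext}(T')=0$ can sit inside the budget allotted when $\chi_{ext}(T)=1$.

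The key steps, in order, are as follows. First, I would partition the hanging subtrees into those on the left of the spine and those on the right, and handle the "spine-child" subtree (the one of maximum rooted pathwidth, which may equal $\rpw(T)-1$) with special care since it is the one that forces the height. Second, I would process subtrees of rooted pathwidth $\le 2$ (single leaves, or near-paths when $\chi_{ext}=1$) using Lemma~\ref{lem:main2}, placing them in one or three extra layers just above/below the spine as in the base case. Third, for a subtree $T'$ with $\rpw(T')\ge 3$, invoke the induction hypothesis to get a drawing of $H^-(T')$ of height $6\rpw(T')-9+\dots \le 6(\rpw(T)-1)-9+\dots = 6\rpw(T)-15+\dots$, leaving a slack of $6$ layers; I would use this slack to stack the recursive drawing against the spine, route the root-connector-edge of $T'$ horizontally to $v_i$, and route the two leaf-connectors $y$-monotonically to merge the skirt of $T'$ into the global skirt-path. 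Fourth, I would verify that stacking all hanging subtrees down the left and right of the spine, with the spine itself occupying a thin band, keeps the total height at $6\rpw(T)-9+\chi(\alpha_L{=}\E)+\chi(\alpha_R{=}\E)+2\chi_{ext}(T)$ — the arithmetic should work because only one subtree needs the full $6\rpw(T')-9$ budget and the others fit in constant-width bands. Finally, I would realize the five required output types by an appropriate choice of which end of the spine is "up", plus Claim~\ref{cl:turnRight} to turn rays, and confirm $y$-monotonicity of every connector-edge so that the Pach–T\'oth / Eades et al.\ conversion yields a straight-line drawing of the same height.

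The main obstacle I expect is the bookkeeping of the height budget when several hanging subtrees each have rooted pathwidth close to $\rpw(T)-1$: naively stacking two subtrees of height $6(\rpw(T)-1)-9$ would double the bound, so the argument must ensure that at most one hanging subtree attains the maximum, and that all others have strictly smaller rooted pathwidth and hence fit in a band of height $6(\rpw(T)-2)-9+O(1)$, which must then be shown to be absorbed into the $6$-layer slack plus the band allotted to the spine and the $\le 2$-pathwidth leaves. Getting the "$-9$" constant exactly right — rather than "$-9+O(1)$" — is the delicate part, and it hinges on reusing the freedom in $\alpha_r\in\{\N,\S\}$ (which Lemma~\ref{lem:main2} lacked) so that a recursively-drawn subtree can be attached with its root flush against the spine-layer, wasting no extra row there. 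A secondary subtlety is ensuring the skirt-path of $H^-(T)$ visits the leaves in exactly the left-to-right order dictated by the embedding when it passes from one hanging subtree to the next across the spine; this is where the $\W$/$\E$ ray conditions (d2) on the leftmost and rightmost leaves of each subtree are exploited, and where the choice among the five interface types for each child is forced.
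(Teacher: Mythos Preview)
Your decomposition is along the \emph{entire} spine $P_r=\langle v_0,\dots,v_k\rangle$, and you plan to draw the spine ``monotonically down the left side'' with the hanging subtrees stacked along it.  The arithmetic you rely on --- ``only one subtree needs the full $6\rpw(T')-9$ budget and the others fit in constant-width bands'' --- is false.  In the spine decomposition $\calT(T,P_r)$ \emph{every} hanging subtree has $\rpw\le\rpw(T)-1$, but arbitrarily many of them can attain this bound simultaneously (just picture a spine with a copy of the same depth-$(\rpw(T){-}1)$ subtree hanging at every vertex).  With your vertical layout the heights add, so two such subtrees already blow the budget, and there is no mechanism in your outline to place them side-by-side in the same layer band while still threading the skirt-path through them $y$-monotonically.

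The paper avoids this trap by \emph{not} decomposing along the whole spine.  It recurses only on the children $c_1,\dots,c_d$ of the root, laid out \emph{horizontally} in a common band of layers; at most one of these children (the spine-child $c_s$) can have $\rpw(T_{c_s})=\rpw(T)$, and for that one an inner induction on tree size (not on $\rpw$) is used, with the crucial trick that $T_{c_s}$ gets an \LDL-drawing so that its root sits in layer $h-1$ and the edge $(r,c_s)$ can be routed $y$-monotonically around the siblings $c_{s+1},\dots,c_d$.  All other children have strictly smaller $\rpw$ and therefore fit in layers $6,\dots,h$ (or $3,\dots,h{-}3$, etc.) placed next to each other left-to-right.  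The ``$-9$'' constant comes out of exactly this placement, together with the freedom to choose \LUL/\LDL for the spine-child; your plan of putting the root flush against a vertically-drawn spine does not recover it.

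In short: drop the full-spine layout, recurse at the root with an inner size-induction for the spine-child, and place the children in horizontal sequence rather than a vertical stack.
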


Before proving this lemma, we briefly argue why it suffices.

\begin{theorem}
\label{thm:straight}
Every regular Halin-graph $H(T)$ has a straight-line drawing of height at most $12 \pw(T) -3$, and
every extended Halin-graph $H(T)$ has a straight-line drawing of height at most $12 \pw(T) -1$.

Specifically, the height is $6\rpw(T)-9+2\chi_{ext}(T)$ for a suitable choice of root for $T$.
\end{theorem}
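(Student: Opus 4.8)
The plan is to derive Theorem~\ref{thm:straight} directly from Lemma~\ref{lem:straight} and Lemma~\ref{lem:main2}, together with the known relationship between $\pw(T)$ and $\rpw(T)$. First I would handle the choice of root: by the fact quoted in Section~\ref{sec:definitions}, any tree $T$ can be rooted at a leaf so that $\pw(T)\leq \rpw(T)\leq 2\pw(T)+1$. Fix such a root. Then $H^-(T)$ is $H(T)$ with exactly one edge missing (the edge $(\ell_L,\ell_R)$, since the root is a leaf and hence has degree~$1$, so we are in the case where the path $\langle \ell_L, r, \ell_R\rangle$ is omitted --- wait, more carefully: rooting at a leaf means $r$ has degree~$1$, so $H^-(T)$ omits the path $\langle\ell_L,r,\ell_R\rangle$; but $r$ has degree $1$ in $T$ and degree $2$ only counting... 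I need to be careful, but in any case adding back the missing cycle-edge(s) is the last step).

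Next I would split into cases on $\rpw(T)$. If $\rpw(T)\le 1$ then $T$ is a rooted path; but a Halin-graph skeleton has at least three leaves, so this case is vacuous (or handled trivially) once we have rooted at a leaf --- actually a rooted path has only one leaf, contradiction, so $\rpw(T)\ge 2$ always. If $\rpw(T)=2$, apply Lemma~\ref{lem:main2} with $\alpha_L=\alpha_R=\W$ to get a plane $y$-monotone $\LLL$-drawing of $H^-(T)$ of height $3+2\chi_{ext}(T)$. If $\rpw(T)\ge 3$, apply Lemma~\ref{lem:straight} with, say, $\alpha_L\alpha_r\alpha_R=\LUL$ to get a plane $y$-monotone drawing of $H^-(T)$ of height $6\rpw(T)-9+2\chi_{ext}(T)$. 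In both cases I want to observe that the bound $6\rpw(T)-9+2\chi_{ext}(T)$ stated in the theorem is an upper bound: for $\rpw(T)=2$ this is $3+2\chi_{ext}(T)$, matching exactly, and for $\rpw(T)\ge 3$ it matches Lemma~\ref{lem:straight} directly. So in all cases $H^-(T)$ has a plane $y$-monotone drawing of height $6\rpw(T)-9+2\chi_{ext}(T)$ (using $\W$-type so the two $\chi(\cdot{=}\E)$ terms vanish).

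Then I would add back the missing cycle-edge(s) to turn $H^-(T)$ into $H(T)$. Since the drawing is an $\alpha_L\W\alpha_R$-drawing with $\alpha_L=\alpha_R=\W$, leaf $\ell_R$ lies in the topmost layer with an unobstructed westward ray and $\ell_L$ lies in the bottommost layer with an unobstructed westward ray, and the root $r$ is alone in the leftmost column. Hence I can route the edge $(\ell_L,\ell_R)$ (and, if needed, the two edges through $r$) as $y$-monotone curves that go out west from $\ell_R$ and $\ell_L$, around the left side past the column of $r$, staying within the existing layers; this adds no new layers. (If the root had degree~$\ge 2$ in $T$ then $H^-(T)$ misses only the single edge $(\ell_L,\ell_R)$ and the same routing works.) This yields a plane $y$-monotone poly-line drawing of $H(T)$ of height $6\rpw(T)-9+2\chi_{ext}(T)$. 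Finally, invoke Pach--T\'oth \cite{PT04} / Eades et al.~\cite{EFLN06} to convert this $y$-monotone poly-line drawing into a straight-line planar drawing of the same height, giving the ``specifically'' clause. For the numeric bounds, substitute $\rpw(T)\le 2\pw(T)+1$: for a regular Halin-graph $\chi_{ext}(T)=0$, so height $\le 6(2\pw(T)+1)-9 = 12\pw(T)-3$; for an extended Halin-graph $\chi_{ext}(T)\le 1$, so height $\le 12\pw(T)-3+2 = 12\pw(T)-1$.

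The main obstacle I expect is verifying carefully that adding back the missing cycle-edge(s) can genuinely be done inside the existing layers while keeping the drawing $y$-monotone and plane --- in particular making sure the westward rays of $\ell_L$ and $\ell_R$ plus the solitary leftmost column of $r$ really do leave a free ``outer'' region on the left through which these one or two edges can be threaded without crossing anything and without needing an extra layer. One should also double-check the edge-count of $H^-(T)$ versus $H(T)$ under rooting at a leaf (one edge $(\ell_L,\ell_R)$ versus a two-edge path through $r$), and confirm that the degenerate small cases (e.g. $T$ with exactly three leaves, or $\rpw(T)=2$ with $\chi_{ext}=0$) are covered by Lemma~\ref{lem:main2}. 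Everything else is bookkeeping with the inequalities.
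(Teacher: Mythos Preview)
Your proposal is correct and follows essentially the same approach as the paper's proof: root at a leaf so that $\rpw(T)\le 2\pw(T)+1$, invoke Lemma~\ref{lem:main2} or Lemma~\ref{lem:main} with $\alpha_L=\alpha_R=\W$ to get a $y$-monotone drawing of $H^-(T)$ of height $6\rpw(T)-9+2\chi_{ext}(T)$, route the two missing edges $(\ell_L,r)$ and $(\ell_R,r)$ west to the leftmost column and then vertically to $r$, and finally straighten via \cite{PT04,EFLN06}. The only cleanup needed is that, since you always root at a leaf, the missing piece is \emph{always} the two-edge path $\langle \ell_L,r,\ell_R\rangle$ (never just $(\ell_L,\ell_R)$), so you can drop the hedging and route each of these two edges exactly as the paper does.
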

\begin{proof}
Root the skeleton $T$ at a leaf $r$ such that $\rpw(T)\leq 2\pw(T)+1$
\cite{OPTII}.
Apply Lemma~\ref{lem:main2} or \ref{lem:main}
to this rooted version of $T$ to obtain a $y$-monotone \LLL-drawing of $H^-(T)$
of height $6\rpw(T)-9+2\chi_{ext}(T)\leq 12\pw(T)-3+2\chi_{ext}(T)$.
The westward ray from $\ell_L$ is unobstructed;
we can draw $(\ell_L,r)$ along this ray
until the leftmost column and then go up to $r$.
Likewise we can draw $(\ell_R,r)$
to obtain a $y$-monotone drawing of $H(T)$.
This can be transformed into a straight-line drawing
of the same height \cite{PT04,EFLN06}.
\end{proof}

The height of Theorem~\ref{thm:straight} is (roughly) a factor 2 worse than the
height in Corollary~\ref{cor:6pw}.  However, in terms of
{\em rooted} pathwidth, Theorem~\ref{thm:straight} is tight,
see Theorem~\ref{thm:lower_regular_rpw} and \ref{thm:lower_extended_rpw}.

\medskip

The rest of this section is dedicated to the proof of
Lemma~\ref{lem:main}.
It suffices to show how to construct
a \LUL-drawing of height $h:= 6\rpw(T)-9+2\chi_{ext}(T)\geq 9$;
the construction of a \LDL-drawing is symmetric and all other cases are
covered by Claim~\ref{cl:turnRight}.

We use the following notations throughout.
Let $r$ be the root of $T$, let $d$ be its degree, and let $c_1,\dots,c_d$
be the children of the root, in order.
We use the notation $\ell_L^i$ and $\ell_R^i$ (for $i=1,\dots,d$) for
the leftmost and rightmost leaf of $T_{c_i}$.
Recall that $\Sd(T)=\rpw(T)\geq 3$. Let $c_s$ be the spine-child of the root;
by definition of Horton-Strahler number this is the {\em only} child whose subtree
could have the same rooted pathwidth as $T$.
If $\rpw(T_{c_s})<\rpw(T_{c_s})$ then (to avoid some cases) we re-assign $s=d$.
Whether or not we reassigned, we hence have $\rpw(T_{c_i})<\rpw(T)$ for all $i\neq s$.

We prove the lemma by induction on $\rpw(T)$, with the base case at $\rpw(T)=3$.
We do an inner induction on the size of the tree, and use as base case the
case $\rpw(T_{c_s})<3$ (this must occur since at the leaf of the spine the rooted
pathwidth is 1).  Much of the construction will be the same for base case and induction
step, and we therefore prove them together.

\paragraph{Drawing subtrees up to $T_{c_s}$: }
The following algorithm (illustrated in Figure~\ref{fig:base})
states which drawing to use for each subtree and how to combine them.
We build the drawing
left-to-right, beginning with the root and then adding the subtrees at
the children.

\begin{figure}[ht]
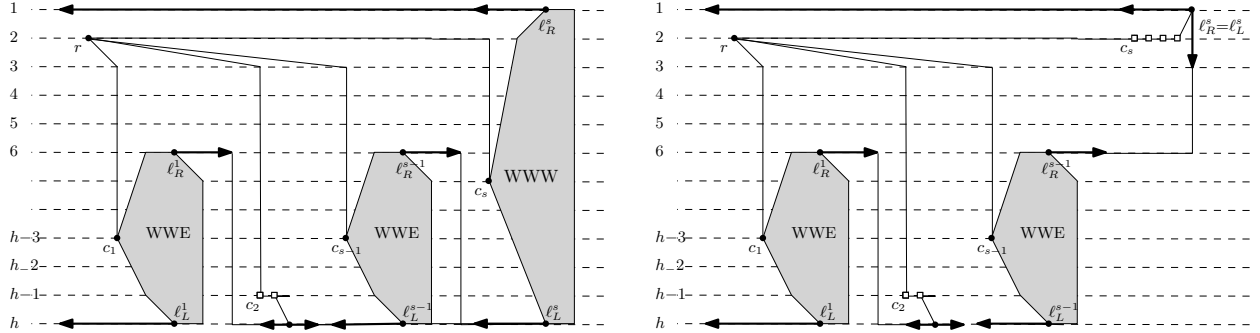

	\includegraphics[width=0.48\linewidth,page=6]{construction2.pdf}
\hspace*{\fill}
	\includegraphics[width=0.48\linewidth,page=7]{construction2.pdf}
\caption{The constructions if $c_s$ is the rightmost child.}
\label{fig:hIsLast}
\label{fig:base}
\end{figure}

\begin{enumerate}
\item Place the root $r$ in the leftmost column of layer 2.
        We {\em reserve} the eastward ray from $r$
	for edge $(r,c_d)$, i.e., we will make sure that
	nothing is added to intersect it until the edge-segment is completed.

\item \label{step:first}
	For $i=1,\dots,d-1$, if $\rpw(T_{c_i})\geq 2$, then
	recursively (or via Lemma~\ref{lem:main2}) obtain an \LLR-drawing $\Gamma_i$.
	This has height at most $6\rpw(T_{c_i})-9+1+2\chi_{ext}(T_{c_i})$.
	Since $\rpw(T_{c_i})\leq \rpw(T)-1$ and $\chi_{ext}(T_{c_i})\leq \chi_{ext}(T)$
	this is at most $6\rpw(T)-6-9+1+2\chi_{ext}(T)=h-5$.
	If needed we can use Claim~\ref{cl:widenUU}
	to make $\Gamma_i$ have height exactly $h-5$.
        Place $\Gamma_i$ in layers $6,\dots,h$, to the right
	of everything drawn thus far.

	If $\rpw(T_{c_i})=1$, then $T_i$ is a rooted path.  Place its leaf
	on layer $h$ and its degree-2 vertices (if any) on layer $h-1$, with
	the root leftmost.

	We place (parts of) the connector-edges of $T_{c_i}$ as follows:
	\begin{itemize}
	\item Connect $c_i$ to $r$ by going upward to layer 3 and then (via a bend) to layer 2.
	\item
	We draw part of the connector-edge $(\ell_R^i,\ell_L^{i+1})$
	by going eastward from $\ell_R^i$ (in its layer)
	beyond $\Gamma_i$, and adding (if needed) a bend to go downward to layer $h$.
	The eastward ray in layer $h$ from here is reserved for edge $(\ell_R^i,\ell_L^{i+1})$.
	\item
	For $i=1$, $\ell_L^i$ is the leftmost leaf of $T$; its westward ray
	is unobstructed as required.
	For $i>1$, leaf $\ell_L^i$ was placed on the ray reserved for edge
	$(\ell_R^{i-1},\ell_L^{i})$, which is hence completed.
        Since this edge receives no further bend at $\ell_L^{i}$, and was drawn
	$y$-monotonically extending from $\ell_R^{i-1}$, it is drawn $y$-monotone.
	\end{itemize}
\item \label{step:lastGood} To handle the spine-child $c_s$ we have three cases.

	Assume first that $s=d$ and $\rpw(T_{c_s})\geq 2$.  Recursively
	(or via Lemma~\ref{lem:main2}) obtain a \LLL-drawing $\Gamma_s$
	of $H^-(T_{c_s})$
	and increase its height (if needed) to be $h$.
        Place $\Gamma_s$ in layers $1,\dots,h$, to the right
	of everything drawn thus far.
	Connect $c_s$ to $r$ by going upward to layer 2 and
	then horizontally to $r$.  Edge $(\ell_R^{s-1},\ell_L^s)$
	is completed automatically, and $\ell_R^s$ is the rightmost
	leaf and its eastward ray is unobstructed.

	Assume next that $s=d$ and $\rpw(T_{c_s})=1$.  (This can happen if we
	re-assigned $s$.)  Place the leaf of $T_{c_s}$ on layer $1$ and all
	other vertices on layer 2, with the root leftmost.  (If $|T_{c_s}|=1$
	then place a bend in row $2$.)  Edge $(r,c_s)$ is
	completed automatically,
	and $\ell_R^s$ has an eastward unobstructed ray.
	To route connector-edge $(\ell_R^{s-1},\ell_L^s)$, we undo the partial
	routing that we did earlier; instead we go eastward from
	$\ell_R^{d-1}$ and then upward to $\ell_L^d=\ell_R^d$ in row 1.

	Assume finally that $s<d$, i.e., $c_s$ is not the rightmost child.
	The drawing here is much more complicated and will be explained below.
\end{enumerate}

\paragraph{Drawing the remaining subtrees if $s<d-2$: }

Our construction is done if $s=d$, so assume not.
By the re-assignment, this implies $\rpw(T_{c_s})=\rpw(T)$.
In particular, we are not in the base case of the inner induction, and
we know that $\rpw(T_{c_s})\geq 3$.  This allows us (crucially) to
choose an \LDL-drawing for $H^-(T_{c_s})$, which in turn permits us
to route $(r,c_s)$ $y$-monotonically while leaving sufficiently
much space for $T_{c_{s+1}},\dots,T_{c_d}$.

We assume for now that $s\leq d-2$; the case $s=d$ has been dealt with above
and the case $s=d-1$ is not difficult but requires a variation that will
be explained below.

\begin{figure}[ht]
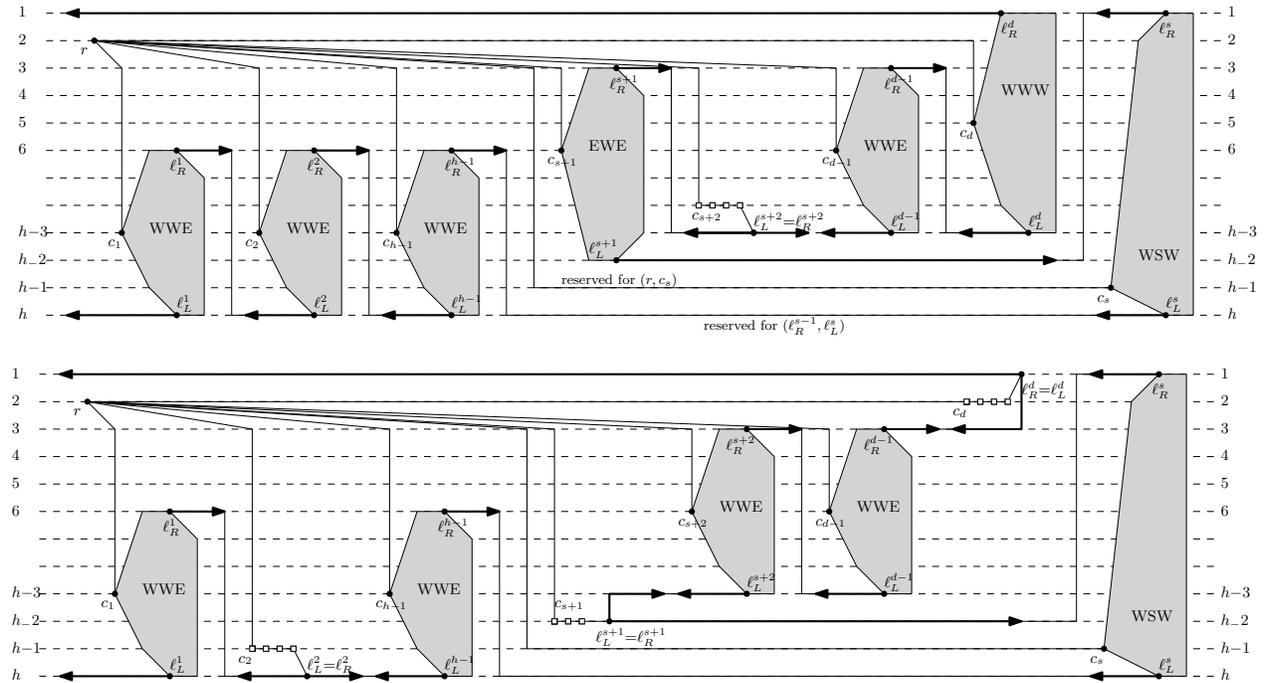

\hspace*{\fill}
	\includegraphics[width=\linewidth,page=8]{construction2.pdf} \\[2ex]
	\includegraphics[width=\linewidth,page=9]{construction2.pdf}
\hspace*{\fill}
\caption{(Top) The construction if $s\leq d-2$.
(Bottom) The construction again, with other subtrees as rooted paths.}
\label{fig:construction}
\label{fig:rpw1normal}
\label{fig:rpw1hplus1}
\label{fig:rpw1d}
\end{figure}

The drawing algorithm continues as follows if $s\leq d-2$:
\begin{enumerate}
\addtocounter{enumi}{3}
\item Draw parts of the edge $(r,c_s)$, by going from $r$ to a bend
	in layer 3 (to the right of everything drawn thus far), then down to another bend in layer
	$h-1$.  We reserve the eastward ray in layer $h-1$ from this bend for edge $(r,c_s)$.
\item \label{step:RLR}
	By $s\leq d-2$ child $c_{s+1}$ exists and is not $c_d$.

	If $\rpw(T_{c_{s+1}})\geq 2$, then let $\Gamma_{s+1}$ be a recursively obtained \RLR-drawing of $T_{c_{s+1}}$;
	since $\rpw(T_{c_{s+1}})< \rpw(T)$ this has height at most $h-4$.  Increase its
	height (if needed) so that it has height exactly $h-4$, and place $\Gamma_{s+1}$ in layers $3,\dots,h-2$ to the right
	of everything drawn thus far.    If $T_{c_{s+1}}$ is a rooted path, then place all its vertices in layer $h-2$.
	Draw the connector-edges as follows:

	\begin{itemize}
	\item Connect $c_{s+1}$ to $r$ by going upward to layer 3 and then (via a bend) to layer 2.
	\item Leaf $\ell_{L}^{s+1}$ is placed in layer $h-2$; we reserve its eastward ray for edge
$(\ell_L^{s+1},\ell_R^{s})$.
	\item If $T_{c_{s+1}}$ is not a rooted path, then leaf $\ell_{R}^{s+1}$ has an unobstructed eastward ray; we begin drawing
	edge $(\ell_R^{s+1},\ell_L^{s+2})$ by going eastward
	from $\ell_R^{s+1}$, then vertically to layer $h-3$ and reserving the eastward ray
	in layer $h-3$ for $(\ell_R^{s+1},\ell_L^{s+2})$.

	If $T_{c_{s+1}}$ is a rooted path, then $\ell_R^{s+1}$ is in layer $h-2$.  We go up one unit to layer $h-3$
	and reserve the eastward ray for  $(\ell_L^{s+1},\ell_R^{s})$.
	\end{itemize}

\item For $i=s+2,s+3,\dots,d-1$, we process $T_{c_i}$ and its connector-edges
	as we did in Step~\ref{step:first}, only we put the drawing three levels higher.
\item \label{step:LULd}
	We process $T_{c_d}$ very similarly to Step~\ref{step:lastGood}.

	So assume first that $\rpw(T_{c_d})\geq 2$.  Recursively
        (or via Lemma~\ref{lem:main2}) obtain an \LLL-drawing $\Gamma_d$
	of $H^-(T_{c_d})$ of height at most $h-6$.  Increase its height to be $h-3$.
        Place $\Gamma_d$ in layers $1,\dots,h-3$, to the right
	of everything drawn thus far.
	Connect $c_d$ to $r$ by going upward to layer 2 and
	then horizontally to $r$.  Edge $(\ell_R^{d-1},\ell_L^d)$
	is completed automatically, and $\ell_R^d$ is the rightmost
	leaf and its eastward ray is unobstructed.

	Now assume that $T_{c_d}$ is a rooted path.
	Place the leaf of $T_{c_d}$ on layer $1$ and all
	other vertices on layer 2, with the root leftmost (if $|T_{c_s}|=1$,
	then place a bend in row $2$).  Edge $(r,c_s)$ is
	completed automatically, and $\ell_R^d$ (which
	is the rightmost leaf) has an eastward unobstructed ray.
	To route connector-edge $(\ell_R^{d-1},\ell_L^d)$, we have two
	cases.  If $d>s+2$ and/or $\rpw(T_{c_{d-1}})\geq 2$, then undo the
	partial routing that we did earlier; instead we go eastward from
	$\ell_R^{d-1}$ and then upward to $\ell_L^d=\ell_R^d$ in row 1.
	If $d=s+2$ and $\rpw(T_{c_{d-1}})=1$, then the partial drawing
	of $(\ell_R^{d-1},\ell_L^d)$ is unusual since
	$\ell_R^{d-1}=\ell_R^{s+1}$ was placed in layer $h-2$, and the
	edge was routed by going upward to layer $h-3$.   We now go eastward
	from there and then upward to layer 1. This is the only situation where
	a connector-edge receives bends when placing both endpoints, but one
	verifies that this route is $y$-monotone.

\item Recall from Step~\ref{step:RLR} that the eastward ray in layer $h-2$ was reserved for connector-edge $(\ell_L^{s+1},\ell_R^s)$.
	We now add a bend in it to the right of everything drawn thus far, then go vertically to
	layer 1 and reserve the eastward ray.

\item Finally, recursively obtain an \LDL-drawing $\Gamma_s$ of $T_{c_s}$ of
	height $h$.
	(This exists since $c_s$ is not the rightmost child, hence
	$\rpw(T_{c_s})=\rpw(T)\geq 3$ and induction can be applied.)
	Place $\Gamma_s$ to the right of everything drawn thus far.
	Since $\ell_R^s,c_s,\ell_L^s$ are in layers $1,h-1$ and $h$, respectively, this completes
	the connector-edges of $T_{c_s}$.
\end{enumerate}

\paragraph{The case $s=d-1$:}
	Previously, we used an
	\RLR-drawing for $c_{s+1}$ and an \LUL-drawing for $c_d$ in Steps~\ref{step:RLR} and \ref{step:LULd}.  If $s=d-1$,
	then $c_{s+1}=c_d$ takes on the roles of both of these drawings.  The following step
(see Figure~\ref{fig:rpw1normal})
replaces step 5 and 6 if $s=d-1$.

\begin{figure}[ht]
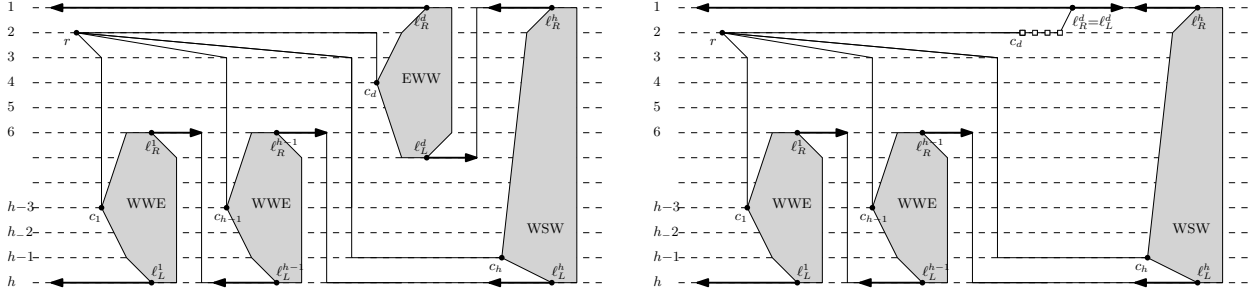

	\includegraphics[width=0.48\linewidth,page=11]{construction2.pdf}
\hspace*{\fill}
	\includegraphics[width=0.48\linewidth,page=10]{construction2.pdf}
\caption{The constructions if $s=d-1$.}
\label{fig:hIsdMinus1}
\end{figure}

\begin{itemize}
%
%
\item[5'.]

	If $\rpw(T_d)\geq 2$, then let $\Gamma_d'$ be an \RLL-drawing of
	$T_{c_d}$ of height $h-5$, and place it in layers $1,\dots,h-5$.  All
	connector-edges can be completed as before, with the only change
	that $(\ell_L^d,\ell_R^{s})$ now uses layer $h-5$ rather than $h-2$.
	If $\rpw(T_d)=1$, then place $\ell_L^d=\ell_R^d$ in layer 1 and
	the rest of $T_d$ in layer 2.
	See Figure~\ref{fig:hIsdMinus1}.
\end{itemize}

We have constructed a \LUL-drawing in all cases,
and one easily verifies
that all edges are drawn $y$-monotonically,
hence Lemma~\ref{lem:straight} and with it Theorem~\ref{thm:straight} holds.

It is worth mentioning that this poly-line drawing can easily
be found in linear time, as long as coordinates of vertices are
expressed initially with via offsets to their parents,
and evaluated to their final value only after finishing the
construction of the entire tree.

\subsection{Halin-graphs with maximum degree 3}

Observe that in Figures~\ref{fig:hIsLast} and \ref{fig:hIsdMinus1} (where $s\in \{d-1,d\}$) we are ``wasting'' layers; the same construction could have been done with three fewer layers.  This leads to the following.

\begin{lemma}
  \label{lem:3rpw}
Let $T$ be a rooted \emph{binary} tree with
$\rpw(T)\geq 2$, and let $\alpha_L,\alpha_r,\alpha_R$ be any of the
combinations \LLR, \RLL, \RLR, \LUL and \LDL.
Then $H^-(T)$ has a plane $y$-monotone $\alpha_L \alpha_r \alpha_R$-drawing of height
$3\rpw(T)-3+\chi(\alpha_L{=}\E)+\chi(\alpha_R{=}\E)+\chi_{ext}(T)$.
\end{lemma}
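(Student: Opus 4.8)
The plan is to mirror the proof of Lemma~\ref{lem:main} almost verbatim, tracking the savings that come from $T$ being binary. The key observation, already pointed out in the text, is that when $T$ has maximum degree~$3$ the root has at most two children, so the generic ``middle'' case $s\le d-2$ never arises: we are always in one of the situations of Figure~\ref{fig:hIsLast} or Figure~\ref{fig:hIsdMinus1}, namely $s=d$ (one or two children, spine-child last) or $s=d-1=1$ (two children, spine-child first). In those figures the construction places the non-spine subtree in layers $6,\dots,h$ (resp.\ $1,\dots,h-5$) while only layers $1,\dots,5$ (resp.\ $h-4,\dots,h$) are actually used for the root, the connector-edges, and the turn of $(\ell_R^{s-1},\ell_L^s)$. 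So three of those five buffer layers are slack, and the recursion can be run with a per-level increment of $3$ instead of $6$. I would restate the induction hypothesis as Lemma~\ref{lem:3rpw} itself (with the ``$+\chi_{ext}(T)$'' instead of ``$+2\chi_{ext}(T)$'', since in a binary tree a degree-2 vertex needs only one extra layer on one side of a spine-layer, not a layer on each side), and prove it by the same double induction: outer induction on $\rpw(T)$ with base case $\rpw(T)=2$ handled by a binary-specialized version of Lemma~\ref{lem:main2}, and inner induction on $|T|$ with base case $\rpw(T_{c_s})<2$.

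Concretely, I would carry out the following steps. First, establish the base case $\rpw(T)=2$: re-run the proof of Lemma~\ref{lem:main2}, but observe that when $T$ is binary each component of $\calT(T,P)$ hanging off the spine is a single leaf or a path of degree-2 vertices attached on one side, so a $\LLL$-drawing of height $3+\chi_{ext}(T)$ suffices (one spine-layer, one layer above, one below, plus one extra layer on at most one side if $\chi_{ext}(T)=1$); then Claim~\ref{cl:turnRight} upgrades this to any of the five drawing types at the stated cost. Second, for the induction step, note $d\le 3$ gives only the cases $s=d$ and $s=d-1$; I would reuse Steps~1--3 and the $s=d-1$ variant (Step~5$'$) from the proof of Lemma~\ref{lem:main}, but re-examine the layer budget. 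In the $s=d$ case the recursively drawn non-spine child $\Gamma_i$ ($i<d$, which here means $i=1$) needs height at most $3\rpw(T_{c_i})-3+1+\chi_{ext}(T_{c_i}) \le 3(\rpw(T)-1)-3+1+\chi_{ext}(T) = h-3$, so placing it in layers $4,\dots,h$ (rather than $6,\dots,h$) leaves layers $1,2,3$ for the root and the connector-edge turns — exactly enough, since the root sits in layer~$2$, edge $(c_1,r)$ turns through layer~$3$, and $(\ell_R^1,\ell_L^2)$ is routed along layer~$h$. The spine-child $\Gamma_s$ is a $\LLL$-drawing padded to height~$h$ via Claim~\ref{cl:widenUU}. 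The $s=d-1$ case is symmetric with the child in layers $1,\dots,h-3$ and the spine-child $\LDL$-drawing of full height; the single awkward connector-edge (the one that may receive a bend at each endpoint) is checked to remain $y$-monotone exactly as in the original proof.

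The main obstacle I anticipate is bookkeeping rather than conceptual: verifying that with the tightened budget of only three buffer layers the partially-routed connector-edges $(\ell_R^{i-1},\ell_i)$, the edge $(r,c_s)$, and the ``turn'' bends of Step~\ref{step:RLR}/Step~5$'$ still fit without collision and stay $y$-monotone — in the full-degree proof there was comfortable slack, and here it is exhausted, so one must confirm layer by layer (e.g.\ that the reserved rays in layers $h-1$, $h-2$, $h-3$, $1$, $2$, $3$ are pairwise on distinct layers and do not cross the drawn subtrees). I would also need to double-check the $\chi_{ext}$ accounting: a degree-2 vertex of $T$ that is the root or lies on the current spine costs one extra layer total (not two), and that this single extra layer is absorbed by the ``$+\chi_{ext}(T)$'' term uniformly through the recursion, using $\chi_{ext}(T_{c_i})\le\chi_{ext}(T)$ and Claim~\ref{cl:widenUU} to pad subtree drawings that happen to have $\chi_{ext}(T_{c_i})=0$ while $\chi_{ext}(T)=1$. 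Once these fit-checks go through, the height bound $3\rpw(T)-3+\chi(\alpha_L{=}\E)+\chi(\alpha_R{=}\E)+\chi_{ext}(T)$ follows by the same arithmetic as in Lemma~\ref{lem:main}, with $6$ replaced by $3$ throughout.
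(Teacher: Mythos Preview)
Your plan is correct and essentially identical to the paper's proof: the paper also observes that $d\le 2$ forces $s\in\{d,d-1\}$, reuses the constructions of Figures~\ref{fig:hIsLast} and~\ref{fig:hIsdMinus1} with the tighter height $h=3\rpw(T)-3+\chi_{ext}(T)$, and handles the base case $\rpw(T)=2$ separately.  Two small slips to fix.  First, your arithmetic: $3(\rpw(T)-1)-3+1+\chi_{ext}(T)=3\rpw(T)-5+\chi_{ext}(T)=h-2$, not $h-3$, so the non-spine subtree occupies layers $3,\dots,h$ (exactly as the paper does), not $4,\dots,h$; this actually makes your anticipated fit-check easier, since only two buffer layers are used.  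Second, your base-case justification is not quite right: in a binary tree the hanging rooted paths can still occur on \emph{both} sides of the spine, so ``one extra layer on at most one side'' does not follow.  The paper instead gives a different 4-layer construction that zigzags the spine itself between layers~2 and~3 (placing each spine-vertex on layer~2 or~3 according to whether its spine-child is the right or left child), and then takes the better of this and the generic $3+2\chi_{ext}(T)$ construction to get $3+\chi_{ext}(T)$.
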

\begin{proof}
We again proceed by induction and show that there exists
an \LUL-drawing of height $3\rpw(T)-2$ (all other drawing-types
are symmetric or obtained with Claim~\ref{cl:turnRight}).
We only sketch the necessary changes to the previous algorithm
here; the reader should be able to fill in the details using Figure~\ref{fig:maxDeg3}.

\begin{figure}[ht]
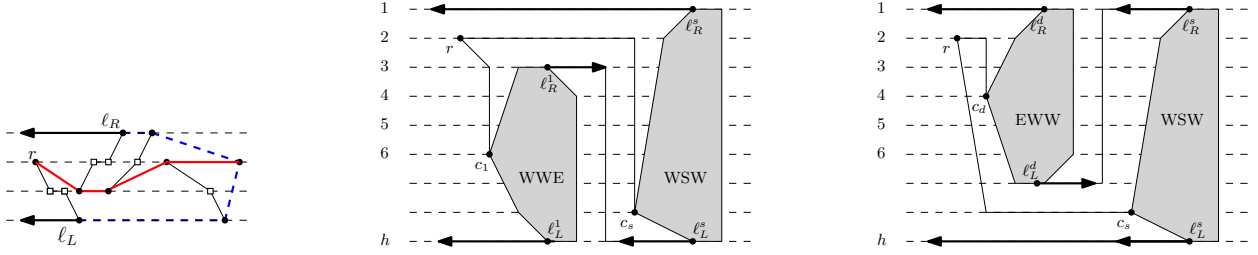

	\includegraphics[width=0.2\linewidth,page=13]{construction2.pdf}
\hspace*{\fill}
	\includegraphics[width=0.3\linewidth,page=12]{construction2.pdf}
\hspace*{\fill}
	\includegraphics[width=0.3\linewidth,page=14]{construction2.pdf}
\caption{The constructions if the maximum degree is 3.}
\label{fig:maxDeg3}
\end{figure}

The previous base case construction gives
$3+2\chi_{ext}(T)$ layers.
We can also achieve at most 4 layers, by placing a spine-vertex on
layer 2 if the spine-child is the right child and on layer 3 otherwise.
Using the better of the two (depending on $\chi_{ext}(T)$) we hence have
$3+\chi_{ext}(T)=3\rpw(T)-3+\chi_{ext}(T)$ layers.
In the induction step, we have $d\leq 2$ children and
hence always either $s=d$ or $s=d-1$.
So construct a drawing of $H^-(T)$ as in Figure~\ref{fig:hIsLast} or Figure~\ref{fig:hIsdMinus1},
except use $h=3\rpw(T)-3+\chi_{ext}(T)$ and place drawing $\Gamma_i$ for $i<s$ in layers $3,\dots,h$.
\end{proof}

If an extended Halin-graph has maximum degree 3, then its skeleton $T$ is binary when rooting it at a leaf.
Since we can do so and achieve $\rpw(T)\leq 2\pw(T)+1$, this implies
as in the proof of Theorem~\ref{thm:straight}:

\begin{theorem}
Every extended Halin-graph with maximum degree 3 and
skeleton $T$ has a straight-line drawing of
height at most $6\pw(T) +\chi_{ext}(T)$.
\end{theorem}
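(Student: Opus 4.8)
The plan is to reduce this theorem to the preceding results in exactly the same way Theorem~\ref{thm:straight} was derived from Lemma~\ref{lem:main}, only now invoking Lemma~\ref{lem:3rpw} instead. The key structural observation is that a maximum-degree-3 (extended) Halin-graph $H(T)$ has a skeleton $T$ that, once rooted at a leaf, becomes a \emph{binary} rooted tree: every non-root vertex of $T$ has total degree at most $3$, so at most $2$ children, and the chosen leaf-root has exactly one child. This is the hypothesis needed by Lemma~\ref{lem:3rpw}.

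First I would choose the root carefully: by the result cited as \cite{OPTII}, we may root $T$ at a leaf $r$ so that $\rpw(T)\leq 2\pw(T)+1$. Since $T$ is now a binary rooted tree, I next check the value of $\rpw(T)$. If $\rpw(T)\geq 2$, apply Lemma~\ref{lem:3rpw} with the combination $\LLL$ to obtain a plane $y$-monotone $\LLL$-drawing of the skirted graph $H^-(T)$ of height $3\rpw(T)-3+\chi_{ext}(T)$. Then, exactly as in the proof of Theorem~\ref{thm:straight}, the unobstructed westward rays from $\ell_L$ and $\ell_R$ (guaranteed by condition (d2) for the $\LLL$-type) allow us to route the two missing cycle-edges $(\ell_L,r)$ and $(\ell_R,r)$ $y$-monotonically: run each along its horizontal ray to the leftmost column and then vertically to $r$. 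This completes $H(T)$ from $H^-(T)$ without increasing the height, and the result of Pach and T\'oth \cite{PT04} (or Eades et al.~\cite{EFLN06}) converts the $y$-monotone drawing into a straight-line one of the same height. Substituting $\rpw(T)\leq 2\pw(T)+1$ gives height at most $3(2\pw(T)+1)-3+\chi_{ext}(T)=6\pw(T)+\chi_{ext}(T)$, as claimed.

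The remaining obstacle is the small-pathwidth corner case $\rpw(T)=1$, which Lemma~\ref{lem:3rpw} does not cover (it assumes $\rpw(T)\geq 2$). But $\rpw(T)=1$ means $T$ is a rooted path, so it has exactly two leaves; then $H(T)$ is a multigraph on very few vertices (a triangle, possibly with degree-$2$ vertices subdivided, once we allow extended Halin-graphs) and admits a trivial straight-line drawing of constant height that is easily checked to satisfy the bound $6\pw(T)+\chi_{ext}(T)$ since $\pw(T)\geq 1$ in any nontrivial case and $\chi_{ext}$ absorbs the subdivisions. I expect this to be the only place requiring any genuine care; everything else is a verbatim repetition of the argument for Theorem~\ref{thm:straight}, with Lemma~\ref{lem:3rpw} supplying the sharper height for the binary case. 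I would phrase the final write-up as: ``Since we can root a maximum-degree-3 skeleton at a leaf so that it is binary and $\rpw(T)\leq 2\pw(T)+1$, this implies as in the proof of Theorem~\ref{thm:straight} (using Lemma~\ref{lem:3rpw} in place of Lemma~\ref{lem:main}) the stated bound.''
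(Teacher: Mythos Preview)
Your proposal is correct and matches the paper's own argument essentially verbatim: root $T$ at a leaf with $\rpw(T)\le 2\pw(T)+1$, observe that the rooted tree is binary, invoke Lemma~\ref{lem:3rpw}, and finish exactly as in Theorem~\ref{thm:straight}. One cosmetic point: Lemma~\ref{lem:3rpw} does not list $\LLL$ among its combinations, so you should request an $\LUL$-drawing (which is automatically an $\LLL$-drawing since $\alpha_r=\N$ specializes $\alpha_r=\W$); the height and the unobstructed westward rays you need are unchanged, and the corner case $\rpw(T)=1$ cannot occur under the paper's standing assumption that $T$ has at least three leaves.
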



\section{Lower bounds on the height}
\label{sec:lower}

Both papers that gave approximation algorithms for the height on
tree drawings \cite{Sud04,BB-JGAA} also constructed trees
where this bound is tight.    In particular, Batzill and Biedl
showed that there exists an ordered tree that requires height
$2\pw(T)+1$ in any ordered drawing \cite{BB-JGAA}.
In the same spirit, we now construct
Halin-graphs that need as much height as we achieve
with our algorithms.
\footnote{The graphs were chosen as to keep the argument
as simple as possible; like much smaller trees would do.}

\begin{definition}
\label{def:lower}
For $w\geq 1$, define $C_w$ and $F_w$ as follows:
\begin{itemize}
\item 
	$C_1$ consists of a path $\langle r,c\rangle$ (where $r$
	is the root) with a leaf attached at each of them on each side of the path.
	See Figure~\ref{fig:lowerBoundC2reg}.
\item $F_w$ is obtained from $C_w$ as follows.  Let $r$
	be the root of $C_w$.  Add a parent $p$ and a grand-parent $g$ to $r$ and make $g$
	the root.  Attach
	a leaf on each side of path $\langle p,g,r\rangle$ at each of $p,g$.
	See Figure~\ref{fig:lowerBoundFw}.
\item $C_{w+1}$ is obtained as follows.  Start with a {\em spine}
	consisting of vertices $s_1,\dots,s_S$ for some sufficiently large constant $S$
	that we will specify later, and make $s_1$ the root.
	At each spine-vertex except $s_S$,
	attach on each side of the spine $L$ copies of $F_w$ via its root, for some sufficiently
	large constant $L$ that we will specify later.
	See Figure~\ref{fig:lowerBoundCw}.
\end{itemize}
\end{definition}

\begin{figure}[ht]
\hspace*{\fill}
\begin{subfigure}[b]{0.15\linewidth}
\includegraphics[width=0.99\linewidth,page=7]{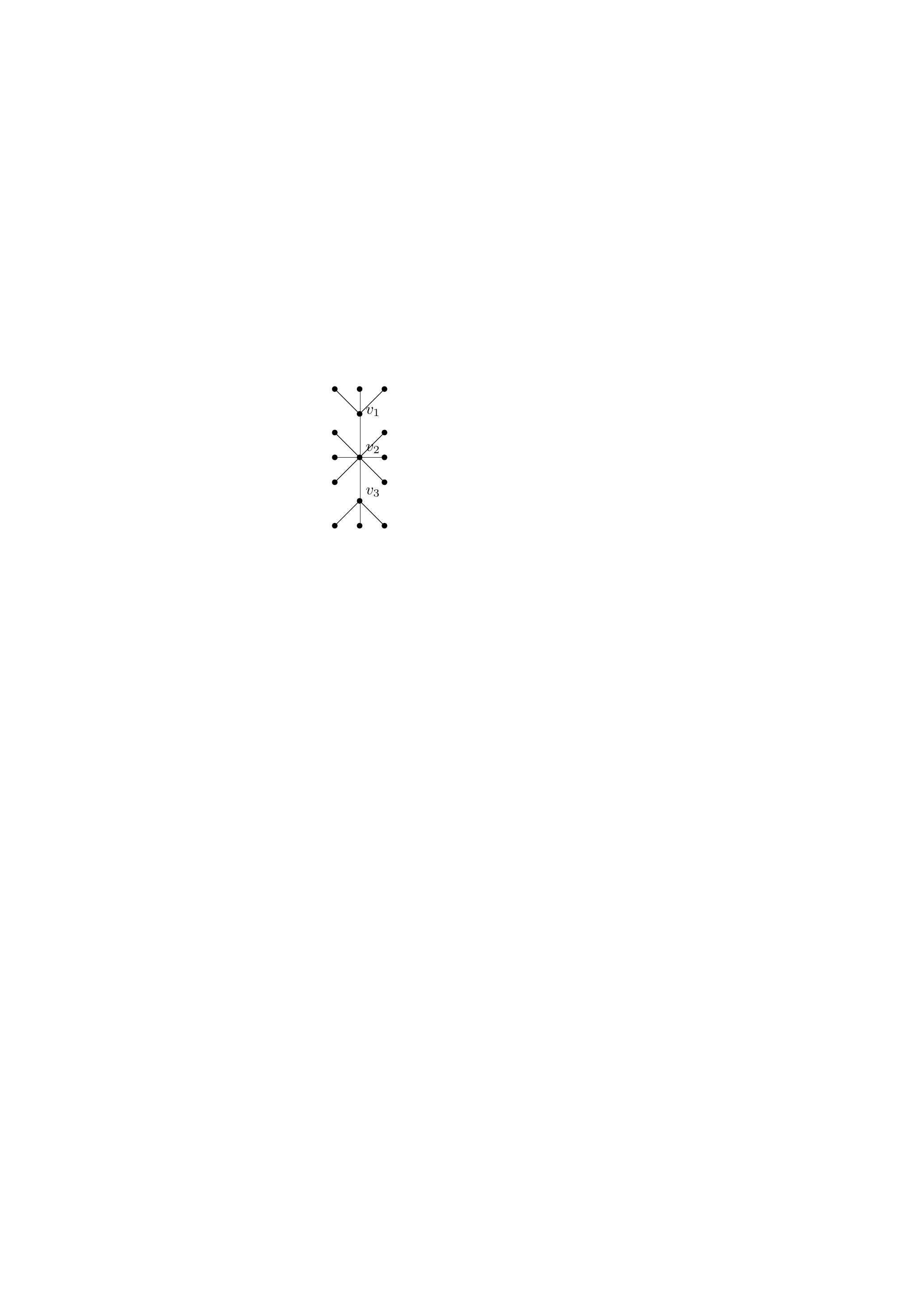}
\caption{}
\label{fig:lowerBoundC2reg}
\end{subfigure}
\hspace*{\fill}
\begin{subfigure}[b]{0.18\linewidth}
\includegraphics[width=0.99\linewidth,page=8]{lowerBound1.pdf}
\caption{}
\label{fig:lowerBoundFw}
\end{subfigure}
\hspace*{\fill}
\begin{subfigure}[b]{0.18\linewidth}
\includegraphics[width=0.99\linewidth,page=9]{lowerBound1.pdf}
\caption{}
\label{fig:lowerBoundCw}
\end{subfigure}
\hspace*{\fill}
\begin{subfigure}[b]{0.18\linewidth}
\includegraphics[width=0.99\linewidth,page=3]{lowerBound1.pdf}
\caption{}
\label{fig:lowerBoundC2ext}
\end{subfigure}
\hspace*{\fill}
\caption{A Halin-graph requiring much more height than its pathwidth. (a) Tree $C_1$
with cycle $\calC$ (cycle-edges are dotted red) that encloses $c$.
(b) Obtaining $F_w$ from $C_w$.  Dashes edges are not needed except to avoid
degree-2 vertices in the trees.  (c) Obtaining $C_{w+1}$
using many copies of $F_w$.  (d) Tree $\hat{C}_2$ needed for Theorem~\ref{thm:lower_extended_rpw}.}
\label{fig:lowerBound}
\label{fig:lowerBound1}
\end{figure}

Define $h(w)$ to be 3 if $w=1$ and $h(w):=h(w{-}1)+6 = 6w-3$ otherwise.
The main ingredient for our lower bound is the following result:

\begin{lemma}
\label{lem:lower}
For $w\geq 1$, any plane poly-line drawing of $H^-(C_w)$ uses at least $h(w)$ layers.
\end{lemma}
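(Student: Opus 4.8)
The proof will proceed by induction on $w$, mirroring the recursive definition of $C_w$. The base case $w=1$ is exactly the content of the (commented-out) Observation about $C_1$: the skirted graph $H^-(C_1)$ contains a $5$-cycle $\calC$ formed by the three cycle-edges together with the path $\langle \ell_L, r, \ell_R\rangle$, and this cycle encloses the vertex $c$ in any plane drawing. Since the drawing is plane and $c$ lies in the interior of $\calC$, the curve $\calC$ must occupy at least two layers, and $c$ must lie strictly between the topmost and bottommost layers used by $\calC$; hence at least $3 = h(1)$ layers are needed. I would spell this out using the fact that a $y$-coordinate-monotone closed curve enclosing a point $p$ must have a vertex above $p$ and a vertex below $p$ on distinct layers.

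For the inductive step, assume the claim for $C_w$ (and hence, by its construction from $C_w$, for $F_w$ — I would state and prove the analogous bound $h(w)$ for $H^-(F_w)$ as part of the induction, since $F_w$ just prepends a constant-size gadget $\langle g,p,r\rangle$ with leaves, whose $5$-cycle-style subgraph forces at most a constant additive loss, and in fact $F_w$ needs no more layers than $C_w$). Now consider a plane poly-line drawing $\Gamma$ of $H^-(C_{w+1})$. The graph $C_{w+1}$ has a spine $s_1,\dots,s_S$ with $L$ copies of $F_w$ hanging off each side of each internal spine-vertex. The key structural observation is that each copy of $F_w$, being attached at a single cut vertex (its root), together with a contiguous arc of the cycle $C$, forms a copy of the skirted graph $H^-(F_w)$ drawn inside a sub-region of $\Gamma$; by induction each such region spans at least $h(w)$ consecutive layers. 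The goal is to show that the presence of so many such sub-drawings, alternating on the two sides of the spine, forces $h(w) + 6$ layers overall.

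The heart of the argument — and the main obstacle — is a \emph{pigeonhole / nesting argument} on the spine. Because $S$ and $L$ are chosen as large constants, some spine-vertex $s_j$ must have many $F_w$-copies on each side that are "unobstructed" in a strong sense, and one shows that at such a vertex the drawing cannot route the spine, the $F_w$-subdrawings on both sides, and the enclosing cycle $C$ without some $F_w$-copy being forced into a region that is \emph{strictly interior} to the vertical extent already used by another copy plus the spine edges — each layer of "nesting depth" costing extra rows. Concretely, I expect the bookkeeping to show: the spine edge through $s_j$ occupies one layer, the cycle $C$ must pass above and below the $F_w$-blocks, and the $F_w$-blocks on the two sides of $s_j$ cannot both be "flush" with the cycle, so one of them is shifted inward, yielding $h(w)+1$; iterating the flush/non-flush dichotomy over a few spine-vertices (this is where the constant $6$ and the need for $S,L$ large come from) pushes this to $h(w)+6 = h(w+1)$. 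The delicate part is formalizing "the cycle must pass above and below" — this uses planarity plus the fact that $C$ separates the drawing into an inside containing everything and an outside — and ruling out that the spine itself could be "folded" to absorb the cost, which is why degree-$2$ vertices are forbidden and why the gadget $F_w$ (rather than bare leaves) is used: it guarantees that whatever path the drawing picks as its "spine" in the pathwidth sense, it must pierce many $F_w$-copies, each contributing its full $h(w)$.
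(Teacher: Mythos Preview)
Your base case is fine and matches the paper. The inductive step, however, has a genuine gap: the mechanism you sketch for gaining the six extra layers does not match what actually works, and it is not clear your version can be completed. You propose to ``iterate the flush/non-flush dichotomy over a few spine-vertices'', picking up one or two layers at each; but nesting arguments of this type do not compose across different spine-vertices, because the sub-drawings at distinct $s_j$ can overlap vertically in complicated ways, so a layer you ``gain'' at $s_j$ may be the same layer you gain at $s_{j'}$. The paper does \emph{not} iterate: all six extra layers are extracted at a \emph{single} location, via the bypassing lemma (Lemma~\ref{lem:bypass} from \cite{OPTII}): if $k$ pairwise interior-disjoint poly-lines each start and end on a layer $\ell'$ on opposite sides of a poly-line $\hat\pi$ that spans $h$ layers, then the whole drawing needs $h+k$ layers.

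Concretely, the paper first argues (pigeonhole on $S$, then Erd\H{o}s--Szekeres) that three spine-vertices $s_{j_1}\prec s_{j_2}\prec s_{j_3}$ lie on a common layer $\ell\le 5$; a $K_{3,3}$-style argument bounds the number of ``bad'' $F_w$-copies at each, so for $L$ large enough each $s_{j_k}$ has several good children one layer below. One then exhibits \emph{five} explicit interior-disjoint paths in $H^-(C_{w+1})$ --- built from spine-edges, cycle-paths, and short tree-paths through $F(g_{j_1}),F(g_{j_3})$ and the before-spine copies --- that all start and end on layer $\ell+1$ straddling a chosen child $g^{(2)}$ of $s_{j_2}$. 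A separate ``pocket'' argument (the $F_w$ gadget with its extra $g,p$ vertices is needed precisely here) shows that $\Gamma(g^{(2)})$ contains a path $\hat\pi$ reaching down to layer $\ell+h(w)+1$, i.e.\ spanning $h(w)+1$ layers. The bypassing lemma then gives $(h(w)+1)+5=h(w+1)$. Your sketch is missing both the bypassing lemma as the engine and the construction of the five paths; the role you assign to large $S,L$ (``iterate over many spine-vertices'') is not what they are for.
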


We prove Lemma~\ref{lem:lower} by induction on $w$.
In the base case ($w=1$) vertex $c$ in $C_1$ is surrounded by
a 5-cycle in $H^-(C_1)$.  Since we need one layer for $c$, and two more layers
to surround it, any plane drawing of $H^-(C_1)$ requires three layers as desired.
The induction step will proved over the next four subsections,
but we sketch here the main idea.  Fix an arbitrary plane poly-line drawing $\Gamma$
of $H^-(C_{w+1})$ for some $w\geq 1$.  Tree $C_{w+1}$
contains lots of copies of $F_w$, hence of $C_w$. Therefore,
$\Gamma$ contains lots of copies of $H^-(C_w)$; each of them uses at least $h(w)$ layers
by induction.  We can argue
that some copy of $H^-(F_w)$ inside $\Gamma$ actually requires $h(w)+1$ layers;
this is the most difficult part that we defer to last.
Furthermore, there are 5 polylines inside $\Gamma$ that are disjoint from
this copy of $H^-(F_w)$ and that ``bypass'' it (defined below).
It is known that 5 bypassing polylines
need 5 additional layers.  Therefore the height is at least $h(w)+1+5\geq h(w+1)$.

\subsection{Preliminaries and preprocessing}

We first introduce some terms concerning the abstract tree $C_{w+1}$.
Recall that $C_{w+1}$ is rooted and has a total order among the children of every vertex.
We therefore have a total order among the leaves, starting at the leftmost leaf
and ending with the rightmost one.  However, we will use
``left'' to refer to the order of vertices within one level of the drawing,
which may or may not reflect the order in the tree.  To avoid confusion, we will therefore treat
the order of chidren/leaves as if it were time, and so speak of the ``first''/``last'' leaf
and that a leaf comes earlier than another.

We distinguish leaves of $C_{w+1}$
(other than $s_S$) by whether they are on the {\em before-spine} or {\em after-spine}, i.e., before
or after $s_S$ in the enumeration of leaves.  Likewise for a spine-vertex $s_i\neq s_S$
we distinguish the non-spine children by whether they are before or after the spine.
Any such non-spine child $g$ is the root of a copy of $F_w$ which we
denote by $F(g)$.  For any two leaves $\ell,\ell'$ of $C_{w+1}$,
the {\em cycle-path from $\ell$ to $\ell'$} consists of the subpath of the cycle-edges between
$\ell$ and $\ell'$.

\medskip
Now we introduce some terms concerning drawing $\Gamma$.
Enumerate the layers of $\Gamma$, from top to bottom,
as $1,2,\dots,h$.  We are done if $h\geq h(w+1)$, so assume
for contradiction that $h<h(w+1)=h(w)+6$.
In fact we may assume $h=h(w)+5$ because we can add empty layers.
For two points $p,q$, we write $p\prec q$ (or ``$p$ is {\em left} of $q$'')
if $p$ and $q$ are on the same layer and $p$ has smaller $x$-coordinate.

A few minor modifications to drawing $\Gamma$ will make later arguments
easier and do not affect the height.    First, insert a bend into any
edge-segment that crosses a layer without having a bend there.
(These new bends may not have integral $x$-coordinates, but integrality
of $x$-coordinate is never used in the lower-bound proof.)
Second, do the following for any spine-vertex $s_i$ (with $ i < S $) of $C_{w+1}$, and
any non-spine child $g$ of $s_i$.  Recall that $g$ had three children; one is
vertex $p$ while two are leaves.  Delete the two edges to these leaves;
their sole purpose was to ensure that the Halin-graph is regular and they
will not be used in the proof.   With this, $g$ now has degree 2.
For the third modification, if $(s_i,g)$ is not drawn as a straight-line, then move $g$ to the bend  on $(s_i,g)$
nearest to $s_i$. This makes $(s_i,g)$ a straight-line and (by the first
preprocessing step) puts $g$ either on the same level as $s_i$ or one level above or below; this will be frequently used below.

Recall that $F(g)$ denotes the copy of $F_w$ attached at $g$.  We use $\Gamma(g)$ for the
drawing of $H^-(F(g))$ as it appears after these modifications.  Since $\Gamma(g)$ contains a
drawing of $H^-(C_w)$ within, it must use at least $h(w)$ layers.

\medskip
Finally we briefly review
the concept of bypassing (see also Figure~\ref{fig:bypass}); we use a version
here that is 90$^\circ$ rotated from the one in \cite{OPTII}.
Recall that bends of a polyline (like all bends and vertices of $\Gamma$) are
required to have integral $y$-coordinates.

\begin{definition}
Consider a set of poly-lines $\pi_1,\dots,\pi_k$ that are disjoint except
(perhaps) at their endpoints.
Let $\hat{\pi}$ be a poly-line that is disjoint from $\pi_1,\dots,\pi_k$.
We say that $\pi_1,\dots,\pi_k$ {\em bypass} $\hat{\pi}$ if there
exists a layer $\ell$  that intersects $\hat{\pi}$,   and for $i=1,\dots,k$
poly-line $\pi_i$ begins and ends in layer $\ell$ and
all points in $\hat{\pi}\cap \ell$ are between the two ends of $\pi_i$.
\end{definition}

\begin{lemma}\cite{OPTII}
\label{lem:bypass}
If a planar poly-line drawing $\Gamma$
contains $k$ poly-lines that bypass a poly-line $\hat{\pi}$,
and if $\hat{\pi}$ intersects $h$ layers, then $\Gamma$ uses
at least $h+k$ layers.
\end{lemma}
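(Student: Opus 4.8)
The plan is to prove the stronger statement that the union $U:=\hat\pi\cup\pi_1\cup\dots\cup\pi_k$ already meets at least $h+k$ layers; since $U\subseteq\Gamma$, this implies the lemma. First I would normalise the picture without changing which layers are met: replace each $\pi_i$ by a simple subarc of it joining its two ends (this only shrinks $U$, preserves disjointness and the bypassing property), replace $\hat\pi$ by a simple subarc of it joining a topmost point to a bottommost point, insert a bend wherever a segment crosses a layer so that no segment crosses a layer, and by an arbitrarily small perturbation near layer $\ell$ make the $\pi_i$ pairwise disjoint including at their endpoints. Because $\hat\pi$ is connected, the set of layers it meets is an interval (intermediate value theorem); write it as $t,t+1,\dots,t+h-1$ and let $S$ be the closed horizontal strip bounded by layers $t$ and $t+h-1$. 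Name the endpoints of $\pi_i$ so that $b_i\prec e_i$ on layer $\ell$; by the definition of bypassing, $b_i$ lies strictly left of all of $\hat\pi\cap\ell$ and $e_i$ strictly right of it.

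The first key step is to show every $\pi_i$ meets some layer outside $S$. Let $A$ be a point of $\hat\pi$ on layer $t$ and $B$ one on layer $t+h-1$, and let $\hat\pi_{AB}\subseteq\hat\pi$ be a simple arc from $A$ to $B$. Since $\hat\pi\subseteq S$, the arc $\hat\pi_{AB}$ lies in the strip $S$ and joins its top side to its bottom side, so $S\setminus\hat\pi_{AB}$ has exactly two components, a ``left'' one $S_L$ and a ``right'' one $S_R$. As $\hat\pi_{AB}$ meets layer $\ell$, and $b_i$ is left of all of $\hat\pi\cap\ell\supseteq\hat\pi_{AB}\cap\ell$ while $e_i$ is right of it, we get $b_i\in S_L$ and $e_i\in S_R$. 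If $\pi_i$ were contained in $S$, then, being connected and disjoint from $\hat\pi\supseteq\hat\pi_{AB}$, it would join $S_L$ to $S_R$ inside $S\setminus\hat\pi_{AB}$ — impossible. Hence $\pi_i$ reaches a layer above $t$ or below $t+h-1$, and that layer is a layer of $\Gamma$ since $\pi_i\subseteq\Gamma$.

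I would conclude by induction on $k$, the case $k=0$ being trivial. Since $k\ge 1$, some $\pi_i$ leaves $S$; reflecting the picture top-to-bottom if necessary, assume some $\pi_i$ reaches a layer strictly above $t$. Let $\mu<t$ be the topmost layer reached by any of the poly-lines and let $\pi^\ast$ be one reaching $\mu$. The crucial claim is that $\pi^\ast$ is the \emph{only} poly-line reaching layer $\mu$: closing $\pi^\ast$ into a Jordan curve by a connecting arc that runs far outside a large box enclosing $\Gamma$, one argues via the Jordan curve theorem that $\hat\pi$ and all the other $\pi_j$ lie in the bounded region determined by $\pi^\ast$, and that a second poly-line reaching the extremal layer $\mu$ would then be confined so that it could not connect a point left of $\hat\pi\cap\ell$ to one right of it while staying disjoint from $\pi^\ast$ and from $\hat\pi$ — contradicting that it bypasses $\hat\pi$. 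Granting the claim, delete $\pi^\ast$: the remaining poly-lines still pairwise-disjointly bypass $\hat\pi$, so by induction $\hat\pi\cup\bigcup_{i\ne\ast}\pi_i$ meets at least $h+(k-1)$ layers, none of which is $\mu$ (neither $\hat\pi$ nor the other $\pi_j$ reaches $\mu$); adding $\pi^\ast$ back contributes layer $\mu$, giving at least $h+k$ layers for $U$, as required. The main obstacle is exactly this last step — pinning down, with a careful Jordan-curve argument, the uniqueness of the poly-line attaining the extreme layer (equivalently, that one can always peel off a bypassing poly-line that frees up a previously unused layer). Everything else — the normalisation, the consecutive-layers observation, and the strip-separation argument — is routine.
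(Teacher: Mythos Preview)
First, a framing remark: the paper does not prove this lemma at all --- it is imported verbatim from \cite{OPTII} with no argument given. So there is nothing to compare your approach to; the question is simply whether your argument is correct.

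Your normalisation and the strip-separation step are fine: replacing each $\pi_i$ and $\hat\pi$ by simple subarcs, inserting bends at layer crossings, and the Jordan-arc argument that each $\pi_i$ must leave the strip $S=[t,t+h-1]$ are all sound.

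The gap is exactly where you flagged it, but it is worse than a missing detail: your ``crucial claim'' that the extremal layer $\mu$ is reached by only one $\pi_i$ is \emph{false}. Take $\hat\pi=\{(0,2)\}$ (so $h=1$, $\ell=2$) and
\[
\pi_1:\ (-1,2)\to(-1,1)\to(1,1)\to(1,2),\qquad
\pi_2:\ (-2,2)\to(-2,1)\to(-1.5,1)\to(-1.5,3)\to(2,3)\to(2,2).
\]
Both are simple, disjoint from each other and from $\hat\pi$, and both bypass $\hat\pi$. The topmost layer reached is $\mu=1$, and \emph{both} $\pi_1$ and $\pi_2$ reach it. Your Jordan-curve justification also breaks here: if $\pi^\ast=\pi_1$ and you close it by any arc outside a bounding box, the endpoints $(-2,2)$ and $(2,2)$ of $\pi_2$ lie \emph{outside} the region bounded by $\pi_1$, so ``all other $\pi_j$ lie in the bounded region'' fails. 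Consequently your induction does not go through: deleting $\pi_1$ leaves $\hat\pi\cup\pi_2$ still occupying layer~$1$, so no layer is freed.

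What does work is a variant of your idea that peels from the \emph{inside} rather than the outside. Show that some $\pi_{i^\ast}$ can be adjoined to $\hat\pi$ so that the remaining $k-1$ poly-lines bypass $\hat\pi\cup\pi_{i^\ast}$ (with the same layer $\ell$); since you already proved $\hat\pi\cup\pi_{i^\ast}$ spans at least $h+1$ layers, the induction closes. The existence of such an ``innermost'' $\pi_{i^\ast}$ is the real content and does require a Jordan-curve/nesting argument, but it is a different claim from the one you stated.
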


\subsection{The ideal case}

We first argue that the height-bound holds in one special case;
we will show later that this situation must
occur somewhere in $C_{w+1}$ (up to symmetry), as long as $S$ and $L$ are big enough.
We assume that the following holds
(see also Figure~\ref{fig:GoodCaseSetup}):

\begin{figure}[th]
\begin{subfigure}[b]{0.2\linewidth}
\includegraphics[scale=0.5,page=4,trim=50 0 30 40,clip]{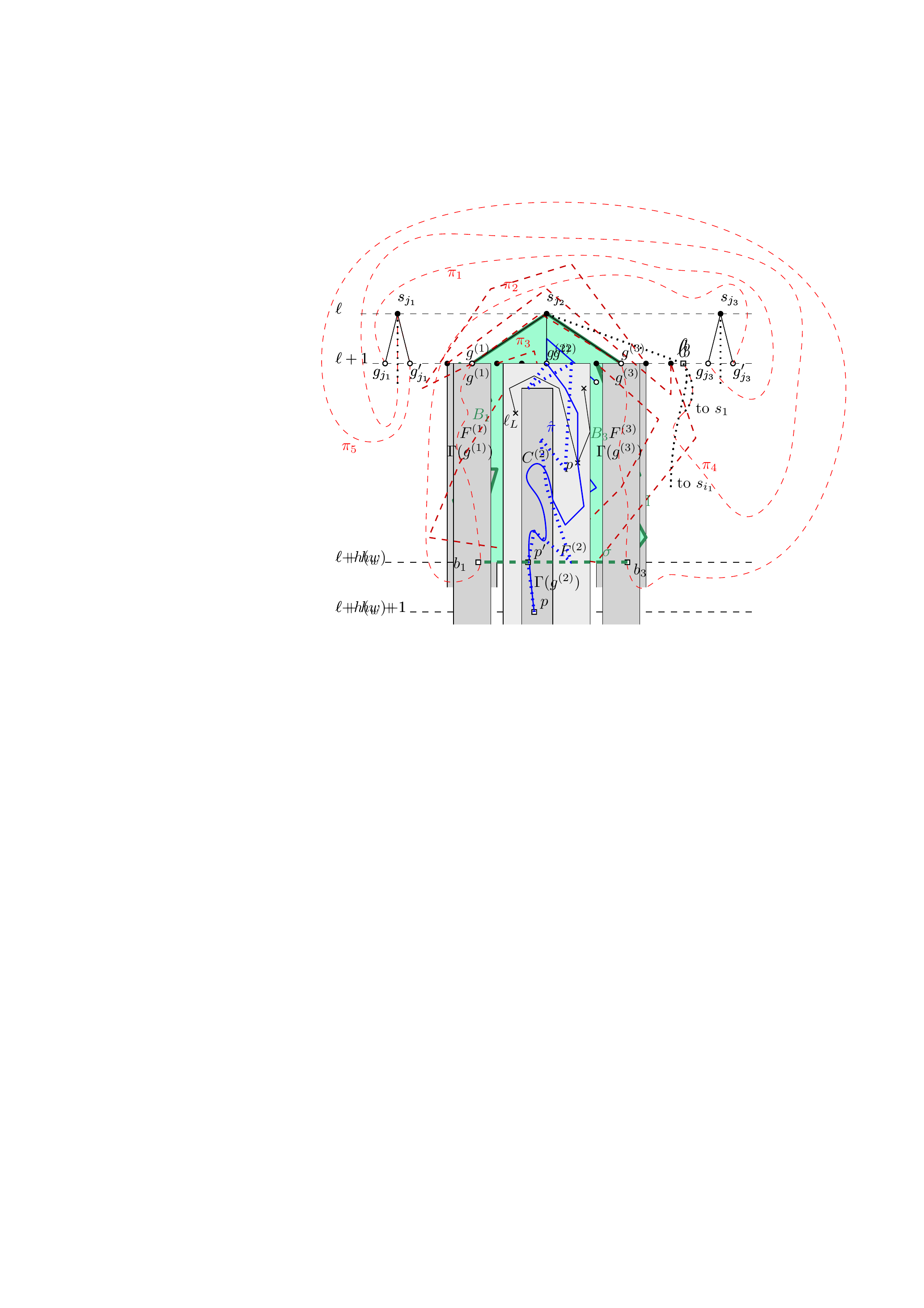}
\caption{}
\label{fig:bypass}
\end{subfigure}
\hspace*{\fill}
\begin{subfigure}[b]{0.4\linewidth}
\includegraphics[scale=0.77,page=2,trim=10 0 50 30,clip]{GoodCase.pdf}
\caption{}
\label{fig:GoodCaseSetup}
\label{fig:pocket}
\end{subfigure}
\hspace*{\fill}
\begin{subfigure}[b]{0.3\linewidth}
\includegraphics[scale=0.77,page=5]{GoodCase.pdf}
\caption{}
\label{fig:fivePaths}
\end{subfigure}
\hspace*{\fill}
\caption{(a) Five bypassing paths. (b) The ideal case. Path
$\hat{\pi}$ is blue (thick dotted).  Spine-edges are thin dotted.
(c) Five bypassing paths in $H^-(C_{w+1})$.  Cycle-edges are dashed.}
\end{figure}

\begin{itemize}
\itemsep -1pt
\item[(C1)]  There are three spine-vertices $s_{j_1},s_{j_2},s_{j_3}$
	that are all located in one layer $\ell\leq 5$.
	Furthermore, $1\leq j_1<j_2<j_3<S$ and $s_{j_1}\prec s_{j_2}\prec s_{j_3}$.
\item[(C2)] For $k=1,2,3$, vertex $s_{j_k}$ has an after-spine child $g_{j_k}$ and a before-spine child $g_{j_k}'$ on layer $\ell{+}1$.
	In fact, $s_{j_2}$ has five after-spine children on layer $\ell{+}1$.
\item[(C3)]
Vertex $s_{j_2}$ has three after-spine children $g^{(1)},g^{(2)},g^{(3)}$ on layer $\ell{+}1$ for which
	$g^{(1)}\prec g^{(2)}\prec g^{(3)}$.
	The order of children at $s_{j_2}$ contains $g^{(1)},g^{(2)},g^{(3)}$ as subsequence.

Furthermore, one of the spine-edges incident to $s_{j_2}$ has a bend or endpoint $b$ on layer $\ell+1$.
	If $b$ is on edge $(s_{j_2},s_{j_2-1})$ then $g^{(3)}\prec b$,
	otherwise $b\prec g^{(1)}$.
\item[(C4)]
For $k=1,2,3$, drawing $\Gamma(g^{(k)})$ occupies
		no point on layer $\ell$ or above.
\end{itemize}

We will later argue that the following property holds automatically, given (C1-C4).

\begin{itemize}
\item[(C5)] There exists a path $\hat{\pi}$ within $\Gamma(g^{(2)})$ that connects $g^{(2)}$ (which is on layer $\ell+1$)
	to layer $\ell+h(w)+1$, and all points in $\hat{\pi}\cap (\ell+1)$ lie strictly between $g^{(1)}$ and $g^{(3)}$.
\end{itemize}

Now we define five interior-disjoint paths in $C_{w+1}$ as follows:
(see also Figure~\ref{fig:fivePaths}):

\begin{itemize}
\item $\pi_1$:  This path begins at $g_{j_1}'$, continues within $F(g_{j_1}')$ to the last leaf, and
	from there along the cycle-path to the first leaf of $F(g_{j_3}')$.  From there it goes upwards in the tree to $g_{j_3}'$.
	This path uses only $F(g_{j_1}')$ and $F(g_{j_3}')$ and cycle-edges among
	leaves that are before the spine.

\item $\pi_2$: This path begins at $g_{j_3}$, continues within $F(g_{j_3})$ to the last leaf, and
	from there along the cycle-path to the first leaf of $F(g^{(1)})$.
	From there it goes upwards in the tree to $g^{(1)}$.
	This path uses only $F(g_{j_3})$ and $F(g^{(1)})$ and cycle-edges among
	leaves that are between $s_S$ and the first leaf of $F(g^{(1)})$ in the total order of leaves.

\item $\pi_3$: This is simply the path $\langle g^{(1)},s_{j_2},g^{(3)}\rangle$, which uses only edges incident
	to $s_{j_2}$.

\item $\pi_4$: This path is built symmetrically to $\pi_2$: begin at $g_{j_1}$, go to the first leaf of $F(g_{j_1})$,
	from there along the cycle-path (in reverse) to the last leaf of $F(g^{(3)})$, and from there to $g^{(3)}$.
	This path uses only $F(g_{j_1})$ and $F(g^{(3)})$ and cycle-edges among
	leaves that the last leaf of $F(g^{(3)})$ or later.

\item $\pi_5$:    Recall that one bend $b$ of a spine-edge incident to $s_{j_2}$ lies on layer $\ell+1$.
	Path $\pi_5$ begins at $b$, and goes along spine-edges, away from $s_{j_2}$, until it reaches
	either $s_{j_1}$ or $s_{j_3}$.  From there it goes to the after-spine child on layer $\ell+1$, i.e., either
	$g_{j_1}$ or $g_{j_3}$.  Except for this last edge, $\pi_5$ uses only spine-edges.
\end{itemize}

\begin{claim}
The polylines corresponding to
paths $\pi_1,\dots,\pi_5$ bypass $\hat{\pi}$.
\end{claim}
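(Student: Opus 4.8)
The plan is to verify the definition of bypassing directly: I need to exhibit a layer $\ell$ that meets $\hat\pi$, and show that each of $\pi_1,\dots,\pi_5$ begins and ends on that layer with all of $\hat\pi\cap\ell$ lying between the two endpoints. The natural choice is layer $\ell+1$ (the layer one below the common spine-layer), since $g^{(2)}$ lies there and property (C5) guarantees $\hat\pi$ passes through it at points strictly between $g^{(1)}$ and $g^{(3)}$. So the first step is to record that $\hat\pi\cap(\ell+1)$ is a set of points all satisfying $g^{(1)}\prec(\cdot)\prec g^{(3)}$.

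Next I would check, path by path, that each $\pi_k$ has both endpoints on layer $\ell+1$ and that those endpoints straddle the interval between $g^{(1)}$ and $g^{(3)}$ on that layer. For $\pi_3=\langle g^{(1)},s_{j_2},g^{(3)}\rangle$ this is immediate from (C3): its endpoints are exactly $g^{(1)}$ and $g^{(3)}$, both on layer $\ell+1$, and $g^{(1)}\prec g^{(3)}$, so trivially $\hat\pi\cap(\ell+1)$ lies between them. For $\pi_1$, the endpoints are $g_{j_1}'$ and $g_{j_3}'$, which are on layer $\ell+1$ by (C2); I must argue $g_{j_1}'\prec g^{(1)}$ and $g^{(3)}\prec g_{j_3}'$ (or the mirror), using that $s_{j_1}\prec s_{j_2}\prec s_{j_3}$ on layer $\ell$, that the edges $(s_{j_k},g_{j_k}')$ and $(s_{j_k},g_{j_k})$ are straight and nearly vertical (by the preprocessing that moved each $g$ onto a bend of its spine-edge, placing it within one layer of its parent), and planarity together with (C4) — the drawings $\Gamma(g^{(k)})$ stay strictly below layer $\ell$, so they cannot wrap around to the far side of $s_{j_1}$ or $s_{j_3}$. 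A symmetric argument handles $\pi_4$ with endpoints $g_{j_1}$ and $g_{j_3}$. For $\pi_2$, one endpoint is $g_{j_3}$ and the other is $g^{(1)}$; I need $g^{(1)}$ to be on the correct side, which again follows from the left-to-right order of the after-spine children of $s_{j_2}$ and the fact that $\hat\pi$ lives strictly between $g^{(1)}$ and $g^{(3)}$. For $\pi_5$, the endpoints are the bend $b$ (on layer $\ell+1$ by (C3)) and one of $g_{j_1},g_{j_3}$; the condition in (C3) on which side $b$ lies ($g^{(3)}\prec b$ or $b\prec g^{(1)}$, according to which spine-edge carries $b$) is exactly what forces $b$ to lie outside the interval, and the spine-edge $\pi_5$ follows stays on or near layer $\ell$/$\ell+1$ and runs monotonically away from $s_{j_2}$ toward $s_{j_1}$ or $s_{j_3}$, so it cannot re-enter the interval.

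The remaining ingredient is that the five polylines are pairwise disjoint except possibly at endpoints, which is needed for Lemma~\ref{lem:bypass} to apply; this I would get from the construction: $\pi_1$ uses only before-spine material (the copies $F(g_{j_1}')$, $F(g_{j_3}')$ and before-spine cycle-edges), $\pi_2$ and $\pi_4$ use disjoint blocks of after-spine copies and disjoint cycle-arcs (the arcs were chosen to be the stretch between $s_S$ and the first leaf of $F(g^{(1)})$ for $\pi_2$ versus from the last leaf of $F(g^{(3)})$ onward for $\pi_4$), $\pi_3$ uses only the three edges at $s_{j_2}$ not used by the others, and $\pi_5$ uses only spine-edges plus one pendant edge; moreover $\hat\pi$ lives inside $\Gamma(g^{(2)})$, which none of $\pi_1,\dots,\pi_5$ enters (they touch only $g^{(1)},g^{(3)}$, not $g^{(2)}$, and those copies are disjoint), so $\hat\pi$ is disjoint from all five.

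The main obstacle is the geometric side-ordering argument, i.e., pinning down that on layer $\ell+1$ the endpoints $g_{j_1}',g_{j_1},b$ (or whichever are ``left'' endpoints) actually fall to the left of $g^{(1)}$ and $g_{j_3}',g_{j_3}$ fall to the right of $g^{(3)}$, rather than tangled among the after-spine children of $s_{j_2}$. This is where I would lean hardest on (C1)–(C4): the three spine-vertices are colinear on layer $\ell$ in the order $s_{j_1}\prec s_{j_2}\prec s_{j_3}$, each $g$ hangs within one layer of its parent via an essentially vertical straight edge, the spine-edges connecting $s_{j_1},s_{j_2},s_{j_3}$ separate the plane so that before/after-spine children of different spine-vertices cannot cross past each other, and (C4) confines $\Gamma(g^{(1)}),\Gamma(g^{(2)}),\Gamma(g^{(3)})$ to strictly below layer $\ell$ so they form a ``pocket'' beneath $s_{j_2}$ that $\hat\pi$ cannot escape. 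Once the side-orderings are established, verifying the bypassing definition is routine bookkeeping.
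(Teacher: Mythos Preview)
Your proposal is correct and follows essentially the same approach as the paper: verify disjointness from the edges used, then show each $\pi_k$ has both endpoints on layer $\ell+1$ with one at or left of $g^{(1)}$ and the other at or right of $g^{(3)}$, so that (C5) finishes.

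One remark: the ``main obstacle'' you flag --- the side-ordering $g_{j_1},g_{j_1}'\prec g^{(1)}$ and $g^{(3)}\prec g_{j_3},g_{j_3}'$ --- is handled in the paper by a one-line planarity argument rather than by invoking (C4) or pocket reasoning. Since $s_{j_1}\prec s_{j_2}$ on layer $\ell$ and the edges $(s_{j_1},g_{j_1})$ and $(s_{j_2},g^{(1)})$ are straight-line segments between layers $\ell$ and $\ell+1$, having $g^{(1)}\prec g_{j_1}$ would force these two segments to cross. The same argument gives all four inequalities. So the step you worried about most is in fact the cheapest.
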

\begin{proof}
Directly from the edges that they use, one observes that the five paths are disjoint from $\hat{\pi}$, and from each other except that they may have endpoints in common.  (We use here that $g^{(2)}$ lies between $g^{(1)}$ and $g^{(3)}$ in the order of children at $s_{j_2}$ by (C3).)
Assume that $b$ is right of $g^{(3)}$, the other case is symmetric.  Then all five paths begin at a point in
$\{g_{j_1},g_{j_1}',g^{(1)}\}$ and end at a point in
$\{g^{(3)},b,g_{j_3},g_{j_3}'\}$.
Observe that $g_{j_1}$
is necessarily left of $g^{(1)}$, otherwise the straight-line segments $(s_{j_1},g_{j_1})$ and
$(s_{j_2},g^{(1)})$ would intersect.  Likewise $g_{j_1}'\prec g^{(1)}$ and $g^{(3)}\prec g_{j_3}, g_{j_3}'$.
So all five paths connect a point on layer $\ell+1$ that is at or to the left of $g^{(1)}$ with a point on layer $\ell+1$ that is at or to the right of $g^{(3)}$.
Since $\hat{\pi}$ uses only points on $(\ell+1)$ that are strictly between $g^{(1)}$ and $g^{(3)}$ by (C5), the claim holds.
\end{proof}

Since $\hat{\pi}$ spans $h(w)+1$ layers,
therefore drawing $\Gamma$ of $H^-(C_{w+1})$ has at
least $(h(w)+1)+5=h(w{+}1)$ layers as desired.

\subsection{Guaranteeing conditions (C1-C4)}

Now we argue that conditions (C1-C4) are satisfied at some
subtrees if $S$ and $L$ are big enough.  Recall that we assumed
(for contradiction) that $h=h(w)+5$.  Since each copy of $H^-(F_w)$
uses at least $h(w)$ layers, we therefore have only 5 layers for
bypassing any copy of $H^-(F_w)$.    Roughly speaking, this forces
spine-vertices to be in the top 5 or the bottom 5 layers.
Therefore (C1) holds if $S$ is big enough.  Next we argue that of
the $L$ attached copies of $F_w$ at a spine-vertex $s$, only
$L-72$ can share a layer with $s$.  This, plus the preprocessing,
forces (C2) if $L\geq 81$.  It also implies that
many non-spine children satisfy (C4), and an appropriate choice among them
ensures (C3).

To give the details, we first study various properties of non-spine
children of one fixed spine-vertex $s_i$ with $i<S$.

\begin{observation}
\label{obs:intersectMiddle}
For any non-spine child $g$ of $s_i$, $\Gamma(g)$ intersects all layers in $\{6,\dots,h(w)\}$.
\end{observation}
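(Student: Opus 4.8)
The plan is to locate $\Gamma(g)$ in the drawing and show it is ``sandwiched'' so that it must stretch across the middle layers. First I would use the preprocessing: the edge $(s_i,g)$ is now a straight-line segment and $g$ lies on the same layer as $s_i$ or one layer above/below it. Combined with condition~(C1)-type reasoning (which we are about to establish, but here we may use that $s_i$ is in the top 5 or bottom 5 layers — or more simply, we only need that $s_i$ lies within 5 layers of the top or bottom), the vertex $g$ lies within the top $6$ or the bottom $6$ layers. By symmetry assume $g$ is in the top part, so $g$ is on some layer $\leq 6$.

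Next I would observe that $\Gamma(g)$ is a plane drawing of $H^-(F_w)$, which contains a copy of $H^-(C_w)$, hence by Lemma~\ref{lem:lower} it occupies at least $h(w)$ consecutive-in-spirit layers — more precisely, it intersects at least $h(w)$ layers. Since $g$ is one of its vertices and $g$ is on a layer $\leq 6$, the drawing $\Gamma(g)$ must reach down to at least layer $h(w)$ (if it started at layer $6$ it would need layers $6,\dots,6+h(w)-1$, but since $h=h(w)+5$ there is not that much room unless it starts higher; in any case it must intersect every layer in the range it spans). The key point is connectivity: $H^-(F_w)$ is a connected graph, so $\Gamma(g)$, being a drawing of a connected graph, intersects a \emph{contiguous} block of layers. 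A connected drawing intersecting at least $h(w)$ layers, one of whose vertices ($g$) lies on a layer $\leq 6 \leq h(w)$ (using $h(w)=6w-3\geq 3$, and for the induction step $w\geq 1$ so in fact we are in the regime $h(w)\geq 3$; for the range $\{6,\dots,h(w)\}$ to be nonempty we need $w\geq 2$, which holds since we are proving the induction step for $C_{w+1}$ with $w\geq 1$ — wait, when $w=1$, $h(w)=3$ and $\{6,\dots,3\}$ is empty and the statement is vacuous, so assume $w\geq 2$), must therefore intersect every layer between $g$'s layer and layer $h(w)$, in particular all of $\{6,\dots,h(w)\}$.

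So the two ingredients are: (i) $g$ sits near the top (layer $\leq 6$) or near the bottom (layer $\geq h-5 = h(w)$), forced by the preprocessing plus the constraint $h = h(w)+5$ and the bypassing bound on $s_i$'s layer; and (ii) $\Gamma(g)$ is connected and spans $\geq h(w)$ layers, hence covers a contiguous block of that many layers. Putting (i) and (ii) together: if $g$ is near the top, the block starts at layer $\leq 6$ and has length $\geq h(w)$, so it contains $\{6,\dots,h(w)\}$; if $g$ is near the bottom, the block ends at layer $\geq h(w)$ and has length $\geq h(w)$, so it reaches up to layer $\leq h - h(w) + 1 = 6$, again containing $\{6,\dots,h(w)\}$. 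Either way $\Gamma(g)$ intersects every layer in $\{6,\dots,h(w)\}$.

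The main obstacle I anticipate is cleanly justifying that $g$ is confined to the top or bottom $6$ layers without circular dependence on (C1)-(C4): the honest argument is that $s_i$ itself must lie in the top $5$ or bottom $5$ layers because otherwise one could find $5$ spine-edge pieces bypassing the copy $H^-(F_w)$ at $s_i$ (which already needs $h(w)$ layers), giving $h \geq h(w)+6$, a contradiction. Once $s_i$ is pinned to within $5$ of an end, the preprocessing step that made $(s_i,g)$ straight-line and put $g$ within one layer of $s_i$ finishes the confinement. I would state that confinement of $s_i$ as a small preliminary observation (or cite the part of the text that establishes it) and then the rest is the short contiguity argument above.
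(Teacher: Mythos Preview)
Your ingredient (ii) is exactly the paper's proof, and it is all that is needed. The paper argues in two lines: $\Gamma(g)$ intersects at least $h(w)$ layers (it contains a plane drawing of $H^-(C_w)$), there are only $h(w)+5$ layers in total, and $\Gamma(g)$ is connected so the layers it meets form a contiguous block. Any contiguous block of at least $h(w)$ layers inside $\{1,\dots,h(w)+5\}$ has top $\leq 6$ and bottom $\geq h(w)$, hence contains $\{6,\dots,h(w)\}$. The location of $g$ (or of $s_i$) plays no role.

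Your ingredient (i) is therefore unnecessary, and worse, it is circular in the paper's logical order: the fact that $s_i$ lies in the top or bottom five layers is Corollary~\ref{cor:spineLayer}, whose proof \emph{uses} Observation~\ref{obs:intersectMiddle}. You anticipated this danger and proposed an independent bypassing argument to pin down $s_i$ first, but that detour is not needed and would duplicate work the paper does later via Claim~\ref{cl:outsideRange}. Drop (i) entirely; the contiguity-plus-counting argument in (ii) already delivers the conclusion without knowing anything about where $g$ sits.
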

\begin{proof}
There are $h= h(w)+5$ layers in total, and
by induction $\Gamma(g)$ intersects at least $h(w)$ layers.
It therefore can avoid only the top 5 and the bottom 5 layers.
\end{proof}

We say that $g$ is {\em bad} if the layer of $s_i$ intersects $\Gamma(g)$,
otherwise $g$ is {\em good}.

\begin{claim}
\label{cl:outsideRange}
At most 72 non-spine children of $s_i$ are bad.%
\footnote{With more effort one can show that at most 12 of them can be bad,
leading to a better bound for $L$.}
\end{claim}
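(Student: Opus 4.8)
The plan is to bound the number of \emph{bad} non-spine children of a fixed spine-vertex $s_i$ by a counting argument that exploits planarity and the fact that each $\Gamma(g)$ is a ``thick'' drawing spanning many layers. Let $\ell_i$ be the layer of $s_i$, and consider all bad children $g$, i.e. those for which $\Gamma(g)$ meets layer $\ell_i$. By the preprocessing, each bad child $g$ is drawn at a point on layer $\ell_i$, $\ell_i{-}1$, or $\ell_i{+}1$, and the edge $(s_i,g)$ is a straight-line segment; so $g$ sits essentially ``next to'' $s_i$. The key is that each $\Gamma(g)$, by Observation~\ref{obs:intersectMiddle}, occupies \emph{every} layer in $\{6,\dots,h(w)\}$, hence has a point on whichever layer among these is closest to $\ell_i$ — and in particular, since $\ell_i\le 5$ (or symmetrically $\ell_i\ge h-4$), $\Gamma(g)$ has a point on layer $6$ (resp. layer $h-5$), i.e. just below (resp. above) the ``band'' of the top five layers.

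First I would fix attention on the before-spine bad children and on the after-spine bad children separately; by symmetry it suffices to bound one side, say the before-spine bad children, and then double. The idea is that all these $\Gamma(g)$'s, together with the spine-edges at $s_i$, live in a small corridor: each such $g$ is within one layer of $s_i$, and $\Gamma(g)$ then descends monotonically (in the sense of occupying consecutive layers) through the band $\{\ell_i,\dots,6\}$ or up through a symmetric band. I would then argue that two bad children $g,g'$ on the same side whose drawings $\Gamma(g),\Gamma(g')$ are disjoint cannot ``nest'' arbitrarily: the straight segments $(s_i,g)$ and $(s_i,g')$ impose a left-to-right order on the attachment points on layer $\ell_i\pm 1$, and this order must be consistent with the tree-order of the children at $s_i$ (otherwise the segments cross). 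Combined with the fact that each $\Gamma(g)$ blocks a full vertical strip across layers $\{6,\dots,\ell_i\}$ (or the symmetric strip at the bottom), at most a bounded number of them — one per ``slot'' determined by the at most $5$ layers in the band and the constant number of times a spine-edge or the reserved ray can pass through — can coexist. Carefully, each bad $g$ forces its strip to be crossed only by the $O(1)$ objects that are allowed to (the two spine-edges at $s_i$, the edges $(s_i,g)$ themselves at their attachment layer, and the handful of cycle-path pieces), and a planarity/pigeonhole count over the $\le 5$ layers and the $\le 3$ possible attachment layers $\ell_i,\ell_i\pm1$ yields the constant $72$ (the footnote indicates $12$ is achievable with more care; $72$ is a deliberately loose bound, e.g. something like $3$ attachment layers $\times$ a constant from the number of blocking edges).

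The main obstacle I expect is making precise the statement ``two disjoint thick drawings $\Gamma(g),\Gamma(g')$ next to $s_i$ cannot both be bad unless they are separated by one of a constant number of blocking curves,'' i.e. turning the intuitive picture (many $F_w$-copies all trying to squeeze past $s_i$ in only five layers) into a clean inequality. Concretely, the subtlety is that $\Gamma(g)$ need not be contained in a vertical strip; it only occupies all of layers $\{6,\dots,h(w)\}$, so I must argue via a \emph{monotone path} inside $\Gamma(g)$ from its attachment point near $s_i$ down to layer $h(w)$ and use that path (a curve crossing each of the $\le 5$ band-layers once) together with a plane-embedding/Jordan-curve argument: if too many such monotone ``tentacles'' emanate from points clustered around $s_i$ on layers $\ell_i,\ell_i\pm 1$ and all reach below layer $5$ without crossing, then some of them together with spine-edges would enclose a region forcing a crossing with the edge $(s_i,g'')$ of yet another bad child. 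Once this enclosure argument is set up, the count of how many tentacles fit is a finite case analysis over which of the $\le 5$ layers each tentacle starts in and on which side of each spine-edge it passes, giving the claimed bound; I would present that case analysis only at the level of ``there are at most $3$ starting layers and at most a constant number of homotopy classes, hence at most $72$,'' rather than enumerating all configurations.
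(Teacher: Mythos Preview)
Your proposal has a genuine gap, and it also misses the clean idea the paper uses.

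First, the circularity: you repeatedly assume that $s_i$ lies in the top five (or bottom five) layers, writing ``since $\ell_i\le 5$ (or symmetrically $\ell_i\ge h-4$), $\Gamma(g)$ has a point on layer $6$\dots''. But at this point nothing is known about the layer of $s_i$; in fact, Corollary~\ref{cor:spineLayer} (that $s_i$ is in the top or bottom five layers) is \emph{derived from} the present claim, not available as an input to it. If $s_i$ happened to sit in layer $\{6,\dots,h(w)\}$, Observation~\ref{obs:intersectMiddle} would make \emph{every} non-spine child bad, and your band/slot argument collapses. So the proof must work without any assumption on $\ell_i$.

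Second, even setting aside the circularity, the ``tentacle'' picture never crystallises into a counting argument. You promise a bound via ``3 attachment layers $\times$ a constant number of homotopy classes'', but the relevant constraint is not about attachment layers or homotopy past spine-edges: the drawings $\Gamma(g)$ can wander arbitrarily in the $x$-direction, and the only structure you have is that they are pairwise disjoint and each spans many layers. Your sketch gives no mechanism that turns ``many disjoint thick blobs all touching layer $\ell_i$'' into a contradiction.

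The paper's argument is short and quite different. It classifies each bad child $g$ by its \emph{type} $(t,b)$, where $t$ and $b$ are the topmost and bottommost layers of $\Gamma(g)$; Observation~\ref{obs:intersectMiddle} forces $t\in\{1,\dots,6\}$ and $b\in\{h(w),\dots,h(w){+}5\}$, so there are at most $36$ types. If $73=2\cdot 36+1$ children were bad, three of them $g_1,g_2,g_3$ would share a type $(t,b)$. Inside each $\Gamma(g_k)$ take a poly-line $B_k$ from layer $t$ to layer $b$, and a poly-line $Q_k$ from $s_i$ (which is in $\{t,\dots,b\}$ since $g_k$ is bad) along $(s_i,g_k)$ and then inside $\Gamma(g_k)$ until it hits $B_k$. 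Adding a new vertex above layer $t$ joined to the three top ends of the $B_k$'s, and one below layer $b$ joined to the three bottom ends, yields a planar drawing of $K_{3,3}$ --- contradiction. No assumption on $\ell_i$ is needed, and the constant $72$ drops out directly as $2\cdot 36$.
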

\begin{proof}
We say that a non-spine child $g$ has {\em type} $(t,b)$ if the
topmost and bottommost layer used by $\Gamma(g)$ are $t$ and $b$.
By Observation~\ref{obs:intersectMiddle} we have
$1\leq t\leq 6$ and $h(w)\leq b\leq h(w)+5$, so
there are at most 36 types.  Assume for contradiction that there are
$73=2\cdot 36+1$ bad non-spine children of $s_i$, hence
three of them (say $g_1,g_2,g_3$) have the same type $(t,b)$.

For $k=1,2,3$, let $B_k$ be a poly-line within $\Gamma(g_k)$ that begins in layer $t$ and
ends in layer $b$.
Let $Q_k$ be a poly-line that starts at $s_i$ (which is within layers $\{t,\dots,b\}$ since $g_k$ is bad), goes along the straight-line edge to $g_k$ (also within $\{t,\dots,b\}$) and continues within $\Gamma(g_k)$ until it
reaches $B_k$.  Note that $B_1\cup Q_1$ and $B_2\cup Q_2$ and $B_3\cup Q_3$ are disjoint except at $s_i$,
and reside entirely within layers $\{t,\dots,b\}$.
See also Figure~\ref{fig:K33}.

Exactly as in the proof of Lemma 5 in \cite{BB-JGAA}, one argues that this is impossible.
Consider the drawing induced by $\bigcup_k (B_k\cup Q_k)$.  Add a vertex $v'$ in layer $t-1$
and connect it to the top ends of $B_1,B_2,B_3$ (they are in layer $t$).  Likewise add a
vertex $v''$ in layer $b+1$ and connect it to the bottom ends of $B_1,B_2,B_3$ (they are
in layer $b$).  This gives a planar drawing of $K_{3,3}$, with $\{s_i,v',v''\}$ as one side
and the points $B_k\cap Q_k$ for $k=1,2,3$ as the other side.
Contradiction.
\end{proof}

\begin{figure}[ht]
\hspace*{\fill}
\begin{subfigure}[b]{0.34\linewidth}
\includegraphics[width=0.99\linewidth,page=1]{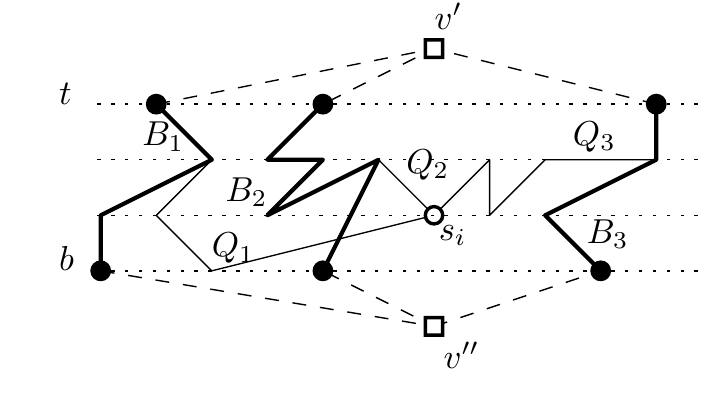}
\caption{}
\label{fig:K33}
\end{subfigure}
\hspace*{\fill}
\begin{subfigure}[b]{0.6\linewidth}
\includegraphics[page=6]{lowerBound2.pdf}
\caption{}
\label{fig:cyclicShiftsA}
\end{subfigure}
\hspace*{\fill}
\caption{%
(a) Three bad non-spine children of type $(t,b)$ imply a planar drawing of $K_{3,3}$.
(Picture based on \cite{BB-JGAA}).
(b) Possible arrangements of non-spine children of $s_{j_2}$ on layer $\ell+1$.
}
\end{figure}

\begin{corollary}
\label{cor:spineLayer}
If $L\geq 37$ then the layer of $s_i$ is in $\{1,\dots,5\} \cup \{h(w){+}1,\dots,h(w){+}5\}$.%
\footnote{With more effort one can show that $s_i$ cannot be on the topmost or bottommost
layer for $i>1$, leading to a better bound for $S$.}
\end{corollary}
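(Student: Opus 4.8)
The statement to prove is Corollary~\ref{cor:spineLayer}: if $L$ is large enough (at least $37$, which follows once we have Claim~\ref{cl:outsideRange}), then every spine-vertex $s_i$ with $i<S$ lies on a layer in $\{1,\dots,5\}\cup\{h(w){+}1,\dots,h(w){+}5\}$. The idea is a counting argument: $s_i$ has $L$ non-spine children on each side (so at least $L$ of them overall), each of which is the root of a copy of $F_w$ whose drawing occupies at least $h(w)$ layers by the inductive hypothesis of Lemma~\ref{lem:lower}. By Claim~\ref{cl:outsideRange}, at most $72$ of these children are \emph{bad} (i.e., their drawing $\Gamma(g)$ meets the layer of $s_i$). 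Hence if $L\geq 73$ — and in fact $L\geq 37$ suffices if we count children only on one side, or if we are slightly more careful — at least one non-spine child $g$ of $s_i$ is \emph{good}, meaning the layer of $s_i$ does not intersect $\Gamma(g)$.

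First I would fix such a good child $g$, and recall from Observation~\ref{obs:intersectMiddle} that $\Gamma(g)$ intersects every layer in $\{6,\dots,h(w)\}$, since it spans at least $h(w)$ of the $h=h(w)+5$ layers and can therefore miss only among the top $5$ and bottom $5$ layers. Consequently the set of layers \emph{not} met by $\Gamma(g)$ is contained in $\{1,\dots,5\}\cup\{h(w){+}1,\dots,h(w){+}5\}$. Since $g$ is good, the layer of $s_i$ is one of the layers not met by $\Gamma(g)$, and therefore lies in $\{1,\dots,5\}\cup\{h(w){+}1,\dots,h(w){+}5\}$, which is exactly the claim. The only arithmetic to double-check is the bound on $L$: Claim~\ref{cl:outsideRange} gives at most $72$ bad children among \emph{all} non-spine children; a spine-vertex $s_i$ (with $i<S$) has $L$ non-spine children on each side of the spine, hence $\geq L$ in total (in fact $\geq 2L$, but we don't need that), so $L\geq 73$ would guarantee a good child; the stated threshold $L\geq 37$ presumably uses that bad children on the same side additionally constrain each other, or simply that we will anyway take $L$ much larger later (for (C2) we need $L\geq 81$), so any fixed constant at least $73$ is harmless. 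I would state it as ``$L\geq 73$ suffices'' unless the finer count is wanted.

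The main obstacle is not in this corollary itself — it is a short consequence of the two preceding results — but in making sure the cited ingredients apply verbatim. In particular, Observation~\ref{obs:intersectMiddle} and Claim~\ref{cl:outsideRange} are phrased for a non-spine child $g$ of a spine-vertex $s_i$ with $i<S$, and rely on the preprocessed drawing $\Gamma$ (with the two extra leaves at $g$ deleted, $(s_i,g)$ straightened, and bends inserted on layer-crossings) and on $h=h(w)+5$; I would just remind the reader that these standing assumptions are in force. No new geometry is needed: the corollary is really just ``good child exists $\Rightarrow$ $s_i$ avoids the middle $\Rightarrow$ $s_i$ is in the top-$5$ or bottom-$5$ band.''
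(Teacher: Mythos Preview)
Your argument is correct and is essentially the contrapositive of the paper's one-line proof: the paper says that if $s_i$ were on a layer in $\{6,\dots,h(w)\}$ then by Observation~\ref{obs:intersectMiddle} \emph{every} non-spine child would be bad, giving $2L\geq 74$ bad children and contradicting Claim~\ref{cl:outsideRange}; you instead pick a good child (which exists since $2L>72$) and conclude that $s_i$ avoids the layers $\{6,\dots,h(w)\}$ that this child's drawing occupies. Same content.

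Your uncertainty about the threshold is unwarranted: you yourself note that $s_i$ has $L$ non-spine children on \emph{each} side, hence $2L$ in total, and it is precisely this count that gives $2L\geq 74>72$ when $L\geq 37$. There is no need to speculate about ``bad children on the same side additionally constrain each other'' or to fall back on $L\geq 73$; just use the $2L$ you already have.
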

\begin{proof}
If $s_i$ were in any layer in $\{6,\dots,h(w)\}$, then by Observation~\ref{obs:intersectMiddle}
all $2L\geq 74$ non-spine children of $s_i$ would be bad.
\end{proof}

\begin{claim}
\label{cl:manyGood}
If $s_i$ is on layer $\ell$ where $\ell \leq 5$ and $\ell\leq h/2$,
and if $L\geq 81$, then $s_i$ has at least $5$ good after-spine children
on layer $\ell+1$.
\end{claim}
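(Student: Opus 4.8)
The plan is to count, on the single layer $\ell$ occupied by $s_i$, how many non-spine children can be ``bad'' (i.e., have their drawing $\Gamma(g)$ intersect layer $\ell$), and then split the good ones by side of the spine. First I would invoke Claim~\ref{cl:outsideRange} to note that at most $72$ of the $2L$ non-spine children of $s_i$ are bad, so at least $2L-72$ are good. A good child $g$ has $\Gamma(g)$ disjoint from layer $\ell$; since $\Gamma(g)$ intersects all of $\{6,\dots,h(w)\}$ by Observation~\ref{obs:intersectMiddle} and $\ell\le 5$, such a drawing lies entirely in layers $\ell+1,\dots,h$ (it cannot ``straddle'' $\ell$, being connected and avoiding layer $\ell$, and it cannot lie above $\ell$ since it reaches layer~$6>\ell$). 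In particular $g$ itself, sitting at distance at most one from $s_i$ after the third preprocessing step, is either on layer $\ell$ (impossible, as $g\in\Gamma(g)$ and $g$ is good) or on layer $\ell+1$ (it cannot be on layer $\ell-1$ since $\ell-1<\ell$ would put it above $\ell$ while its subtree reaches below, contradicting $y$-monotonicity of the edge $(s_i,g)$ together with the fact that $\Gamma(g)$ avoids $\ell$; and $\ell-1$ may not even exist if $\ell=1$). So every good non-spine child lies on layer $\ell+1$.

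Next I would split the good children by whether they are before-spine or after-spine children of $s_i$. At most $L$ of the $2L$ children are after-spine and at most $L$ are before-spine; the bad ones number at most $72$. In the worst case all $72$ bad children are after-spine children, leaving at least $L-72$ good after-spine children on layer $\ell+1$. With $L\ge 81$ this is at least $9\ge 5$, giving the claim. (The bound $L\ge 81$ rather than $L\ge 77$ presumably leaves slack used elsewhere in the argument, e.g.\ for also guaranteeing enough good \emph{before}-spine children or the extra structure needed in (C2)--(C3).)

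The main obstacle is the geometric step establishing that a good child $g$ must actually sit on layer $\ell+1$ (as opposed to only knowing that its subtree drawing $\Gamma(g)$ avoids layer $\ell$). This is where the preprocessing really earns its keep: after the third modification the edge $(s_i,g)$ is a straight segment between adjacent layers, so $g$ is on layer $\ell$, $\ell-1$, or $\ell+1$; ruling out $\ell$ and $\ell-1$ uses that $g\in\Gamma(g)$, that $\Gamma(g)$ avoids layer $\ell$ entirely, and that $\Gamma(g)$ must reach down to layer $6$ and beyond while the connected drawing cannot cross layer $\ell$. I would state this carefully, possibly as a small standalone observation, since it is reused when setting up (C2) and (C4).
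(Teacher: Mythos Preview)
Your argument is essentially correct for $w\geq 2$, and in that range it is even a bit cleaner than the paper's: you show directly that \emph{every} good non-spine child must lie on layer $\ell{+}1$, rather than assuming five of them are on $\ell{-}1$ and deriving a contradiction from one. The counting ($L-72\geq 9$ good after-spine children) matches the paper.

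However, there is a genuine gap in the case $w=1$. Your key geometric step invokes Observation~\ref{obs:intersectMiddle} to conclude that $\Gamma(g)$ reaches layer~$6$, hence cannot lie entirely above layer $\ell\leq 5$. But for $w=1$ we have $h(w)=3$, so the range $\{6,\dots,h(w)\}$ is empty and the observation says nothing. Concretely, when $w=1$ we have $h=h(w)+5=8$, and the hypothesis $\ell\leq h/2$ permits $\ell=4$. A good child $g$ on layer $\ell{-}1=3$ could then have $\Gamma(g)$ occupying exactly layers $1,2,3$ --- this is $h(w)=3$ layers and avoids layer~$4$, so nothing you wrote rules it out. (Your parenthetical about ``$y$-monotonicity of the edge $(s_i,g)$'' does not help: that edge is not part of $\Gamma(g)$, and the ambient drawing is not assumed $y$-monotone anyway.)

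The paper treats exactly this situation as a separate case. It assumes for contradiction that at least five good after-spine children sit on layer $\ell{-}1$, flips the drawing vertically so that $s_i$ lands on layer~$5$ and these children on layer~$6$, observes that three consecutive ones among them (in the child-order at $s_i$) satisfy the hypotheses of Claim~\ref{cl:oneTallChild}, and then that claim forces one of their drawings to reach layer $5+h(w)+1=9>h$, a contradiction. You would need to supply this (or an equivalent) argument to close the gap.
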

\begin{proof}
There are $L$ after-spine children, hence at least $L-72\geq 9$ that are good.
Any such good child $g$ cannot be on layer $\ell$ by definition of good, and
it is at most one layer away by the preprocessing.  So $g$ is on layer
$\ell-1$ or $\ell+1$.  Assume for contradiction that there at most
4 good after-spine children on layer $\ell+1$.
So at least 5 good after-spine children are on layer $\ell-1$,  call them
$g_1,\dots,g_5$, enumerated in left-to-right order along the layer.

We now have two cases.  In the first case, $\ell\leq h(w)$ (which is
always true for $w\geq 2$ since then $h(w)\geq 9$ while $\ell\leq 5$).
Since $g_1$ is good, drawing $\Gamma(g_1)$ cannot use layer $\ell$, so it is contained
within layer $1,\dots,\ell-1$.  So it uses at most $h(w)-1$
layers, which is impossible.

In the second case, $\ell>h(w)$.  Then $w=1$, hence $h(w)=3$, so
$\ell>3$.  But we also know that $h=h(w)+5=8$  and $\ell\leq h/2=4$, so $\ell=4$.
Let $\Gamma'$ be the drawing obtained by flipping $\Gamma$ upside down.
Since there were 8 layers, $s_i$ is now located on layer $\ell'=5$, children
$g_1,\dots,g_5$ are on layer 6, and their drawings only use
layers 6,7,8.

Since edge $(s_i,g_k)$ (for $k=1,\dots,5$) is drawn straight-line
by the pre-processing, and $\Gamma$ respects the planar embedding,
the cyclic order of neighbours of $s_i$ must contain $g_1,\dots,g_5$
\todo{perhaps add figure}
in this order.  The spine-edges and before-spine children at $s_i$
may appear somewhere between $g_1$ and $g_5$ in the cyclic order,
but regardless of where they are, either $g_1,g_2,g_3$ or
$g_3,g_4,g_5$ are a subsequence of the linear order of children of $s_i$.
By Claim~\ref{cl:oneTallChild} (proved below, but there
is no circularity) drawing $\Gamma(g_2)$ or $\Gamma(g_4)$ hence uses a point on layer
$\ell'+h(w)+1= 5+3+1=9$. This gives the required contradiction of our assumption.
\end{proof}

Now we explain how to satisfy (C1)-(C4).
Assuming $S\geq 42$, we have 41 spine-vertices $s_i$ with $i<S$.
Assuming $L\geq 37$, each of them is on one of 10 possible layers by Corollary~\ref{cor:spineLayer}.
By the pigeon-hole principle, therefore, at least 5 of these spine-vertices are on one layer $\ell$.
After a possible vertical flip of $\Gamma$,
we may assume $\ell\leq h/2$, therefore $\ell\leq 5$ by Corollary~\ref{cor:spineLayer}.%
\footnote{Note that flipping the drawing reverses all edge-orders, so we might be proving
a lower bound for $C_{w+1}^{rev}$, the tree $C_{w+1}$ with all orders of children reversed.
But $C_{w+1}^{rev}$ is isomorphic to $C_{w+1}$, so their skirted graphs are isomorphic and this is not a problem.}
Among the 5 spine-vertices on $\ell$, we can (by the Erd\H{o}s-Szekeres theorem~\cite{Erdos1935})
find a subsequence of $\lceil \sqrt{5} \rceil = 3$ spine-vertices $s_{j_1},s_{j_2},s_{j_1}$
such that $j_1<j_2<j_3$ and either
$s_{j_1}\prec s_{j_2} \prec s_{j_3}$ or $s_{j_3}\prec s_{j_2}\prec s_{j_1}$.
After a possible horizontal flip of $\Gamma$ we have $s_{j_1}\prec s_{j_2}\prec s_{j_3}$
and therefore (C1) holds.

(C2) holds (assuming $L\geq 81$) due to Claim~\ref{cl:manyGood} and a symmetric lemma, proved
exactly the same way, for before-spine children.

To argue (C3), let $g_1,\dots,g_5$ be the 5 after-spine
children of $s_{j_2}$ that are good and on layer $\ell+1$, enumerated in left-to-right order
along the layer.
Let $g'$ be a before-spine child of $s_{j_2}$ that is on layer $\ell+1$, and notice
that the cyclic order of neighbours of $s_{j_2}$ contains
$\langle g',s_{j_2+1},g_1,\dots,g_5,s_{j_2-1}\rangle=:\rho$
as subsequence.  Since the edges from $s_{j_2}$ to $g',g_1,\dots,g_5$ are
straight-line by the pre-processing, the $x$-coordinate order of $g',g_1,\dots,g_5$ along layer $\ell+1$ must fit
the (cyclic) order $\rho$.
Depending on whether $g_3$ is right or left of $g'$, therefore
either $g'\prec g_1\prec g_2 \prec g_3$ or $g_3 \prec g_4 \prec g_5\prec g'$,
See Figure~\ref{fig:cyclicShiftsA}.

If $g'\prec g_1 \prec g_2\prec g_3$ then spine-edge $(s_{j_2},s_{j_2+1})$ leaves $s_{j_2}$
between the two segments $(s_{j_2},g')$ and $(s_{j_2},g_1)$; this forces
the spine-edge to go to layer $\ell+1$ as well, and by the pre-processing
it either ends there or it receives a bend $b$ there with $b\prec g_1$.  So (C3) holds
for $\{g^{(1)},g^{(2)},g^{(3)}\}:=\{g_1,g_2,g_3\}$.
Similarly if $g_3\prec g_4\prec g_5\prec g'$ then
spine-edge $(s_{j_2},s_{j_2-1})$ ends or receives a bend $b$ on
layer $\ell+1$ with $g_5\prec b$, and (C3) holds
for $\{g^{(1)},g^{(2)},g^{(3)}\}:=\{g_3,g_4,g_5\}$.

Finally (C4) holds since the chosen vertices $\{g^{(1)},g^{(2)},g^{(3)}\}$ were
good and on layer $\ell+1$ and so drawings $\Gamma(g^{(1)}),\Gamma(g^{(2)},\Gamma(g^{(3)})$
cannot use layer $\ell$ or above.

\subsection{Arguing (C5)}
So we have now found subtrees such that (C1-C4) hold.  This always implies (C5), but
the argument for this is lengthy.   We also need to prove the missing piece for
Claim~\ref{cl:manyGood}.  Both will be done with the same argument as follows.

\begin{claim}
\label{cl:oneTallChild}
Let $s_i$ (for $i<S$) be a spine-vertex on layer $\ell$
that has three good after-spine children $g^{(1)},g^{(2)},g^{(3)}$
on layer $\ell+1$ and the order of children at $s_i$ contains $g^{(1)},g^{(2)},g^{(3)}$ as subsequence.
Then there exists a path $\hat{\pi}$ within $\Gamma(g^{(2)})$ that connects $g^{(2)}$
to layer $\ell+h(w)+1$, and all points in $\hat{\pi}\cap (\ell+1)$ lie between $g^{(1)}$ and $g^{(3)}$.
\end{claim}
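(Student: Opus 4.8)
The plan is to show that the copy of $H^-(C_w)$ nested inside the drawing $\Gamma(g^{(2)})$ is forced to lie one layer lower than it would otherwise need, and then to read off $\hat\pi$ from inside that copy.

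First I would fix the local geometry around $s_i$. By the preprocessing each connector edge $(s_i,g^{(k)})$ is a straight-line segment; since $s_i$ lies on layer $\ell$ and $g^{(k)}$ on layer $\ell+1$, this segment runs from layer $\ell$ to layer $\ell+1$, pointing downward from $s_i$. As $\Gamma$ respects the embedding, the left-to-right order of $g^{(1)},g^{(2)},g^{(3)}$ on layer $\ell+1$ agrees, up to reversal, with the order in which the children of $s_i$ list them; since ``between $g^{(1)}$ and $g^{(3)}$'' is symmetric we may assume $g^{(1)}\prec g^{(2)}\prec g^{(3)}$. Because $g^{(2)}$ is good, $\Gamma(g^{(2)})$ misses layer $\ell$; being connected and containing $g^{(2)}$ on layer $\ell+1$, it lies in layers $\ell+1,\dots,h$, so $g^{(2)}$ occupies its topmost layer. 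Now put $W_L=(s_i,g^{(1)})\cup\Gamma(g^{(1)})$ and $W_R=(s_i,g^{(3)})\cup\Gamma(g^{(3)})$. Each is connected, and since they draw vertex-disjoint subgraphs in a plane drawing they are disjoint from one another and from $(s_i,g^{(2)})\cup\Gamma(g^{(2)})$. As $g^{(1)},g^{(3)}$ are good, $\Gamma(g^{(1)})$ and $\Gamma(g^{(3)})$ lie in layers $\geq\ell+1$, and by induction (Lemma~\ref{lem:lower} applied to $C_w$) each uses at least $h(w)$ layers; hence $W_L$ and $W_R$ each descend from layer $\ell$ to at least layer $\ell+h(w)$, immediately left of $g^{(1)}$ and immediately right of $g^{(3)}$. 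Therefore $\Gamma(g^{(2)})$ cannot cross $W_L$ or $W_R$ within layers $\ell,\dots,\ell+h(w)$, so every point of $\Gamma(g^{(2)})$ on layer $\ell+1$ lies between $g^{(1)}$ and $g^{(3)}$; this is the confinement half of the conclusion.

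The crux is that $\Gamma(g^{(2)})$ reaches layer $\ell+h(w)+1$. Inside $F(g^{(2)})$ the vertex $g^{(2)}$ has a single child $p$, whose children are the root $r$ of a copy of $C_w$ and two leaves $u,v$; the edge $(p,u)$, the path joining the leaves of that $C_w$ in order, and the edge $(v,p)$ form a cycle $Z$. By the embedding at $p$ (parent $g^{(2)}$ on the side opposite its children) the whole subdrawing $\Gamma'$ of $H^-(C_w)$ lies in $\overline{\interior(Z)}$, while $g^{(2)}$ lies strictly outside $Z$. I claim $\Gamma'$ does not meet layer $\ell+1$: since $Z\subseteq\Gamma(g^{(2)})$ lies in layers $\geq\ell+1$, so does $\interior(Z)$, and if some interval of layer $\ell+1$ lay in $\interior(Z)$ it would be squeezed --- by the confinement above together with $g^{(2)}\notin Z$ --- strictly inside one of the two triangles cut off by $s_i$ and two consecutive segments $(s_i,g^{(k)})$; then the unbounded face $\text{ext}(Z)$ of $\Gamma(g^{(2)})$ would be trapped inside that triangle, contradicting that $g^{(2)}$ is the only point of $\Gamma(g^{(2)})$ reaching up from layer $\ell+1$ into it. (If $\Gamma'$ met layer $\ell+1$ only along $Z$ itself, the short path $g^{(2)}\,p$ together with an arc of $Z$ would bypass $\Gamma'$ and Lemma~\ref{lem:bypass} would already give the extra layer.) Hence $\Gamma'$ lies in layers $\ell+2,\dots,h$ and, using at least $h(w)$ layers, reaches layer $\ell+h(w)+1$.

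Finally I would take $\hat\pi$ to be the edge $(g^{(2)},p)$ followed by any path inside $\Gamma'$ from $p$ to a vertex of $\Gamma'$ on layer $\ell+h(w)+1$. It lies in $\Gamma(g^{(2)})$, joins $g^{(2)}$ to layer $\ell+h(w)+1$, and its only points on layer $\ell+1$ are $g^{(2)}$ and a subsegment of $(g^{(2)},p)$, all between $g^{(1)}$ and $g^{(3)}$; this is exactly what (C5) and Claim~\ref{cl:oneTallChild} ask. Read with $s_i,g^{(1)},g^{(3)}$ taken as in Claim~\ref{cl:manyGood}, the same argument supplies the step deferred there --- where the produced layer $\ell+h(w)+1$ simply exceeds the available height, which is the wanted contradiction. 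I expect the delicate point to be the Jordan-curve argument that $\interior(Z)$ cannot touch layer $\ell+1$: it must exploit together that $g^{(2)}$ is pinned to the topmost layer of $\Gamma(g^{(2)})$, that $g^{(2)}$ lies outside $Z$ with its continuation into $\Gamma'$ forced through $p$ into $\interior(Z)$, and that $g^{(2)}$ is hemmed in laterally by $W_L$ and $W_R$ and from above by the connector edge $(s_i,g^{(2)})$.
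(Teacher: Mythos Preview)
Your overall plan—push the inner copy of $H^-(C_w)$ one layer down and read $\hat\pi$ off it—matches the paper, and your walls $W_L,W_R$ are essentially the sides of what the paper calls the \emph{pocket}. The genuine gap is the Jordan-curve step via $Z$. Your cycle $Z$ passes through $p,\ell_L,\lambda_1,\dots,\lambda_k,\ell_R$ where $\lambda_1,\dots,\lambda_k$ are the leaves of the inner $C_w$, so the entire leaf-path of $\Gamma'=H^-(C_w)$ lies \emph{on} $Z$, not in $\interior(Z)$. That $\interior(Z)$ avoids layer $\ell{+}1$ is in fact automatic (a Jordan curve contained in a closed half-plane has its interior in the open half-plane, so your ``squeezed into triangles'' argument is both unnecessary and not saying what you want). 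But this only forces the non-leaf part of $C_w$ into layers $\ge\ell{+}2$; nothing prevents the leaves $\lambda_j$ and the cycle-edges among them from sitting on layer $\ell{+}1$. Hence ``$\Gamma'$ does not meet layer $\ell{+}1$'' does not follow, and $\Gamma'$ need not reach layer $\ell{+}h(w){+}1$. Your parenthetical fallback does not help: any arc of $Z$ that could serve as a bypass passes through some $\lambda_j\in\Gamma'$, so it is not disjoint from the poly-line you wish to bypass, and Lemma~\ref{lem:bypass} is inapplicable.

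The paper repairs exactly this point by choosing a separating cycle $Q$ that is vertex-disjoint from $H^-(C)$: it leaves $p$ via $\ell_R$ along cycle-edges \emph{away} from $C_w$ into $F(g^{(3)})$, up through $g^{(3)}$ to $s_i$, across to $g^{(1)}$, down into $F(g^{(1)})$, and back via $\ell_L$ to $p$. Since $Q$ passes through $s_i$ on layer $\ell$, the open interval $\calI$ on layer $\ell{+}1$ between $g^{(1)}$ and $g^{(3)}$ lies on the same side of $Q$ as $g^{(2)}$, while $H^-(C)$ is strictly on the other side; hence $H^-(C)$ never touches $\calI$. The paper then makes your confinement precise via a pocket $P$ bounded by fixed paths $B_1\subset\Gamma(g^{(1)})$ and $B_3\subset\Gamma(g^{(3)})$ down to layer $\ell{+}h(w)$, the top $\langle g^{(1)},s_i,g^{(3)}\rangle$, and a lid $\sigma$ on layer $\ell{+}h(w)$: if $\Gamma(g^{(2)})$ stayed in $P$, all its layer-$(\ell{+}1)$ points would lie in $\calI$, so $H^-(C)$ would miss layer $\ell{+}1$ entirely and fit in $h(w){-}1$ layers, a contradiction. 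Therefore some poly-line of $\Gamma(g^{(2)})$ crosses the lid, and $\hat\pi$ is that poly-line traced from $g^{(2)}$ to the first bend below $\sigma$. This pocket formulation also removes the imprecision in your confinement step—$\Gamma(g^{(2)})$ could in principle dip below layer $\ell{+}h(w)$, wrap around $W_L$ or $W_R$, and revisit layer $\ell{+}1$ outside $(g^{(1)},g^{(3)})$—because $\hat\pi$ is cut off at the first lid-crossing, while still inside $P$.
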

\begin{proof}
Recall that tree $F_w$ is built by extending
tree $C_w$; let $C$ be the copy of $C_w$ that is inside $F(g^{(2)})$.  Also let
$\calI$ be the open interval of points on layer $\ell+1$ between $g^{(1)}$ and $g^{(3)}$,
so path $\pi$ should intersects layer $\ell+1$ only in $\calI$.  We need an observation.

\begin{observation}
$H^-(C)$ uses no points in $\calI$.
\end{observation}
\begin{proof}
Define a cycle $Q$ in $H^-(C_{w+1})$ as follows.  Start at the unique child $p$
of $g^{(2)}$, go to its last child $\ell_R$ (which is a leaf)
and from there along the cycle-path to the first leaf of $F(g^{(3)})$.  Go
upwards in tree $F(g^{(3)})$ to $g^{(3)}$ and from there to $s_i$.  Continue
symmetrically through $F^{(1)}$, i.e., go from $s_i$ to $g^{(1)}$ to
the last leaf of $F(g^{(1)})$, then along the cycle-path to the first
child $\ell_L$ of $p$ and then to $p$.
See Figure~\ref{fig:pocketQ}.  This cycle separates $g^{(2)}$ from $H^-(C)$
in the planar embedding since $g^{(2)}$ is between $g^{(1)}$ and $g^{(3)}$ in
the order of children of $s_i$.

Now study the corresponding poly-line $Q$ in $\Gamma$.
Since $\langle g^{(1)},
s_i,g^{(3)}\rangle$ is drawn with straight-line segments between layers $\ell+1$
and $\ell$, and since $g^{(2)}\in \calI$ and $\Gamma$ is plane, all of $\calI$ is on or inside $Q$.  On the other hand $H^-(C)$ is strictly outside $Q$ and the claim holds.
\end{proof}

\begin{figure}[ht]
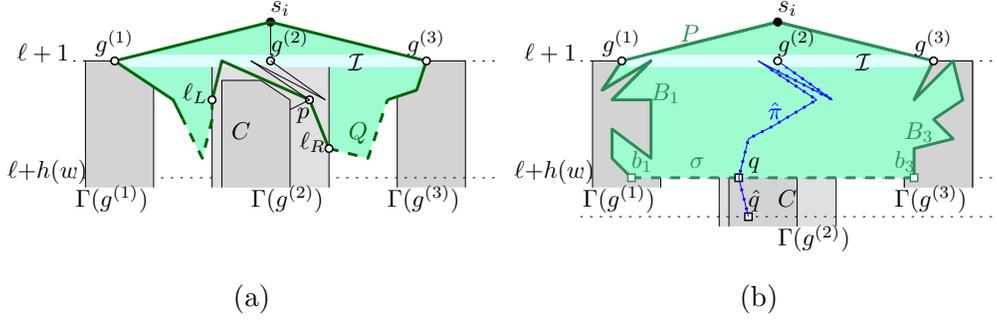

\hspace*{\fill}
\begin{subfigure}[b]{0.4\linewidth}
\includegraphics[width=0.99\linewidth,page=4]{lowerBound2.pdf}
\caption{}
\label{fig:pocketQ}
\end{subfigure}
\begin{subfigure}[b]{0.4\linewidth}
\includegraphics[width=0.99\linewidth,page=5]{lowerBound2.pdf}
\caption{}
\label{fig:pocketP}
\end{subfigure}
\hspace*{\fill}
\caption{For the proof of Claim~\ref{cl:oneTallChild}.
(a) Poly-line $Q$ separates $\calI$ from $C$.
(b) The pocket $P$.}
\end{figure}

Let the {\em pocket} $P$ be defined as follows, see also Figure~\ref{fig:pocketP}.
For $k=1,3$, let $B_k$
be a poly-line within $\Gamma(g^{(k)})$ that connects $g^{(k)}$ to a point $b_k$ on layer $\ell+h(w)$;
this exists since $\Gamma(g^{(k)})$ spans at least $h(w)$ layers and contains no point in layer $\ell$.
We choose $b_k$ such that $B_k$ is minimal, i.e., contains no other point on layer $\ell+h(w)$; in
particular all its points are hence in layers $\ell+1,\dots,\ell+h(w)$.
Let the {\em lid} $\sigma$ be the line-segment $\overline{b_1b_3}$; note that $\sigma$ is not necessarily a
segment of $\Gamma$.  Now define pocket $P$ to be the set bounded by
$B_1\cup \langle g^{(1)},s_i,g^{(3)}\rangle \cup B_3 \cup \sigma$,
where the lid $\sigma$ is included in $P$ while all other points on the boundary are excluded.
Note that any point in $(\ell+1)\cap P$ is in $\calI$, because $B_1$ and $B_3$
contain no points on layer $\ell$ or above by (C4).

Assume for contradiction that all of $\Gamma(g^{(2)})$ (and in particular therefore
$H^-(C)$) resides within pocket $P$.  Then $H^-(C)$ uses no points
on layer $\ell+1$, because it does not use points in $\calI$.
Therefore $H^-(C)$ fits within $h(w)-1$ layers, a contradiction.
So $\Gamma(g^{(2)})$ must use points outside the pocket.   These cannot
be on $B_1\cup B_3$ or $\langle g^{(1)},s_i,g^{(3)} \rangle$  since these
paths do not belong to $F(g^{(2)})$.   So to get to a point outside $P$,
some polyline of $\Gamma(g^{(2)})$ must contain
a point $q$ on $\sigma\subset P$ from which it goes downward.
Let $\hat{q}$ be the next bend of this polyline, which is
on layer $\ell+h(w)+1$ by the preprocessing.  Let $\hat{\pi}$ be the
poly-line from $g^{(2)}$ (on layer $\ell+1$) to point $\hat{q}$ (on layer
$\ell+h(w)+1$) that is within $\Gamma(g^{(2)})$.  With the
exception of the segment from $q$ to $\hat{q}$, poly-line $\hat{\pi}$
was inside pocket $P$; in particular it can use no points on layer $\ell+1$
except the ones that are on $\calI$.  This proves the claim.
\end{proof}

So we have proved Claim~\ref{cl:oneTallChild}, which finishes the proof
of Claim~\ref{cl:manyGood}.  Hence (C2) holds. From this we derived
(C3) and (C4), hence the precondition for Claim~\ref{cl:oneTallChild}
holds for the three children $g^{(1)},g^{(2)},g^{(3)}$ of $s_{j_2}$ that we chose.
Claim~\ref{cl:oneTallChild}
hence implies (C5) and the proof of Lemma~\ref{lem:lower} is complete.

\subsection{Proving the lower bounds}

We now finally prove the lower bounds.  To do so, we first bound the (rooted)
pathwidth of $F_w$ and trees derived from it.

\begin{observation}
We have $\rpw(F_w)\leq w+1$ and $\pw(F_w'')\leq w-1$, where $F_w''$ is the leaf-reduced inner skeleton of $H(F_w)$.
\end{observation}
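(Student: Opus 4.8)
The plan is to prove both bounds by induction on $w$, in tandem, showing simultaneously that $\rpw(C_w) \le w+1$, $\rpw(F_w) \le w+1$, and that the corresponding leaf-reduced inner skeletons have pathwidth at most $w-1$. These three statements reinforce one another across the inductive step, since $C_{w+1}$ is built from many copies of $F_w$, and $F_w$ is built from $C_w$, so the rooted-pathwidth bound for one feeds the pathwidth bound for the next via the recursive formula $\rpw(T) = \min_{P_r}\max_{T'}\{1+\rpw(T')\}$ (and its pathwidth analogue of Suderman).

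First I would handle the base case $w=1$: $C_1$ is a path $\langle r,c\rangle$ with two leaves hanging off each of $r$ and $c$, so choosing the spine $P_r = \langle r,c\rangle$ leaves only single-vertex components, giving $\rpw(C_1) \le 2 = w+1$; its inner skeleton after deleting all leaves is just $\langle r,c\rangle$, and the leaf-reduced version is a single vertex, which has pathwidth $0 = w-1$. For the inductive step on $F_w$ (given the claim for $C_w$): take the spine of $F_w$ to be $\langle g,p,r\rangle$ concatenated with a spine $P_C$ of $C_w$ achieving $\rpw(C_w)\le w+1$. Each component of $F_w \setminus P$ is either a single leaf or a component of $\calT(C_w, P_C)$; by the inductive hypothesis (using the internal structure of the spine decomposition of $C_w$, i.e., that every component there has rooted pathwidth at most $w$) these have rooted pathwidth at most $w$, so $\rpw(F_w) \le w+1$. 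For the inner-skeleton bound: the leaf-reduced inner skeleton $F_w''$ is obtained from the inner skeleton of $H(C_w)$ by adding the vertices $p,g$ and possibly a short path, and one checks that choosing the appropriate path in $F_w''$ (essentially $\langle g,p,r\rangle$ together with a pathwidth-witness path for the reduced inner skeleton of $H(C_w)$) leaves components of pathwidth at most $w-2$, giving $\pw(F_w'') \le w-1$.

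Then for $C_{w+1}$: take $P$ to be the spine $s_1,\dots,s_S$. Every component of $C_{w+1} \setminus P$ is an isolated copy of $F_w$, so $\rpw(C_{w+1}) \le 1 + \rpw(F_w) \le 1 + (w+1) = w+2 = (w+1)+1$, as required. The leaf-reduced inner skeleton of $H(C_{w+1})$ is the spine together with the leaf-reduced inner skeletons of the attached $F_w$'s (which are attached at their roots), and removing the spine from it leaves copies of $F_w''$, each of pathwidth at most $w-1$, so $\pw(C_{w+1}'') \le 1 + (w-1) = w = (w+1)-1$. The main obstacle, and the only part requiring genuine care, is tracking exactly which subtree is the leaf-reduced inner skeleton after the leaf-deletion and leaf-reduction operations interact with the "dashed edges added only to avoid degree-2 vertices" in the construction of $F_w$ — one must verify that deleting all leaves of $H(C_{w+1})$ and then repeatedly stripping leaves until no further leaf-extension is possible really does yield precisely the spine plus copies of $F_w''$, with no extra dangling paths that would inflate the pathwidth. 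I expect this bookkeeping, rather than any conceptual difficulty, to be where the proof spends its effort.
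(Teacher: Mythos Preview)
Your approach is correct and essentially the same as the paper's. The only difference is granularity: the paper inducts directly on $F_w$, observing that $F_{w+1}$ is a single path $\langle g,p,s_1,\dots,s_S\rangle$ with copies of $F_w$ (and some leaves) attached, so one application of the spine formula gives $\rpw(F_{w+1})\le \rpw(F_w)+1$ and $\pw(F_{w+1}'')\le \pw(F_w'')+1$; you instead split this into two stages via $C_w$, which is equivalent but unnecessary. The bookkeeping you flag about the leaf-reduced inner skeleton is dispatched in the paper in one line by noting that $F_{w+1}''$ is (up to the removable tail at $g,p$ and $s_S$) that same path with copies of $F_w''$ attached, so the pathwidth bound follows immediately.
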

\begin{proof}
We proceed by induction on $w$.  Tree $F_1$ consists of a path $\langle g,p,r,c\rangle$
with leaves attached; this has rooted pathwidth 2.
Also $F_1''$ consists only of $g$, since it is obtained from $F_1$ by first deleting
all leaves (this gives a path), and then repeatedly doing leaf-reductions
(this removes all but $g$).  So $\pw(F_1'')=0$.

Now consider $F_{w+1}$ for $w\geq 1$.  This consists of a path $\langle g,p,s_1,\dots,s_S\rangle$
with copies of $F_{w}$ attached.  Using this path as spine, we immediately get
$\rpw(F_{w+1})\leq \rpw(F_w)+1 \leq w+2$.  Also, $F_{w+1}''$ consists of the same path
with copies of $F_w''$ attached; therefore $\pw(F_{w+1}'')\leq \pw(F_w'')+1\leq w$.
\end{proof}

Thus far all constructions and lower bounds have been for {\em plane} drawings (respecting the embedding and have the cycle-edges at the infinite region). But we can easily prove lower bounds even for planar drawings
which have no requirement except to be crossing-free.

\begin{theorem}
\label{thm:lower_regular_pw}
There exists a regular Halin-graph $H(T)$ such that
any planar poly-line drawing of $H(T)$ requires at least $6\pw(T'')+3$ layers,
where $T''$ is the reduced tree of the inner skeleton of $H(T)$.
\end{theorem}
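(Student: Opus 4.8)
The plan is to exhibit one regular Halin-graph that carries two vertex-disjoint ``parallel'' copies of the hard skirted graph $H^-(C_w)$, so that the only freedom in a planar drawing, namely the choice of outer face, can spoil at most one of them. Fix $w\geq 1$ and let $T$ be the tree obtained as follows: take a path $\langle a_1,a_2,a_3\rangle$, attach one new leaf at each of $a_1,a_2,a_3$, and attach two copies of $F_w$ by adding an edge from the root of the first copy to $a_1$ and from the root of the second copy to $a_3$. Every vertex of $T$ has degree at least $3$ (the $a_i$ have degree $3$, the two $F_w$-roots have degree $4$, and the remaining vertices are internal vertices of $F_w$), so $H(T)$ is a regular Halin-graph; being $3$-connected, as every regular Halin-graph is, it has a unique planar embedding up to reflection. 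Since $C_w$ is a subtree of $F_w$, its leaves form a contiguous block inside the leaves of $F_w$, hence inside the leaves of $T$, and its tree-edges lie in $T$; therefore each of the two copies of $H^-(C_w)$ is a subgraph of $H(T)$ whose leaf-path is a sub-path of the cycle of $H(T)$.

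First I would bound $\pw(T'')$, where $T''$ is the reduced inner skeleton of $H(T)$. By the observation preceding this theorem, $\pw(F_w'')\leq w-1$, with the $F_w$-spine achieving the minimum, so every component left after deleting that spine from $F_w''$ has pathwidth at most $w-2$. After deleting the leaves of $T$ and trimming pendant paths, $T''$ is a caterpillar-like tree: one long path running through $a_1,a_2,a_3$ and through the spines of both $F_w$-copies, with the $F_{w-1}''$-style subtrees of the two copies hanging off it. Taking this long path as the main path in Suderman's recursion for $\pw$ gives $\pw(T'')\leq 1+(w-2)=w-1$. (Together with the height bound below and Corollary~\ref{cor:6pw} this forces equality, but only the upper bound is needed.)

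Next I would handle the outer face. Let $\Gamma$ be any planar poly-line drawing of $H(T)$; since $H(T)$ is $3$-connected, $\Gamma$ realizes the unique embedding with some face $f_0$ unbounded. For $k\in\{1,3\}$ let $\calC^{(k)}$ be the cycle of $H(T)$ formed by the leaf-path of the $k$-th copy of $C_w$ together with the tree-path joining its two ends up to the root of that copy; this is exactly the outer boundary in the natural drawing of $H^-(C_w)$, and in $H(T)$ it is a separating cycle one side of which contains only vertices of the $k$-th $F_w$-copy. Because that side is ``sealed off'' (no vertex strictly inside $\calC^{(k)}$ has an edge leaving the copy), the two regions $R^{(1)},R^{(3)}$ bounded by $\calC^{(1)},\calC^{(3)}$ and containing the respective copies' interiors are disjoint --- otherwise one would contain the other's cycle and hence vertices of the wrong copy. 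Thus $f_0$ lies in at most one of $R^{(1)},R^{(3)}$; choose $k$ with $f_0\notin R^{(k)}$. Then $R^{(k)}$ is the bounded side of $\calC^{(k)}$, so the restriction of $\Gamma$ to the $k$-th copy of $H^-(C_w)$ is a planar poly-line drawing whose unbounded face is bounded by $\calC^{(k)}$ and whose embedding is the natural one inherited from $H(T)$; that is, it is a \emph{plane} poly-line drawing of $H^-(C_w)$ in the sense of Lemma~\ref{lem:lower}.

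Finally, Lemma~\ref{lem:lower} applied to this restricted drawing shows that it --- and hence $\Gamma$ --- uses at least $h(w)=6w-3$ layers, and since $\pw(T'')\leq w-1$ this is at least $6\pw(T'')+3$, proving the theorem. The one delicate point is the third step: one must argue carefully that restricting a planar drawing of the $3$-connected graph $H(T)$ to a copy of $H^-(C_w)$ genuinely yields a drawing that is \emph{plane} for $H^-(C_w)$ --- i.e.\ that the inherited embedding is the natural one and that $\calC^{(k)}$ ends up on the outer boundary --- and it is exactly to secure this that we use two far-apart copies of the gadget together with the sealed-off structure of $\calC^{(k)}$. The remaining ingredients (regularity of $T$ and the pathwidth estimate) are routine.
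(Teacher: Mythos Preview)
Your proof is correct and follows essentially the same approach as the paper's: take two copies of $F_w$, glue them into one tree so that the outer face of an arbitrary planar drawing can spoil at most one copy, then invoke Lemma~\ref{lem:lower} on the other copy and combine with the bound $\pw(T'')\leq w-1$ from the observation on $F_w''$. The paper's construction is slightly simpler---it joins the two $F_w$-roots by a single edge rather than via the path $\langle a_1,a_2,a_3\rangle$ with extra leaves---but your added scaffolding does no harm, and your treatment of the outer-face issue (via the separating cycles $\mathcal{C}^{(k)}$) is actually more explicit than the paper's one-line justification. One small wording slip: ``every vertex of $T$ has degree at least $3$'' should be ``every non-leaf vertex'', since the leaves of $T$ have degree~$1$; the condition you need for regularity is merely the absence of degree-$2$ vertices.
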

\begin{proof}
For any $w\geq 2$, consider the tree $T$ obtained by taking two copies
of $F_w$ and combining them by adding an edge between the two copies of
the root $g$.
Fix an arbitrary planar poly-line drawing $\Gamma$ of $H(T)$.
Since $H(T)$ is 3-connected
\cite{Halin71} the clockwise order of edges must be the same in $H(T)$
and in $\Gamma$.  But the infinite region of $\Gamma$ could be incident to
some face different from the one bounded by the cycle-edges.  Tree $T$ contains two copies of $F_w$, and the infinite region of $\Gamma$ can be a
face of $H^-(F_w)$ for at most one of them.    Therefore $\Gamma$ contains
a plane drawing of $H^-(F_w)$, hence also one of $H^-(C_w)$.
By Lemma~\ref{lem:lower} this requires at least $h(w)=6w-3$ layers.
The reduced inner skeleton of $H(T)$ consists of two copies of $F_w''$,
each of which had pathwidth at most $w-1$, and this bound is obtained
with a main path that ends at $g$.  Therefore we can use the two combined
paths as main path for $T''$ and so have $\pw(T'')\leq w-1$
and the bound holds.
\end{proof}

We note that this lower bound implies a lower bound of $\Omega(\log n)$ on the height,
since $C_w$ contains $c^w$ vertices for some (rather large) constant $c$.  However,
this bound is not new since already using the Halin-graph of a complete ternary tree
could give a lower bound of $\Omega(\log n)$ on the height.
The main contribution of our lower bound is that it matches the
upper bound relative to ``$\pw(T'')$'' in Theorem~\ref{thm:3x-transform}.
(This was also the reason why we used the leaf-reduced inner skeleton,
rather than the skeleton, in Theorem~\ref{thm:3x-transform}.)

We also promised a lower bound in terms of the rooted pathwidth.  Note that the skeleton
of a Halin-graph is an unrooted tree $T$; to be able to talk about $\rpw(T)$
we define this to be the minimum over all choices of the root.

\begin{theorem}
\label{thm:lower_regular_rpw}
There exists a regular Halin-graph $H(T)$ such that
any planar poly-line drawing of $H(T)$ requires at least $6\rpw(T)-9$ layers.
\end{theorem}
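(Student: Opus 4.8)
The plan is to mimic the structure of the proof of Theorem~\ref{thm:lower_regular_pw}, but this time controlling the \emph{rooted} pathwidth of the skeleton rather than the pathwidth of the reduced inner skeleton. First I would take, for $w\geq 1$, the tree $T$ obtained by gluing two copies of $F_w$ at their roots $g$ via a single edge (so $T$ has no degree-$2$ vertex, making $H(T)$ a regular Halin-graph provided the leaves added in Definition~\ref{def:lower} keep every internal vertex of degree $\geq 3$; one verifies $C_1$ and each $F_w$ already satisfy this). The key geometric step is exactly the one from Theorem~\ref{thm:lower_regular_pw}: since $H(T)$ is $3$-connected \cite{Halin71}, any planar drawing $\Gamma$ respects the rotation system up to reflection, and the outer face can coincide with an internal face of at most one of the two $H^-(F_w)$-copies, so $\Gamma$ contains a \emph{plane} drawing of some $H^-(F_w)$, hence of $H^-(C_w)$; by Lemma~\ref{lem:lower} this needs at least $h(w)=6w-3$ layers.

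The new part is the pathwidth bookkeeping. I would show $\rpw(T)\leq w+1$ for the glued tree $T$: root $T$ at one of its leaves (say a leaf hanging off the $g$ of the first copy), and take as spine the path that runs from that leaf, through the first $g$, across the gluing edge to the second $g$, and then down the ``main path'' inside the second copy of $F_w$ toward its $C_w$-part. Each component of $\calT(T,P_r)$ is then either a single vertex, a component of $\calT(F_w,\cdot)$ in the second copy, or one of the pieces of the first copy of $F_w$ hanging off $g$; by the earlier observation that $\rpw(F_w)\leq w+1$ (and that this is witnessed by a main path ending at $g$), each such component has rooted pathwidth at most $w$. Hence $\rpw(T)\leq 1+w$, i.e. $w\geq \rpw(T)-1$. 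Plugging this in, $\Gamma$ needs at least $h(w)=6w-3\geq 6(\rpw(T)-1)-3=6\rpw(T)-9$ layers, as claimed. (One should double-check the small cases, e.g. $w=1$, where $h(1)=3$ and $\rpw(T)=2$, so the bound reads $\geq 3$, consistent with the $5$-cycle argument.)

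The main obstacle I expect is not the geometry — that is essentially quoted from Lemma~\ref{lem:lower} and the $3$-connectivity argument — but the careful verification that the chosen spine for the glued tree $T$ really attains $\rpw(T)\leq w+1$, and in particular that every off-spine component inherits the bound $\rpw(\cdot)\leq w$ from the earlier observation on $F_w$. This requires knowing not merely that $\rpw(F_w)\leq w+1$ but that it is witnessed by a main path \emph{ending at the root $g$}, so that the two copies' main paths and the gluing edge concatenate into a single rooted path of $T$ whose removal leaves only low-rooted-pathwidth pieces; the commented-out Observation~\ref{obs:pathwidth} in the excerpt records exactly this strengthening, and I would reinstate (the relevant half of) it. A secondary, purely bureaucratic point is confirming that no degree-$2$ vertices are introduced by the gluing, so that the resulting Halin-graph is genuinely \emph{regular}.
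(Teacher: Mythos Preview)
Your plan is the paper's proof: the same two-copies-of-$F_w$ construction, the same $3$-connectivity argument to locate a plane copy of $H^-(C_w)$ inside any planar drawing, and the same appeal to Lemma~\ref{lem:lower} for the $h(w)=6w-3$ bound.

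The one genuine gap is in the rooted-pathwidth bookkeeping. With your root at a leaf of the first $g$ and spine $\langle\text{leaf},\,g,\,g',\,\text{main path of second }F_w\rangle$, one component of $\calT(T,P_r)$ is the entire subtree of the \emph{first} copy rooted at $p$, and this has rooted pathwidth $w+1$, not at most $w$ (already for $w=1$ it is the caterpillar on $\langle p,r,c\rangle$, which has $\rpw=2$). Your justification---that a main path of $F_w$ ending at $g$ witnesses $\rpw(F_w)\leq w+1$, hence pieces hanging off $g$ have $\rpw\leq w$---conflates removing the whole spine with removing only its first vertex $g$. So your spine certifies only $\rpw(T)\leq w+2$, which yields $6w-3\geq 6\rpw(T)-15$ rather than $-9$. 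The fix, which your final paragraph actually sketches but which is inconsistent with your explicit root choice, is to root $T$ at $s_S$ of one copy and take as spine the concatenation of \emph{both} main paths, $\langle s_S,\dots,p,g,g',p',\dots,s_S'\rangle$; then every off-spine component is a single leaf or a copy of $F_{w-1}$, giving $\rpw(T)\leq 1+\rpw(F_{w-1})\leq w+1$ as required. This is exactly what the paper does.
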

\begin{proof}
For any $w\geq 2$, again let $T$ be two copies of $F_w$, combined by
adding an edge between the two roots.  We know $\rpw(F_w)\leq w+1$, and
the same holds for $T$ if we root it suitably.  Namely, the spine of
$F_w$ is $g$-$p$-$s_1$-$\dots$-$s_S$; if we root $T$ at one copy of $s_S$
then we can use as its spine the two combined spines of the two copies of
$F_w$ and have the same rooted pathwidth.
$H(T)$ is a regular Halin-graph and since (as above) any planar drawing of it
includes a plane drawing of $H^-(C_w)$, by Lemma~\ref{lem:lower}
it requires at least $h(w)=6w-3 \geq 6\rpw(T)-9$ layers.
\end{proof}

Because these lower bounds hold for regular Halin-graphs, they also hold
for extended Halin-graphs, but we can improve
the lower bound of Theorem~\ref{thm:lower_regular_rpw} ever so slightly
for extended Halin-graphs (hence make it tight).

\begin{theorem}
\label{thm:lower_extended_rpw}
There exists an extended Halin-graph $H(T)$ such that
any planar poly-line drawing of $H(T)$ requires at least $6\rpw(T)-7$ layers.
\end{theorem}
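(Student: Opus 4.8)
The plan is to mimic the proof of Theorem~\ref{thm:lower_regular_rpw} but to replace the "top" copy of $F_w$ by a slightly taller gadget that forces one extra layer, while keeping the rooted pathwidth under control. Concretely, I would take one copy of $F_w$ and one copy of $\hat C_2$ (the tree drawn in Figure~\ref{fig:lowerBoundC2ext}), or more generally a small modification $\hat F_w$ of $F_w$ in which a single degree-$2$ vertex has been inserted on the spine (this is what makes the graph merely \emph{extended} rather than regular, and is exactly the role of the dashed edges in Figure~\ref{fig:lowerBoundFw}). I would join the two pieces by an edge between their roots, exactly as in Theorem~\ref{thm:lower_regular_rpw}, so that the resulting $T$ can be rooted at a leaf with $\rpw(T)\le w+1$ via the combined spine, and hence $6\rpw(T)-7\ge 6w-7+\ldots$; the arithmetic is arranged so that the target bound is $h(w)+1 = 6w-2$.

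First I would recall from the proof of Theorem~\ref{thm:lower_regular_rpw} that any planar poly-line drawing $\Gamma$ of $H(T)$ contains, for at least one of the two copies, a genuinely \emph{plane} drawing of its skirted graph (the infinite region can coincide with an inner face of at most one copy). If that copy is the plain $F_w$-copy, Lemma~\ref{lem:lower} already gives $h(w)$ layers, and I need one more. If it is the $\hat F_w$-copy, I want a strengthened version of Lemma~\ref{lem:lower} asserting that any plane drawing of $H^-(\hat C_w)$ needs $h(w)+1$ layers — this is the analogue, one level up, of the observation that the base gadget $\hat C_2$ needs $4$ rather than $3$ layers because the extra subdivision vertex cannot share the spine layer in both directions simultaneously. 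The cleanest route is to redo the induction of Lemma~\ref{lem:lower} with $\hat F_w$ in place of $F_w$ throughout: the ideal-case argument produces $\hat\pi$ spanning $h(w)+1$ layers plus five bypassing paths, and the only change is in the base case, where $\hat C_1$ (a subdivided $C_1$) forces $4=h(1)+1$ layers.

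The step I expect to be the main obstacle is making the "one extra layer" survive the passage from $h^-$ back to the full $H(T)$ \emph{simultaneously with} the tight rooted-pathwidth bound. The subdivision vertex that buys the extra layer must sit on the spine (so that the combined-spine rooting still gives $\rpw(T)\le w+1$), yet it must be placed so that the $K_{3,3}$-type argument in Claim~\ref{cl:oneTallChild} still goes through at the relevant level; I would need to check that inserting it does not create new "good" configurations that let a drawing escape the pocket. I expect this is handled exactly as the ordinary degree-$2$ vertices are handled in Lemma~\ref{lem:main2} and in the preprocessing of Section~\ref{sec:lower} (delete the decorative leaves, move the subdivision vertex onto the straight segment, observe it lands one layer off the spine), so the obstacle is bookkeeping rather than a genuinely new idea. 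Once the strengthened lemma is in hand, the theorem follows: $\Gamma$ uses at least $h(w)+1 = 6w-2 \ge 6\rpw(T)-7$ layers, and choosing $w$ large (with $S,L$ large as in Definition~\ref{def:lower}) exhibits the required extended Halin-graph.
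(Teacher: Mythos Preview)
Your proposal has two genuine gaps.

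\textbf{The arithmetic does not close.} Your tree $T$ (two copies of $F_w$, possibly with a subdivided edge, joined at the roots) has $\rpw(T)=w+1$ exactly, not merely $\le w+1$: removing any rooted path still leaves an entire copy of $F_{w-1}$ behind, and inductively $\rpw(F_{w-1})=w$. So the theorem demands at least $6(w{+}1)-7=6w-1$ layers, while you target only $h(w)+1=6w-2$. You are one layer short, so even a successful ``one extra layer from the degree-$2$ vertex'' argument would not prove the statement. The gap between Theorem~\ref{thm:lower_regular_rpw} and Theorem~\ref{thm:lower_extended_rpw} is two layers, not one.

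\textbf{The base case is false.} A subdivided $C_1$ does \emph{not} require $4$ layers. In the standard $3$-layer drawing of $H^-(C_1)$ (the $5$-cycle on three rows with $c$ on the middle row), simply place the new subdivision vertex on the middle row next to $c$; all incidences are preserved and the drawing remains plane. More generally, inserting a single degree-$2$ vertex on an edge never forces an extra layer in a poly-line drawing: you can place it on an existing bend or on the interior of a segment. So the mechanism you rely on for the extra layer does not exist.

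The paper does something quite different. It does not subdivide $C_w$; instead it replaces the base case entirely. The tree $\hat C_2$ is built from the tree $T_1$ of~\cite{BB-JGAA}, which by a separate argument needs $3$ layers in any order-preserving drawing. One attaches a new leaf at every leaf of $T_1$ (creating many degree-$2$ vertices, which is what makes the graph extended), picks a root, and adds two further leaves $\ell_L,\ell_R$. Then $H^-(\hat C_2)$ consists of a cycle enclosing a plane drawing of $T_1$, forcing $5$ layers, while $\rpw(\hat C_2)=2$. This gives the tight ratio $5=6\cdot 2-7$ already at the base, and the induction step of Lemma~\ref{lem:lower} (unchanged) propagates it to $\hat h(w)=6w-7$ with $\rpw(\hat C_w)\le w$. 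The entire improvement from $-9$ to $-7$ is won in this handcrafted base gadget, not by subdivision.
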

\begin{proof}
We give the lower bound only for a plane poly-line drawing; it can be converted
to one for planar poly-line drawings by doubling the tree as above.

We construct a rooted tree $\hat{C}_w$ that differs from $C_w$ only in the base case.
See Figure~\ref{fig:lowerBoundC2ext}.
Start with the tree $T_1$ from \cite{BB-JGAA} that
requires 3 layers in any order-preserving plane drawing.
This tree consists of a path $\langle v_1,v_2,v_3\rangle$, with three leaves attached
at each of $v_1,v_3$, and six leaves attached at $v_2$, three on each side
of the path.  To obtain $\hat{C}_2$, attach a degree-1 vertex at
every degree-1 vertex of $T_1$, and let $r$ be the middle of the new degree-1
vertices near $v_1$.  Make $r$ the root, and add
two further leaves $\ell_L,\ell_R$ that are children of $r$ and
become leftmost and rightmost leaf of the resulting tree $\hat{C}_2$.
Note that $H^-(\hat{C}_2)$ consists of a cycle $\hat{\calC}$ (using
the cycle-edges and the path $\langle \ell_L,r,\ell_R\rangle$) that
surrounds $T_1$.  Any plane poly-line drawing of $H^-(\hat{C}_2)$ therefore requires
5 layers because $\hat{\calC}$ encloses the drawing of $T_1$ that uses 3 layers.
Also note that $\rpw(\hat{C}_2)=2$.

Now construct $\hat{F}_w$ from $\hat{C}_w$ and $\hat{C}_{w+1}$ from $\hat{F}_w$ exactly as
done in Definition~\ref{def:lower}.  Set $\hat{h}(2)=5$ and $\hat{h}(w)=\hat{h}(w-1)+6$ for $w\geq 3$.
Then $H^-(\hat{C}_w)$ requires $\hat{h}(w)$ layers in any plane poly-line drawing, because this holds
for $\hat{C}_2$, and is proved for $\hat{C}_w$ for $w\geq 3$ exactly as the induction
step of Lemma~\ref{lem:lower}.    Also as before $\rpw(\hat{C}_w)\leq \rpw(\hat{C}_{w-1})+1$
for $w\geq 3$, therefore $\rpw(\hat{C}_w)\leq w$.
So any plane drawing of $H(\hat{C}_w)$ (which includes $H^-(\hat{C}_w)$) must use
$\hat{h}(w)=6w-7 \geq 6\rpw(\hat{C}_w)-7$ layers.
\end{proof}

\section{Conclusion}
\label{sec:conclusion}

In this paper, we studied drawings of Halin-graphs whose height is
within a constant factor of the optimum (ignoring small additive terms).  We gave a 6-approximation
for the height of poly-line drawings of such graphs, and a 12-approximation
for the height of straight-line drawings.  We also showed that there
exists a Halin-graph for which our constructions give the minimum
possible height.  Many open problems remain:
\begin{itemize}

\item Can we find straight-line drawings of height $c\cdot \pw(T)+O(1)$,
	for $c<12$ and ideally $c=6$?

\item We have focused on the height and ignored the width.
	For straight-line drawings, the detour through $y$-monotone
	drawings means that the width may be exponential.  Are there
	straight-line drawings of height $O(\pw(T))$ for which the
	width is polynomial (and preferably linear)?
\end{itemize}

Last but not least, are there other planar graph classes that have approximation
	algorithms for height (or perhaps the area) of planar graphs drawings?

\bibliographystyle{plain}
\bibliography{full,gd,papers,refs}

\end{document}